\newtheoremstyle{plain}{15pt}{15pt}{\itshape}{}{\bfseries}{.}{.5em}{}
\newtheoremstyle{definition}{15pt}{20pt}{}{}{\bfseries}{.}{.5em}{}
\theoremstyle{plain}
\theoremstyle{definition}
\theoremstyle{remark}
\DeclarePairedDelimiter\set{\{}{\}}
\DeclarePairedDelimiterX\setvbar[2]{\{}{\}}{#1 \nonscript\;\delimsize \vert \nonscript\; #2}
\DeclarePairedDelimiterX\setcolon[2]{\{}{\}}{#1 : #2}
\DeclarePairedDelimiter{\Card}{\lvert}{\rvert} 
\newcommand{\ov}[1]{\overline{\vphantom{1}#1}}
\DeclareMathOperator{\ima}{im}
\DeclareMathOperator{\Id}{id}
\DeclareMathOperator{\id}{id}
\let\emptyset\varnothing
\renewcommand{\leq}{\leqslant}
\renewcommand{\geq}{\geqslant}
\newcommand{\N}{\ensuremath{\mathbb{N}}\xspace}
\let\isom\cong
\newcommand*{\defeq}{\mathrel{\vcenter{\baselineskip0.5ex \lineskiplimit0pt
                    \hbox{\scriptsize.}\hbox{\scriptsize.}}}%
                    =}
\let\lnottemp\lnot
\renewcommand{\lnot}{\lnottemp \hspace*{0.1em}}
\let\oldexists\exists
\let\exists\relax
\newcommand{\exists}{\hspace*{0em}\oldexists\hspace*{0.07em}}
\let\oldforall\forall
\let\forall\relax 
\newcommand{\forall}{\hspace*{0em}\oldforall\hspace*{0.07em}}
\newcommand{\cat}[1]{\mathsf{#1}}
\newcommand{\catset}{\cat{Set}}
\newcommand{\EM}[1]{{\normalfont\textbf{EM}(#1)}} 
\newcommand{\catalg}[1]{{\normalfont\textbf{Alg}(#1)}} 
\newcommand{\catmon}[1]{{\normalfont\textbf{Mon}(\cat{#1})}}
\newcommand{\inlsym}{{\mathsf{inl}}}
\newcommand{\inrsym}{{\mathsf{inr}}}
\newcommand{\inl}{\ensuremath\operatorname{\inlsym}\xspace}
\newcommand{\inr}{\ensuremath\operatorname{\inrsym}\xspace}
\newcommand{\singleset}{\textbf{1}}
\newcommand{\singl}{\ensuremath{\set{\ast}}}
\newcommand{\inv}{^{\raisebox{.2ex}{$\scriptscriptstyle-1$}}}
\newcommand\Item[1][]{%
  \ifx\relax#1\relax  \item \else \item[#1] \fi
  \abovedisplayskip=0pt\abovedisplayshortskip=0pt~\vspace*{-\baselineskip}}
\newcommand{\newa}{\mathsf{a}}
\newcommand{\newb}{\mathsf{b}}
\newcommand{\op}{\mathsf{op}}
\newcommand{\variables}{\mathsf{Var}}
\newcommand{\EMs}[1]{{\normalfont\textbf{EM}_s(#1)}} 
\newcommand{\terms}[2]{\mathcal{T}_{#1}(#2)}
\newcommand{\freealgebra}[3]{\nicefrac{\terms{#1}{#2}}{#3}}
\newcommand{\freemonad}[1]{T_{\scriptscriptstyle #1}}
\newcommand{\freemonadunit}{\eta_{\scriptscriptstyle \Sigma,E}}
\newcommand{\freemonadmult}{\mu_{\scriptscriptstyle \Sigma,E}}
\newcommand{\cp}{\mathsf{dp}}
\newcommand{\freeinter}[2]{\Brackets{#2}^{#1}}
\newcommand{\freeinterX}[1]{\freeinter{X}{#1}}
\newcommand{\freeinterY}[1]{\freeinter{Y}{#1}}
\newcommand{\freeinterTX}[1]{\freeinter{TX}{#1}}
\DeclareTextFontCommand{\myemph}{\bfseries\em}
\DeclareSymbolFont{largesymbolsstix}{LS2}{stixex}{m}{n}
\DeclareMathDelimiter{\lParen}{\mathopen}{largesymbolsstix}{"DE}{largesymbolsstix}{"02}
\DeclareMathDelimiter{\rParen}{\mathclose}{largesymbolsstix}{"DF}{largesymbolsstix}{"03}
\DeclareMathDelimiter{\lBrack}{\mathopen}{largesymbolsstix}{"E0}{largesymbolsstix}{"06}
\DeclareMathDelimiter{\rBrack}{\mathclose}{largesymbolsstix}{"E1}{largesymbolsstix}{"07}
\DeclareMathDelimiter{\lBrace}{\mathopen}{largesymbolsstix}{"E8}{largesymbolsstix}{"0E}
\DeclareMathDelimiter{\rBrace}{\mathclose}{largesymbolsstix}{"E9}{largesymbolsstix}{"0F}
\DeclareMathDelimiter{\lbrbrak}{\mathopen} {largesymbolsstix}{"EE}{largesymbolsstix}{"14}
\DeclareMathDelimiter{\rbrbrak}{\mathclose}{largesymbolsstix}{"EF}{largesymbolsstix}{"15}
\DeclareFontFamily{OMX}{MnSymbolE}{}
\DeclareFontShape{OMX}{MnSymbolE}{m}{n}{
    <-6>  MnSymbolE5
   <6-7>  MnSymbolE6
   <7-8>  MnSymbolE7
   <8-9>  MnSymbolE8
   <9-10> MnSymbolE9
  <10-12> MnSymbolE10
  <12->   MnSymbolE12}{}
\DeclareFontShape{OMX}{MnSymbolE}{b}{n}{
    <-6>  MnSymbolE-Bold5
   <6-7>  MnSymbolE-Bold6
   <7-8>  MnSymbolE-Bold7
   <8-9>  MnSymbolE-Bold8
   <9-10> MnSymbolE-Bold9
  <10-12> MnSymbolE-Bold10
  <12->   MnSymbolE-Bold12}{}
\DeclareSymbolFont{largesymbolsmnsymbol}  {OMX}{MnSymbolE}{m}{n}
\DeclareMathDelimiter{\langlebar}{\mathopen}{largesymbolsmnsymbol}{'152}{largesymbolsmnsymbol}{'152}
\DeclareMathDelimiter{\ranglebar}{\mathclose}{largesymbolsmnsymbol}{'157}{largesymbolsmnsymbol}{'157}
\DeclarePairedDelimiter{\Brackets}{\lBrack}{\rBrack}
\DeclarePairedDelimiter{\Parens}{\langlebar}{\ranglebar}
\newcommand{\supp}[1]{\mathsf{supp(#1)}}
\newcommand{\multiset}{{\mathcal{M}}}
\newcommand{\powerset}{{\mathcal{P}}}
\newcommand{\distribution}{{\mathcal{D}}}
\newcommand{\state}{{\mathsf{State}}}
\newcommand{\semifree}{^{\mathrm{s}}}
\newcommand{\nsemifree}[1]{^{\mathrm{s}^{#1}}}
\newcommand{\rem}[1]{}  
\newcommand{\cisom}{\isom_{\textsf{conc}}}
\colorlet{linkcolor}{red!60!black}
\begin{document}
\title{\texorpdfstring{Algebraic Presentation of Semifree Monads}{Algebraic Presentation of Semifree Monads}}


\author{Alo\"is Rosset \inst{1} \and
Helle Hvid Hansen \inst{2} \and 
J\"org Endrullis \inst{1}
}

\authorrunning{A.~Rosset, H.H.~Hansen, J.~Endrullis} 
\titlerunning{Algebraic Presentation of Semifree Monads}
\institute{Vrije Universiteit Amsterdam, Amsterdam, Netherlands \\
\email{\{alois.rosset, j.endrullis\}@vu.nl}
\and 
University of Groningen, Groningen, Netherlands \\
\email{h.h.hansen@rug.nl}
}

\maketitle

\begin{abstract}
\rem{
We study algebraic presentation of monads.
Given a monad $M$ there is a suitable monad structure on the
functor id + M, which is called the semifree monad $M\semifree$.
This $M\semifree$ is of importance for two reasons.
First, there is an isomorphism of categories between $M\semifree$-algebras and $M$-semialgebras.
Second, distributive laws of $M$ over another monad $T$ are in bijection with a subclass of all the distributive laws of $M\semifree$ over $T$.
Our main contribution is a uniform algebraic presentation of those semifree monads $M\semifree$.
}

Monads and their composition via distributive laws have many applications in program semantics and functional programming. For many interesting monads, distributive laws fail to exist, and this has motivated investigations into weaker notions. In this line of research, Petri\c{s}an and Sarkis recently introduced a construction called the semifree monad in order to study semialgebras for a monad and weak distributive laws.
In this paper, we prove that an algebraic presentation of the semifree monad $M\semifree$ on a monad $M$ can be obtained uniformly from an algebraic presentation of $M$. This result was conjectured by Petri\c{s}an and Sarkis. 
We also show that semifree monads are ideal monads, that the semifree construction is not a monad transformer, and that the semifree construction is a comonad on the category of monads.

\keywords{algebraic theory  \and monad \and algebraic presentation \and semifree}
\end{abstract}

\section{Introduction}

Monads \cite{MacLane_1971,Manes_1976} are widely used in the coalgebraic semantics of programs with nondeterminism, probabilistic branching and other features,
see e.g.~\cite{Bonchi_Sokolova_Vignudelli_2019,Heerdt_Sammartino_Silva_2020,Jacobs_Silva_Sokolova_2014,Katsumata_Rivas_Uustalu_2020,Milius_Pattinson_Schroeder_2015}.
In functional programming, monads are used
for structuring programs with computational effects, see e.g. \cite{Moggi_1991,Plotkin_Power_2001,Wadler_1992}. 
In order to reason about programs that combine several effects, compositions of monads have been studied such as monad transformers~\cite{Jaskelioff_09,Liang_Hudak_Jones_1995}, coproducts~\cite{Ghani_Uustalu_2004,Adamek_Milius_Bowler_Levy_2012} and tensors \cite{Hyland_Plotkin_Power_2006,Power_2005}.  
The main approach for studying compositions of two monads with functor parts $M$ and $T$ into a monad with functor part $MT$ is via distributive laws \cite{Beck_1969}.
However, distributive laws do not always exist \cite{Klin_Salamanca_2018,Varacca_Winskel_2006,Zwart_Marsden_2019}.
These negative results motivated investigations into weaker notions of distributive laws \cite{Bohm_2010,Garner_2020,Street_2009}.
For some of the most prominent examples, including the lack of a distributive law of the probability distributions monad over the powerset monad \cite{Varacca_Winskel_2006}, the failure was located at one of the two unit axioms.
To overcome this, Garner~\cite{Garner_2020} defines a notion of \emph{weak distributive law} which drops the problematic axiom.
The usefulness of this concept has been demonstrated as
several monads have been proven to result from weak distributive laws: the Vietoris monad \cite{Garner_2020}, the convex powerset monad \cite{Goy_Petrisan_2020} and the monad of finitely generated convex subsets \cite{Bonchi_Santamaria_2021}.

Following this line of research, Petri\c{s}an~\&~Sarkis~\cite{Petrisan_Sarkis_2021} defined the \emph{semifree monad} for a monad $M$ on the functor coproduct $M\semifree \defeq \Id + M$.
They demonstrated a one-to-one correspondence between weak distributive laws with $M$ and distributive laws with $M\semifree$ satisfying an extra condition.
To achieve this, they first proved the existence of an isomorphism of categories between $M\semifree$-algebras and $M$-semialgebras, the latter being $M$-algebras that only satisfy the associativity axiom, but not necessarily the unit axiom.
A similar result was proved by Hyland and Tasson \cite[Proposition 27]{Hyland_2021} in the context of $2$-monads and $2$-categories.

The algebraic presentation of finitary monads by algebraic theories allows for equational reasoning about programs with computational effects, and is currently an active area of research, see e.g.~\cite{Bonchi_Santamaria_2021,Bonchi_Sokolova_Vignudelli_2019,Mio_Sarkis_Vignudelli_2021,Mio_Vignudelli_2020,Zwart_Marsden_2019}.

The main contribution of this paper is to show that an algebraic presentation of the semifree monad $M\semifree$ can be obtained uniformly from an algebraic presentation of $M$ (\cref{thm:conjecture}).
This uniform presentation was conjectured 
in \cite{Petrisan_Sarkis_2021}.
We apply the result to obtain algebraic presentations of the semifree monad over the exception monad, the list monad, the multiset monad, the finite powerset monad and the state monad.

We give a brief sketch of the proof.
Given an algebraic theory $(\Sigma,E)$, we first note that it suffices to prove the result for the free monad $T$ of the theory, which is clearly presented by it.
Let $(\Sigma\semifree, E\semifree)$ be the proposed presentation of $T\semifree$, where the signature $\Sigma\semifree$ consists of all operations in $\Sigma$ plus a new unary, idempotent operation denoted $\newa$. 
Due to the isomorphism between $T\semifree$-algebras and $T$-semialgebras it suffices to show an isomorphism between the categories of $T$-semialgebras and $(\Sigma\semifree,E\semifree)$-algebras.  

\begin{itemize}
    \item
    Given a $T$-semialgebra $(X,\alpha)$, we obtain a $(\Sigma\semifree,E\semifree)$-algebra by interpreting the new unary operation $\newa$ by embedding elements of $X$ into $TX$ using the unit $\eta_X$ of $T$ and then applying $\alpha$.
    The old operations from $\Sigma$ are interpreted similarly.
    We then verify by inductive arguments that this algebra satisfies all equations in $E\semifree$.
    
    \item
    Given a $(\Sigma\semifree,E\semifree)$-algebra, we obtain an $T$-semialgebra $(X,\alpha)$ using that elements of $TX$ are congruence classes of terms. 
    This allows us to show that $\alpha$ is indeed associative by an induction on the term structure of the elements of $TTX$.
\end{itemize}
Apart from providing equational reasoning for semifree monads, this uniform presentation result provides a means to study weak distributive laws via the presentations of the monads involved, using a similar approach as
Zwart~\&~Marsden~\cite{Zwart_Marsden_2019}. 
We discuss their work below and in \cref{sec:future_work}.


The paper is organised as follows.
\cref{sec:preliminaries} introduces preliminary definitions of monads, universal algebra, and semifree monads.
\cref{sec:theorem} states and proves the main result, \cref{thm:conjecture}.
\cref{sec:examples} presents examples of the application of \cref{thm:conjecture}.
\cref{sec:ideal_monad} investigates the relationship between the semifree construction and the notions of ideal monad and monad transformer. 
\cref{sec:future_work} concludes and discusses future work.
Omitted proofs can be found in the appendix.

\subsubsection{Related work:}
The work in the present paper concerns the question of how to obtain a monad presentation from existing ones, in a uniform manner.

Zwart~\&~Marsden~\cite{Zwart_Marsden_2019} use presentations of monads to give general results about the non-existence of distributive laws (``no-go theorems''). This approach allowed them to answer several open questions, including the 50 year-old conjecture by Beck \cite[Example 4.1]{Beck_1969} that addition cannot distribute over multiplication. On the positive side, when a distributive law exists, they show how to obtain a presentation of the composite monad from the monads and the distributive law.

Monads of the form $\mathcal{C}+1$ and other modifications of $\mathcal{C}$, the monad of convex subsets of distributions are studied in \cite{Mio_Sarkis_Vignudelli_2021}.
Both positive and negative results are obtained in different categories.
No uniform presentation result such as \cref{thm:conjecture} is given.

Recent work on algebraic presentations of specific monads include a presentation of the monad $\mathcal{C}$ \cite{Bonchi_Sokolova_Vignudelli_2019},
and presentations for monads arising via weak distributive laws that combine the monads of semilattices and semimodules \cite{Bonchi_Santamaria_2021}.

\subsubsection{Acknowledgments:} We thank Ralph Sarkis, and Roy Overbeek for useful discussion, suggestions and corrections. 
We also thank all anonymous reviewers for their valuable feedback and suggestions.
Alo\"is Rosset and J\"org Endrullis received funding from the Netherlands Organization for Scientific Research (NWO) under the Innovational Research Incentives Scheme Vidi (project. No. VI.Vidi.192.004).

\section{Preliminaries}\label{sec:preliminaries}

We assume the reader is familiar with basic notions of category theory 
\cite{Awodey_2006,MacLane_1971,Riehl_2017}.
In this section, we recall basic definitions and fix notation concerning monads, algebraic theories and presentations.
We also recall basic definitions and results of semifree monads.

\begin{definition}
    A \myemph{monad} $(M,\eta,\mu)$ on a category $\cat{C}$ is a triple consisting of an endofunctor $M: \cat{C} \to \cat{C}$, and two natural transformations, the \myemph{unit} $\eta : \Id \Rightarrow M$ and the \myemph{multiplication} $\mu : M^2 \Rightarrow M$ that make \eqref{eqn:unit_axioms} and \eqref{eqn:associativity_axiom} commute.
    We refer to \eqref{eqn:associativity_axiom} as the associativity of $\mu$.
    
    \noindent\begin{tabularx}{\textwidth}{@{}XX@{}}
        \begin{equation}
            \label{eqn:unit_axioms}
            \begin{tikzcd}[ampersand replacement=\&]
                M \ar[r, "M\eta"] \ar[rd, equal] \& M^2 \ar[d, "\mu"] \& M \ar[l, "\eta M"'] \ar[ld, equal] \\
                \& M \&
            \end{tikzcd}
        \end{equation} &
        \begin{equation}
            \label{eqn:associativity_axiom}
            \begin{tikzcd}[ampersand replacement=\&]
                M^3 \ar[d, "M\mu"'] \ar[r, "\mu M"] \& M^2 \ar[d, "\mu"] \\
                M^2 \ar[r, "\mu"'] \& M
            \end{tikzcd}
        \end{equation}
    \end{tabularx}
\end{definition}

For convenience, we often refer to a monad $(M,\eta,\mu)$ by its functor part $M$.

\begin{definition}
    Given two monads $(M,\eta^M,\mu^M)$ and $(T,\eta^T,\mu^T)$ on a category $\cat{C}$, a \myemph{monad morphism} from $M$ to $T$ is a natural transformation $\sigma : M \Rightarrow T$ that makes \eqref{eqn:monad_map_axiom1} and \eqref{eqn:monad_map_axiom2} commute, where $\sigma \sigma \defeq \sigma_{T} \circ M\sigma = T\sigma \circ \sigma_M$.
    
    \noindent\begin{tabularx}{\textwidth}{@{}XX@{}}
        \begin{equation}
            \label{eqn:monad_map_axiom1}
            \begin{tikzcd}[row sep = 0.25em, ampersand replacement=\&]
                \& M \ar[dd, "\sigma"] \\
                \Id \ar[ur, "\eta^M"] \ar[dr, "\eta^T"'] \& \\
                \& T
            \end{tikzcd}
        \end{equation} &
        \begin{equation}
            \label{eqn:monad_map_axiom2}
            \begin{tikzcd}[ampersand replacement=\&]
                M^2 \ar[d, "\mu^M"'] \ar[r, "\sigma \sigma"] \& T^2 \ar[d, "\mu^T"] \\
                M \ar[r, "\sigma"'] \& T
            \end{tikzcd}
        \end{equation}
    \end{tabularx}
    If each component of $\sigma$ is an isomorphism, we say that the two monads are \myemph{isomorphic}.
    The category of monads on a category $\cat{C}$ and monad morphisms is denoted $\catmon{C}$.
\end{definition}

\begin{definition}
    Let $(M,\eta,\mu)$ be a monad on category $\cat{C}$.
    An (Eilenberg-Moore) $M$-\myemph{algebra} is a $\cat{C}$-morphism $\alpha : MX \to X$ for some $X \in \cat{C}$, denoted $(X,\alpha)$ for short, such that \eqref{eqn:unit_axioms2} and \eqref{eqn:associativity_axiom2} commute. 
    An $M$-\myemph{semialgebra} is a $\cat{C}$-morphism $\alpha : MX \to X$ that satisfies 
\eqref{eqn:associativity_axiom2}.

    \noindent\begin{tabularx}{\textwidth}{@{}XX@{}}
        \begin{equation}
            \label{eqn:unit_axioms2}
            \begin{tikzcd}[ampersand replacement=\&]
                X \ar[r, "\eta_X"] \ar[rd, equal] \& MX \ar[d, "\alpha"] \\
                \& X
            \end{tikzcd}
        \end{equation} &
        \begin{equation}
            \label{eqn:associativity_axiom2}
            \begin{tikzcd}[ampersand replacement=\&]
                M^2X \ar[d, "M\alpha"'] \ar[r, "\mu_X"] \& MX \ar[d, "\alpha"] \\
                MX \ar[r, "\alpha"'] \& X
            \end{tikzcd}
        \end{equation}
    \end{tabularx}
    An $M$-\myemph{(semi)algebra homomorphism} $f : (X,\alpha) \to (Y,\beta)$ between two $M$-(semi)algebras is a function $f : X \to Y$ such that the following diagram commutes:
    \begin{center}
        \begin{tikzcd}
            MX \ar[d, "\alpha"'] \ar[r, "Mf"] & MY \ar[d, "\beta"] \\
            X \ar[r, "f"'] & Y
        \end{tikzcd}
    \end{center}
    The category of $M$-algebras and $M$-algebra homomorphisms is denoted $\EM{M}$ and called the \myemph{Eilenberg-Moore category} of $M$.
    The category $\EMs{M}$ consists of $M$-semialgebras and $M$-semialgebra homomorphisms.
\end{definition}

    We denote the {\bfseries{coproduct}} of two objects $X$ and $Y$ in a category $\cat{C}$ by $X + Y$, the left and right injections by $\inl^{X+Y} : X \to X+Y$ and $\inr^{X+Y} : Y \to X+Y$, and the arrow given by the universal property of the coproduct for arbitrary $f:X \to Z$ and $g : Y \to Z$ by $[f,g] : X + Y \to Z$.

\noindent We now recall a few basic notions of universal algebra.

\begin{definition}
    \begin{itemize}
        \item
        An \myemph{ algebraic signature} $\Sigma$ is a set of operation symbols each having its own arity $n \in \N$, denoted $(\op : n)$ for an $n$-ary $\op \in \Sigma$.
        
        \item 
        Given an algebraic signature $\Sigma$ and a set $X$, the set $\terms{\Sigma}{X}$ of
        $\Sigma$-terms over $X$ is defined inductively as follows.
        Elements in $X$ are terms, and they are said to be of depth zero, written $\cp(v) = 0$.
        If $t_1,\ldots,t_n$ are terms, and $(\op : n) \in \Sigma$, then $t \defeq \op(t_1,\ldots,t_n)$ is a term of depth $\cp(t) \defeq \mathsf{max}\set{\cp(t_1),\ldots,\cp(t_n)} + 1$.
        We define constants (i.e., nullary operations) to have depth 1.
        
        \item
        We fix a set $\variables = \set{v_1, v_2, v_3, \ldots}$ of variables.
        To indicate that the variables appearing in $t \in \terms{\Sigma}{\variables}$ are in the set $\set{v_1,\ldots,v_n}$, we write $t(v_1,\ldots,v_n)$.
        
        \item 
        A $\Sigma$-\myemph{algebra} is a pair $(X, \Brackets{\cdot})$, where $X$ is a set and $\Brackets{\cdot}$ is a collection of interpretations: for each $(\op : n) \in \Sigma$, we have $\Brackets{\op} : X^n \to X$.
        
        \item
        Given a $\Sigma$-algebra $(X, \Brackets{\cdot})$, any function $f : Y \to X$ extends to a unique homomorphism,
        $\Brackets{\cdot}_f : \terms{\Sigma}{Y} \to X$:
        \begin{align}
            \Brackets{y}_f &\defeq f(y), \text{ and} \label{eqn:def_[[]]_variables} \\
            \Brackets{\op(t_1,\ldots,t_m)}_f &\defeq \Brackets{\op} ( \Brackets{t_1}_f, \ldots, \Brackets{t_m}_f). \label{eqn:def_[[]]_operations}
        \end{align}
        When $f$ is the identity $\id_X$, the subscript is omitted.
        The function $f$ is often a variable assignment $\sigma : \variables \to X$ to obtain an interpretation of $\terms{\Sigma}{\variables}$.
        
        \item
        An \myemph{equation} over $\Sigma$ is a pair of terms $(s,t) \in \terms{\Sigma}{\variables} \times \terms{\Sigma}{\variables}$.
        
        \item
        An \myemph{algebraic theory} is a
        pair $(\Sigma,E)$ consisting of an algebraic signature $\Sigma$ and set $E$ of equations over $\Sigma$.

        \item
        A $(\Sigma,E)$-\myemph{algebra} is a $\Sigma$-algebra $(X,\Brackets{\cdot})$ for which the interpretation satisfies all equations in $E$, meaning that for all $(s,t) \in E$ and all variable assignments $\sigma$, $\Brackets{s}_\sigma = \Brackets{t}_\sigma$.
        
        \item 
        A $(\Sigma,E)$-\myemph{algebra homomorphism} between two $(\Sigma,E)$-algebras $(X,\Brackets{\cdot})$ and $(X', \Brackets{\cdot}')$ is a function $f : X \to X'$ such that $f \circ \Brackets{\op} = \Brackets{\op}' \circ f^n$ for all $(\op : n) \in \Sigma$.
        
        \item 
        The category $\catalg{\Sigma,E}$ consists of $(\Sigma,E)$-algebras and $(\Sigma,E)$-algebra homomorphisms.
        
        \item
        Given terms $s$ and $t$, we write $E \vdash s = t$  to denote that $s=t$ is derivable from $E$ in equational logic,
        and $E \vDash s=t$ to denote that the equation $s=t$ holds in all $(\Sigma,E)$-algebras.
        
        \item
        The \myemph{free $(\Sigma,E)$-algebra} on a set $X$ is the $(\Sigma,E)$-algebra $(\freealgebra{\Sigma}{X}{E},\freeinterX{\cdot})$ with carrier set consisting of $\terms{\Sigma}{X}$ modulo the smallest congruence relation containing $E_{}$. 
        The congruence class of a term $t$ is denoted $\ov{t}$.
        The interpretation of an operation $(\op : m) \in \Sigma$ is
        \[
            \freeinterX{\op}(\ov{t_1}, \ldots, \ov{t_m}) \defeq \ov{\op(t_1,\ldots,t_m)}.
        \]
        
        \item
        The \myemph{free functor} $F: \catset \to \catalg{\Sigma,E}$ sends a set $X$ to the free $(\Sigma,E)$-algebra $(\freealgebra{\Sigma}{X}{E},\Brackets{\cdot})$.
        The adjective "free" is also true in the categorical sense, i.e., we have a free-forgetful adjunction
        \begin{center}
            \begin{tikzcd}
                F : \catset \ar[r, shift left=1.3, ""] \ar[r, phantom, "\scriptscriptstyle\bot"] & \catalg{\Sigma,E} : U  \ar[l, shift left=1.3,""]
            \end{tikzcd}
        \end{center}
        The composite $\freemonad{\Sigma,E} \defeq UF$ is a monad (see e.g.~\cite[VI.1]{MacLane_1971}).
        Its unit $\freemonadunit$ sends an element to its equivalence class $x \mapsto \ov{x}$ and its multiplication $\freemonadmult $ flattens terms $\ov{t[\ov{t_i}/v_i]} \mapsto \ov{t[t_i/v_i]}$.
    \end{itemize}
\end{definition}

We can finally define the central concept of algebraic presentation.
\begin{definition} \label{def:algebraic_presentation}
    An algebraic theory $(\Sigma,E)$ is an \myemph{algebraic presentation} of a $\catset$-monad $(M,\eta,\mu)$ if $(\freemonad{\Sigma,E},\freemonadunit,\freemonadmult) \isom (M,\eta,\mu)$.
\end{definition}

Note that a monad can have multiple presentations.
From the definition, we immediately have the trivial example that an algebraic theory $(\Sigma,E)$ is an algebraic presentation of its free monad $\freemonad{\Sigma,E}$.

The class of monads that admit presentations is precisely the class of \emph{finitary monads}; for the definition and also the proof of this correspondence see e.g.~\cite[Section 3.18, p.149]{Adamek_Rosicky_1994}.
We therefore work only with finitary monads in the article.

Given a $\catset$-monad $M$ and an algebraic theory $(\Sigma,E)$,  the categories of algebras $\EM{M}$ and $\catalg{\Sigma,E}$ are concrete categories, and in this paper, it turns out to be more convenient to work with an equivalent definition of algebraic presentation formulated in terms of concrete isomorphisms. 

\begin{definition} \label{def:concrete}
    A category $\cat{C}$ is \myemph{concrete} if there is a faithful functor $U : \cat{C} \to \catset$, usually a forgetful functor.
    A functor $F : \cat{C} \to \cat{D}$ between concrete categories is itself \myemph{concrete} if it commutes with the faithful functors $U_\cat{D} \circ F = U_\cat{C}$.
    We write $ \cat{C} \cisom \cat{D}$ to denote that categories $\cat{C}$ and $\cat{D}$ are concretely isomorphic.
\end{definition}

The following lemma is well-known and is a direct consequence of e.g.~\cite[Theorem III.6.3]{Barr_Wells_1985_TTT}.

\begin{lemma}
    \label{lem:isom_monads_iff_isom_algebras_cats}
    For $\catset$-monads $(M,\eta^M,\mu^M),(T,\eta^T,\mu^T)$, we have
    \[
        (M,\eta^M,\mu^M) \isom (T,\eta^T,\mu^T) 
        \quad \Longleftrightarrow \quad
        \EM{M} \cisom \EM{T}.
    \]
\end{lemma}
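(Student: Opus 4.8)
The statement is an iff, so I would prove the two directions separately, and the forward direction is almost immediate. The plan is to use the characterisation of isomorphic monads via their Eilenberg--Moore adjunctions, which is exactly the content of \cite[Theorem III.6.3]{Barr_Wells_1985_TTT}.

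For the forward direction ($\Rightarrow$), suppose $\sigma : M \Rightarrow T$ is a monad isomorphism, i.e.\ a monad morphism each of whose components is a bijection. First I would define a functor $\Phi : \EM{T} \to \EM{M}$ on objects by sending a $T$-algebra $(X,\beta)$ to the pair $(X, \beta \circ \sigma_X)$, and on morphisms by acting as the identity on underlying functions. The main calculations are: (i) $\beta \circ \sigma_X$ really is an $M$-algebra structure, i.e.\ it satisfies \eqref{eqn:unit_axioms2} and \eqref{eqn:associativity_axiom2} for $M$ --- the unit axiom follows from \eqref{eqn:monad_map_axiom1} for $\sigma$ together with the unit axiom for $\beta$, and the associativity axiom follows from \eqref{eqn:monad_map_axiom2} for $\sigma$, naturality of $\sigma$, and associativity of $\beta$; and (ii) a $T$-algebra homomorphism $f$ is still an $M$-algebra homomorphism after applying $\Phi$, which uses naturality of $\sigma$. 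Because $\sigma$ is componentwise invertible, I would build the inverse functor $\Psi : \EM{M} \to \EM{T}$ using $\sigma\inv$ (which is again a monad isomorphism), and check $\Phi \circ \Psi = \Id$ and $\Psi \circ \Phi = \Id$ on the nose. Since both functors act as the identity on underlying sets and underlying functions, they commute with the forgetful functors, so the isomorphism is concrete; this gives $\EM{M} \cisom \EM{T}$.

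For the backward direction ($\Leftarrow$), this is where I would lean on the cited theorem rather than reconstruct everything by hand. A concrete isomorphism $\EM{M} \cisom \EM{T}$ is in particular an isomorphism of categories commuting with the forgetful functors $U^M$ and $U^T$. Each forgetful functor has a left adjoint (the free-algebra functor), and the induced monad on $\catset$ is precisely $M$ respectively $T$, by the Eilenberg--Moore construction. A concrete isomorphism of the algebra categories transports the adjunction for $U^T$ to an adjunction for $U^M$; since adjoints are unique up to isomorphism and the monad is determined by the adjunction, the two induced monads $M$ and $T$ must be isomorphic as monads. I would phrase this as: the concrete isomorphism sends free $T$-algebras to free $M$-algebras, and reading off the comparison on the free algebra on $X$ yields the components $\sigma_X$ of a monad isomorphism.

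The main obstacle is the backward direction, specifically making precise that a \emph{concrete} isomorphism (rather than a mere equivalence, or an isomorphism not respecting the forgetful functors) forces the two monads to agree up to monad isomorphism, including compatibility with units and multiplications. This is exactly why the concreteness hypothesis is essential: an abstract isomorphism of $\EM{M}$ and $\EM{T}$ need not respect the algebraic structure carried by the forgetful functors. Rather than redo the monadicity machinery, I would cite \cite[Theorem III.6.3]{Barr_Wells_1985_TTT} for this transport of adjunctions and the resulting uniqueness, and restrict my own verification to the routine forward direction. This keeps the proof short while isolating the one genuinely structural point in the literature.
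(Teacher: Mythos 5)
Your proposal is correct and follows essentially the same route as the paper: your forward direction uses the same explicit functors (precompose the algebra structure with $\sigma_X$ respectively $\sigma\inv_X$, identity on carriers and morphisms, hence concrete), and your sketch of the backward direction --- a concrete isomorphism sends free algebras to free algebras, and reading off the induced comparison on the free algebra on $X$ gives the components $\sigma_X$ of a monad isomorphism --- is precisely how the paper argues, via its \cref{lem:free_object_preserve_by_concrete_isom} on preservation of free objects. The only difference is one of self-containment: the paper verifies naturality of $\sigma$ and the monad morphism axioms by hand using universal properties of free objects, whereas you delegate that work to \cite[Theorem III.6.3]{Barr_Wells_1985_TTT}, which the paper itself cites as directly implying the lemma, so this is a legitimate shortcut rather than a gap.
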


It gives the following alternative formulation of algebraic presentation.

\begin{lemma}\label{lem:alg-pres-equiv}
    An algebraic theory $(\Sigma,E)$ is an algebraic presentation of a (finitary) monad $(M,\eta,\mu)$ if and only if $\EM{M} \cisom \catalg{\Sigma,E}$.
\end{lemma}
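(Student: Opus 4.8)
The plan is to reduce the biconditional to two ingredients that are already available: the correspondence between monad isomorphisms and concrete isomorphisms of Eilenberg--Moore categories (\cref{lem:isom_monads_iff_isom_algebras_cats}), and the classical identification $\EM{\freemonad{\Sigma,E}} \cisom \catalg{\Sigma,E}$ of the Eilenberg--Moore category of the free monad of the theory with the category of algebras of the theory. Granting these, the lemma follows by transitivity of $\cisom$.

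First I would unfold the definition of presentation. By \cref{def:algebraic_presentation}, the theory $(\Sigma,E)$ presents $M$ precisely when $(\freemonad{\Sigma,E},\freemonadunit,\freemonadmult) \isom (M,\eta,\mu)$ as monads. Instantiating \cref{lem:isom_monads_iff_isom_algebras_cats} at $T \defeq \freemonad{\Sigma,E}$ rewrites this monad isomorphism, in both directions, as the concrete isomorphism $\EM{M} \cisom \EM{\freemonad{\Sigma,E}}$. Thus it remains only to produce a concrete isomorphism $\EM{\freemonad{\Sigma,E}} \cisom \catalg{\Sigma,E}$ that does not depend on $M$, after which both implications are immediate: for the forward direction one composes the two concrete isomorphisms, and for the converse one composes them the other way and applies the $\Leftarrow$ part of \cref{lem:isom_monads_iff_isom_algebras_cats}.

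The substance lies in constructing this concrete isomorphism, which I would present as a pair of mutually inverse, underlying-set-preserving functors. From an algebra $(X,a\colon \freemonad{\Sigma,E}X \to X)$ I define a $\Sigma$-algebra on $X$ by $\Brackets{\op}(x_1,\dots,x_n) \defeq a(\ov{\op(x_1,\dots,x_n)})$ for each $(\op : n) \in \Sigma$, reading the $x_i$ into $\freemonad{\Sigma,E}X$ through the unit $\freemonadunit$. A routine induction on $\cp(t)$, using the unit law $a(\ov{x}) = x$ and the associativity square for $a$ (i.e.\ $a \circ \freemonad{\Sigma,E}a = a \circ \freemonadmult$, which flattens a nested term), shows that the interpretation $\Brackets{t}_\sigma$ of any term agrees with $a(\ov{t})$ after substituting the assignment $\sigma$; since the defining congruence on $\terms{\Sigma}{X}$ contains $E$, every equation of $E$ is then validated, so $(X,\Brackets{\cdot})$ lies in $\catalg{\Sigma,E}$. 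Conversely, from a $(\Sigma,E)$-algebra $(X,\Brackets{\cdot})$ I set $a(\ov{t}) \defeq \Brackets{t}$; this is well defined because the kernel of the interpretation map $\terms{\Sigma}{X} \to X$ is a congruence containing $E$, hence contains the smallest such congruence, so the value is constant on each class $\ov{t}$, and one verifies the unit and associativity axioms for $a$ directly against the descriptions of $\freemonadunit$ and $\freemonadmult$. Both assignments act as the identity on underlying sets, send homomorphisms to homomorphisms, and are inverse to one another, giving the desired concrete isomorphism.

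I expect the only genuinely delicate point to be this core step, and within it the two well-definedness arguments: the inductive lemma $\Brackets{t}_\sigma = a(\ov{t})$, which must correctly invoke associativity of $a$ at each nesting of an operation, and, in the reverse direction, the factoring of the interpretation through the $E$-congruence. Everything else---unfolding \cref{def:algebraic_presentation}, applying \cref{lem:isom_monads_iff_isom_algebras_cats}, and composing concrete isomorphisms---is formal. I would also remark that the finitarity hypothesis plays no role in the argument itself; it enters only through the standing convention that a monad admits a presentation exactly when it is finitary.
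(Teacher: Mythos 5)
Your proposal is correct and takes essentially the same route as the paper: both reduce the lemma to \cref{lem:isom_monads_iff_isom_algebras_cats} combined with the classical concrete isomorphism $\catalg{\Sigma,E} \cisom \EM{\freemonad{\Sigma,E}}$, then conclude by composing concrete isomorphisms. The only difference is that the paper cites this classical fact (Mac Lane, VI.8.1) where you spell out its standard proof by hand; your construction and the accompanying well-definedness arguments are sound, just more detailed than the paper needs.
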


\begin{proof}
    Since $\catalg{\Sigma,E} \cisom \EM{\freemonad{\Sigma,E}}$ (see e.g.~\cite[VI.8.1]{MacLane_1971}), 
    the result follows immediately from Lemma~\ref{lem:isom_monads_iff_isom_algebras_cats}.     
\end{proof}

\begin{remark}\label{rem:concrete}
    In the literature, the definition of algebraic presentation is often stated as the condition $\EM{M} \isom \catalg{\Sigma,E}$, i.e., it leaves the ``concrete'' part implicit, see e.g. \cite{Bonchi_Santamaria_2021,Bonchi_Sokolova_Vignudelli_2019,Mio_Vignudelli_2020,Mio_Sarkis_Vignudelli_2021,Petrisan_Sarkis_2021}. The ``concrete'' part is not necessary in these papers, since they establish algebraic presentations (by indeed establishing concrete isomorphisms), but they do not prove results that assume the existence of an algebraic presentation. In the present paper, we assume that a presentation for $M$ is given, and establish one for $M^s$, and the proof requires the isomorphism $\EM{M} \cisom \catalg{\Sigma,E}$ to be concrete. 
\end{remark}


The following lemma states two well-known facts that we will need later in proofs.

\begin{lemma} \label{lem:homom_by_nat_of_mu}
    Let $(\Sigma,E)$ be an algebraic theory and $T$ denote its free monad $\freemonad{\Sigma,E}$.
    Given a function $f:Y \to X$, then the following are $(\Sigma,E)$-algebra homomorphisms:
    \begin{align*}
        \mu_X &: (TTX, \freeinterTX{\cdot}) \to (TX, \freeinterX{\cdot}), \\
        Tf &: (TY, \freeinterY{\cdot}) \to (TX, \freeinterX{\cdot}).
    \end{align*}
\end{lemma}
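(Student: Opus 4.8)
The plan is to recognise both maps as the underlying functions of morphisms in $\catalg{\Sigma,E}$ that arise directly from the free--forgetful adjunction $F \dashv U$ presenting $T = \freemonad{\Sigma,E} = UF$. Recall that on objects $F$ sends a set $X$ to the free algebra $(TX, \freeinterX{\cdot})$, so that $FX = (TX, \freeinterX{\cdot})$, $FY = (TY, \freeinterY{\cdot})$, and $FUFX = F(TX) = (TTX, \freeinterTX{\cdot})$. In other words, the two algebra structures named in the lemma are exactly the free-algebra structures on the objects $UFX$ and $UFUFX$, so it suffices to exhibit $Tf$ and $\mu_X$ as the images under $U$ of suitable $\catalg{\Sigma,E}$-morphisms.

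For $Tf$: by definition of $F$ on morphisms, $Ff : FY \to FX$ is a $(\Sigma,E)$-algebra homomorphism whose underlying function is $UFf = Tf$; since $U$ merely forgets structure, $Tf$ is a $(\Sigma,E)$-algebra homomorphism $(TY, \freeinterY{\cdot}) \to (TX, \freeinterX{\cdot})$. For $\mu_X$: the multiplication of the monad induced by $F \dashv U$ is $\mu = U\epsilon F$, where $\epsilon$ is the counit, so that $\mu_X = U(\epsilon_{FX})$. As $\epsilon_{FX} : FUFX \to FX$ is a morphism in $\catalg{\Sigma,E}$, its underlying function $\mu_X$ is a $(\Sigma,E)$-algebra homomorphism $(TTX, \freeinterTX{\cdot}) \to (TX, \freeinterX{\cdot})$. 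This settles both claims.

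Alternatively---and this is the computational content behind the abstract argument---one verifies the two homomorphism equations by hand from the explicit descriptions. The functorial action $Tf$ relabels leaves, sending $\ov{t}$ to the class of the term obtained by replacing each leaf $y$ with $f(y)$; the flattening $\freemonadmult$ sends $\ov{t[\ov{t_i}/v_i]}$ to $\ov{t[t_i/v_i]}$, i.e.\ substitutes representatives at the leaves. In either case the key observation is that the transformation commutes with a leading operation symbol: for $(\op : n) \in \Sigma$ the relevant domain interpretation ($\freeinterY{\op}$, respectively $\freeinterTX{\op}$) sends a tuple of classes to the class of $\op(\ldots)$, applying the map pushes the transformation underneath $\op$, and one recovers $\freeinterX{\op}$ applied to the transformed arguments---precisely the homomorphism condition $f \circ \Brackets{\op} = \Brackets{\op}' \circ f^n$. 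The displayed formulas for $Tf$ and $\freemonadmult$ themselves follow by a routine induction on term depth using \eqref{eqn:def_[[]]_variables}--\eqref{eqn:def_[[]]_operations}.

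I do not expect a genuine obstacle here, as the lemma is essentially bookkeeping (indeed, it is stated as well known). The only points requiring attention are, on the abstract route, the identification of $(TTX, \freeinterTX{\cdot})$ and $(TX, \freeinterX{\cdot})$ with the adjunction objects $FUFX$ and $FX$; and on the concrete route, the well-definedness of the flattening and of the relabelling on congruence classes, which is already guaranteed since $\freemonadmult$ and $Tf$ are given as monad data and functorial action, respectively.
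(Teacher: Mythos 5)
Your proposal is correct, but it takes a genuinely different route from the paper. The paper proves the lemma concretely: for each $(\op:n)\in\Sigma$ it decomposes the interpretations as $\freeinterX{\op} = \mu_X \circ \ov{\op}$ and $\freeinterTX{\op} = \mu_{TX} \circ \ov{\op}$ (insertion of the operation symbol at the root, followed by flattening), and then verifies the homomorphism square for $\mu_X$ by pasting a square that commutes by direct inspection of term representatives with a square that commutes by associativity of $\mu$, and the square for $Tf$ analogously using naturality of $\mu$. Your primary argument instead identifies the three algebras with the adjunction objects $FY$, $FX$ and $FUFX$, and exhibits the two maps as the $U$-images of the $\catalg{\Sigma,E}$-morphisms $Ff$ and $\epsilon_{FX}$: since $T = UF$ and $\mu = U\epsilon F$ for the monad induced by the adjunction (which is exactly the construction the paper cites from MacLane), both claims follow at once. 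Your route is shorter and conceptually cleaner, delegating the work to standard facts about the free--forgetful adjunction; its only delicate point, which you correctly flag, is the identification $F(TX) = (TTX, \freeinterTX{\cdot})$ and the fact that the paper's explicitly described flattening map coincides with $U\epsilon_{FX}$, both of which hold by the paper's definitions. The paper's route is self-contained, staying at the level of the monad axioms and explicit term representatives, in the same style as the inductive verifications elsewhere in the paper. Your secondary ``computational'' sketch is essentially the paper's argument in outline, though the paper carries it out as a diagram chase on the decomposed interpretations rather than as an induction on term depth.
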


The semifree monad $M\semifree$ for a monad $M$ was introduced in \cite{Petrisan_Sarkis_2021} by Petri\c{s}an and Sarkis.

\begin{definition}[\cite{Petrisan_Sarkis_2021}] \label{def:semifree_monad}
    Given a monad $(M,\eta,\mu)$ on a category $\cat{C}$ having all finite coproducts, the \myemph{semifree monad} on $M$ is a monad $(M\semifree,\eta\semifree,\mu\semifree)$, where
    \begin{align*}
        M\semifree &\defeq \Id_C + M, \\
        \eta\semifree &\defeq \inl^{\Id + M}, \\
        \mu\semifree &\defeq [\Id_{\Id+M}, \inr^{\Id+M} \circ \mu \circ M[\eta, \Id_M] ].
    \end{align*}
\end{definition}

Note that the unit $\eta\semifree$ injects a set $X$ to its copy on the left in $X + MX$.
Petri\c{s}an and Sarkis showed in Theorems~3.4 and 4.3 of \cite{Petrisan_Sarkis_2021} that:
\begin{itemize}
    \item 
    There is a (concrete) isomorphism $\EM{M\semifree} \cisom \EMs{M}$.
    
    \item
    There is a bijection between weak distributive laws $\lambda: MT \Rightarrow TM$ and distributive laws $\delta : M\semifree T \Rightarrow TM\semifree$ satisfying an extra condition.
\end{itemize}

The semifree construction takes a monad as input and outputs another monad.
The semifree construction can be made into a functor on $\catmon{C}$ as follows. Given a monad morphism $\sigma : M \Rightarrow T$, and a set $X$, let
\begin{equation}\label{eq:semifree-functor-action-bis}
    \sigma\semifree_X \defeq \Big( X + MX \xrightarrow{\id_X + \sigma_X } X + TX \Big).
\end{equation}

\begin{lemma} \label{lem:semifree_construction_is_a_functor}
    The mapping $(-)\semifree : \catmon{C} \to \catmon{C}$ is a functor.
\end{lemma}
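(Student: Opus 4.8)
The plan is to verify directly that $(-)\semifree$ satisfies the two functoriality axioms: that it preserves identities and composition, and along the way that $\sigma\semifree$ is genuinely a monad morphism whenever $\sigma$ is. First I would confirm that for each monad morphism $\sigma : M \Rightarrow T$, the natural transformation $\sigma\semifree : M\semifree \Rightarrow T\semifree$ defined componentwise by \eqref{eq:semifree-functor-action-bis} is a morphism of monads, i.e.\ it commutes with the semifree units and multiplications as required by \eqref{eqn:monad_map_axiom1} and \eqref{eqn:monad_map_axiom2}. Naturality of $\sigma\semifree$ is immediate since it is built from $\id$ and the natural transformation $\sigma$ using the functorial coproduct $\id + (-)$. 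For the unit axiom, I would compute $\sigma\semifree \circ \eta^{M\semifree}$ and use $\eta^{M\semifree} = \inl^{\Id + M}$ together with the coproduct computation rule $(\id_X + \sigma_X) \circ \inl = \inl$ to get $\inl^{\Id + T} = \eta^{T\semifree}$.

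The preservation of identities, namely $(\id_M)\semifree = \id_{M\semifree}$, is essentially immediate: $(\id_M)\semifree_X = \id_X + (\id_M)_X = \id_X + \id_{MX} = \id_{X + MX}$ by functoriality of the coproduct. The preservation of composition is the part requiring a short calculation. Given monad morphisms $\sigma : M \Rightarrow T$ and $\tau : T \Rightarrow S$, I would show $(\tau \circ \sigma)\semifree = \tau\semifree \circ \sigma\semifree$ componentwise. Unfolding both sides, the left-hand side at a set $X$ is $\id_X + (\tau_X \circ \sigma_X)$, while the right-hand side is $(\id_X + \tau_X) \circ (\id_X + \sigma_X)$. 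These agree by the interchange law for the coproduct functor, $(\id_X + \tau_X) \circ (\id_X + \sigma_X) = \id_X + (\tau_X \circ \sigma_X)$, which is just functoriality of $X + (-)$ applied to the composite.

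The one genuinely substantive verification is the multiplication axiom \eqref{eqn:monad_map_axiom2} for $\sigma\semifree$, which unfolds into the commutativity of a square involving the two branches of $\mu\semifree = [\Id_{\Id+M}, \inr \circ \mu \circ M[\eta,\Id]]$. I would check this by case analysis on the coproduct $M\semifree M\semifree X = (X + MX) + M(X + MX)$: on the first summand both composites reduce to $\sigma\semifree$ acting through the identity branch, and on the second summand one must track how $\sigma\semifree \sigma\semifree$ interacts with $\inr \circ \mu \circ M[\eta,\Id]$, invoking naturality of $\mu$ and $\eta$ and the hypothesis that $\sigma$ is a monad morphism (so $\mu^T \circ \sigma\sigma = \sigma \circ \mu^M$ and $\sigma \circ \eta^M = \eta^T$). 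I expect this multiplication-axiom check to be the main obstacle, since it requires carefully chasing the nested coproduct structure and correctly matching the $[\eta,\Id]$ and $\inr$ components; the remaining functoriality identities follow formally from the functoriality of $\Id + (-)$ and are routine.
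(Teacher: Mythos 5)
Your proposal is correct and follows essentially the same route as the paper: the paper likewise treats preservation of identities and composition as immediate from the definition \eqref{eq:semifree-functor-action-bis} and devotes the proof to showing that $\sigma\semifree$ is a monad morphism, checking the unit axiom by the coproduct formula and the multiplication axiom componentwise on $(X+MX)+M(X+MX)$. One small correction to your toolbox for the nontrivial component $M(X+MX)$: the diagram chase uses naturality of $\sigma$ (to slide $\sigma_{X+TX}$ past $T[\eta^T_X,\id_{TX}]$) together with both monad-morphism axioms \eqref{eqn:monad_map_axiom1} and \eqref{eqn:monad_map_axiom2} for $\sigma$, rather than naturality of $\mu$ and $\eta$.
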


Here is another quick observation that will be needed later.

\begin{lemma} \label{lem:monads_isomorphic_imply_semifree_monads_isomorphic}
    Take two monads $(M,\eta^M,\mu^M), (T,\eta^T,\mu^T)$ on a category $\cat{C}$ that has all finite coproducts.
    If they are isomorphic $M \isom T$, then their respective semifree monads are also isomorphic $M\semifree \isom T\semifree$.
\end{lemma}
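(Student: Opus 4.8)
The plan is to derive this immediately from the functoriality of the semifree construction, established in \cref{lem:semifree_construction_is_a_functor}, together with the general categorical fact that every functor preserves isomorphisms.

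First I would unpack what $M \isom T$ means at the level of monads: there is a monad morphism $\sigma : M \Rightarrow T$ that is an isomorphism in $\catmon{C}$, i.e.\ it admits a two-sided inverse $\sigma\inv : T \Rightarrow M$ which is again a monad morphism. Here I would recall the standard fact that whenever a monad morphism has componentwise inverses in $\cat{C}$, these inverses automatically assemble into a natural transformation that satisfies \eqref{eqn:monad_map_axiom1} and \eqref{eqn:monad_map_axiom2}; hence to exhibit a monad isomorphism it suffices to know that each component $\sigma_X$ is an isomorphism in $\cat{C}$.

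Next I would apply the functor $(-)\semifree : \catmon{C} \to \catmon{C}$ to $\sigma$. Since functors send isomorphisms to isomorphisms, $\sigma\semifree : M\semifree \Rightarrow T\semifree$ is an isomorphism in $\catmon{C}$, with inverse $(\sigma\inv)\semifree$, and therefore $M\semifree \isom T\semifree$, as required. Concretely, by \eqref{eq:semifree-functor-action-bis} the component $\sigma\semifree_X = \id_X + \sigma_X$ is the coproduct of the identity with an isomorphism, hence itself an isomorphism in $\cat{C}$ with inverse $\id_X + \sigma_X\inv = (\sigma\inv)\semifree_X$.

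There is essentially no obstacle here: once \cref{lem:semifree_construction_is_a_functor} is in hand, the statement is a one-line consequence of the preservation of isomorphisms by functors. The only point deserving an explicit remark is the verification that the componentwise inverses $\id_X + \sigma_X\inv$ genuinely form a monad morphism $T\semifree \Rightarrow M\semifree$, but this follows either by applying functoriality to $\sigma\inv$, or directly from the general fact about inverses of monad morphisms noted above.
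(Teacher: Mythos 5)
Your proposal is correct and follows essentially the same route as the paper: the paper likewise applies the semifree construction (relying on \cref{lem:semifree_construction_is_a_functor} for $\sigma\semifree$ being a monad morphism) to both $\sigma$ and its inverse $\tau$, and verifies componentwise via the coproduct property that $\sigma\semifree_X \circ \tau\semifree_X = \id_X + (\sigma_X\tau_X) = \id_{T\semifree X}$ and symmetrically, which is exactly your ``functors preserve isomorphisms'' argument made explicit. Your additional remark that the componentwise inverse of a monad morphism is automatically a monad morphism is a correct and welcome clarification of a point the paper leaves implicit.
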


\begin{proof}
    Suppose we have a monad isomorphism $\sigma : M \Rightarrow T$.
    We prove $\sigma\semifree : M\semifree \isom T\semifree$ is a monad isomorphism.
    Since $\sigma$ is an isomorphism, it has an inverse $\tau : T \Rightarrow M$.
    We show that $\tau\semifree$ is the inverse of $\sigma\semifree$. For any set $X$,
    \begin{align*}
        \sigma\semifree_X \circ \tau\semifree_X
        &= (\id_X + \sigma_X) \circ (\id_X + \tau_X) 
        \tag{def \eqref{eq:semifree-functor-action-bis}} \\
        &= \id_X + (\sigma_X \tau_X)
        \tag{coproduct property} \\
        &= \id_X + \id_{TX}
        \tag{$\sigma,\tau$ inverses} \\
        &= \id_{T\semifree_X}
    \end{align*}
    and similarly $\tau\semifree_X \circ \sigma\semifree_X = \id_{M\semifree_X}$, concluding the proof.
\end{proof}

\newpage

\section{Algebraic Presentation of Semifree Monads}\label{sec:theorem}

In this section, we state and prove the main result of the paper.
We prove that given an algebraic presentation of a (finitary) $\catset$-monad $(M,\eta,\mu)$, we can derive an algebraic presentation of the semifree monad $(M\semifree,\eta\semifree,\mu\semifree)$.
In particular, if $M$ is a finitary monad, then its semifree monad $M\semifree$ is finitary too.

Before we state the theorem, we give some intuitions for the presentation of $(M\semifree,\eta\semifree,\mu\semifree)$.
Recall that $M\semifree = X + MX$ and the unit $\eta\semifree_X : X \to X + MX$ is the left injection.
In terms of presentation, this means that the left copy of $X$ becomes the ``new'' set of variables.
As a consequence, the ``old'' set of variables $\eta_X(X) \subseteq MX$ is now free in $M\semifree$ of the constraints, such as the unit laws \eqref{eqn:unit_axioms} and \eqref{eqn:unit_axioms2}, that it had in $M$.
The inclusion of $X$ via $\eta_X$ in $M\semifree X$ corresponds to a new unary operation $(\newa : 1)$ in the presentation of $M\semifree$.
On the semantic level, suppose we have an $M\semifree$-algebra $\gamma : M\semifree X \to X$.
By Theorem 3.4 in \cite{Petrisan_Sarkis_2021} and by looking at its proof, we see that $\gamma$ must be of the form  $[\Id_X,\alpha]$ where $\alpha : MX \to X$ is an $M$-semialgebra.
Notice the following:
\begin{equation} \label{eqn:a_etaX_a=a}
    \alpha \circ \eta_X \circ \alpha 
    \stackrel{\eta \text{ nat.}}{=}
    \alpha \circ M\alpha \circ \eta_{MX}
    \stackrel{\eqref{eqn:associativity_axiom2}}{=}
    \alpha \circ \mu_X \circ \eta_{MX}
    \stackrel{\eqref{eqn:unit_axioms}}{=}
    \alpha.
\end{equation}
Hence, $\alpha \circ \eta_X$ is an idempotent.
This map will be our choice for the interpretation of the new symbol $(\newa : 1)$.

\begin{definition}\label{def:Sigmas_Es}
    Given an algebraic theory $(\Sigma,E)$, we define a new algebraic theory $(\Sigma\semifree,E\semifree)$ by $\Sigma_{}\semifree \defeq \Sigma_{} \uplus \set{\newa:1}$ and
    $E_{}\semifree$ containing the following:
    \begin{align}
        \newa \newa v_1 &= \newa v_1, \label{eqn:idempotency} \\
        \newa (\op(v_1, \ldots, v_n)) &= \op(v_1,\ldots,v_n), \label{eqn:a_in_front_disappear} \\
        \op(\newa v_1, \ldots, \newa v_n) &= \op(v_1, \ldots, v_n), \label{eqn:a_inside_disappear} \\
        t(\newa v_1, \ldots, \newa v_n) &= s(\newa v_1, \ldots, \newa v_n), \label{eqn:a_inside_terms_disappear}
    \end{align}
    for all $(\op : n) \in \Sigma_{}$ and $\left(t(v_1,\ldots,v_n)=s(v_1,\ldots,v_n)\right) \in E_{}$.
\end{definition}

We have the trivial fact that an algebraic theory $(\Sigma,E)$ is an algebraic presentation of its free monad $\freemonad{\Sigma,E}$.
The next theorem states that the algebraic theory $(\Sigma\semifree,E\semifree)$ of \cref{def:Sigmas_Es} is an algebraic presentation of $\freemonad{\Sigma,E}\semifree$, the semifree monad on $\freemonad{\Sigma,E}$.

\begin{theorem} \label{thm:big_theorem}
    $(\Sigma\semifree, E\semifree)$ is an algebraic presentation of $(\freemonad{\Sigma,E}\semifree, \freemonadunit\semifree, \freemonadmult\semifree)$.
\end{theorem}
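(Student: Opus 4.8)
The plan is to apply \cref{lem:alg-pres-equiv}, which reduces the statement to producing a concrete isomorphism $\EM{\freemonad{\Sigma,E}\semifree} \cisom \catalg{\Sigma\semifree,E\semifree}$. Writing $T = \freemonad{\Sigma,E}$, I would first invoke the isomorphism $\EM{T\semifree} \cisom \EMs{T}$ of Petri\c{s}an and Sarkis~\cite{Petrisan_Sarkis_2021}, so that it suffices to build a concrete isomorphism $\EMs{T} \cisom \catalg{\Sigma\semifree,E\semifree}$. Since $T$ is literally the free monad, I can use the concrete description of $TX$ as the set of $E$-congruence classes $\ov{t}$ of $\Sigma$-terms $t \in \terms{\Sigma}{X}$, with $\eta_X(x) = \ov x$ and $\mu_X$ given by flattening. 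Both functors I construct will be the identity on carriers, so checking concreteness reduces to verifying that a function $f : X \to Y$ is a $T$-semialgebra homomorphism if and only if it is a $(\Sigma\semifree,E\semifree)$-algebra homomorphism.

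For the direction $\EMs{T} \to \catalg{\Sigma\semifree,E\semifree}$, given a semialgebra $\alpha : TX \to X$ I set $\Brackets{\newa}(x) \defeq \alpha(\eta_X x)$ and $\Brackets{\op}(x_1,\dots,x_n) \defeq \alpha(\ov{\op(x_1,\dots,x_n)})$ for $(\op:n)\in\Sigma$. Equations \eqref{eqn:idempotency} and \eqref{eqn:a_in_front_disappear} are then immediate from the identity $\alpha\circ\eta_X\circ\alpha = \alpha$ already recorded in \eqref{eqn:a_etaX_a=a}. The substantive point is a lemma, proved by induction on the $\Sigma$-term $t(v_1,\dots,v_n)$, stating that
\[
    \Brackets{t}(\alpha(\eta_X x_1),\dots,\alpha(\eta_X x_n)) = \alpha(\ov{t(x_1,\dots,x_n)}),
\]
whose inductive step applies the associativity axiom \eqref{eqn:associativity_axiom2} to the element $\ov{\op(\ov{t_1(v_1,\dots,v_n)},\dots,\ov{t_m(v_1,\dots,v_n)})}\in TTX$. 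Instantiating this lemma at single operations yields \eqref{eqn:a_inside_disappear}, and at the two sides of an equation $(t=s)\in E$ (together with $\ov{t(x_1,\dots,x_n)}=\ov{s(x_1,\dots,x_n)}$ in $TX$) yields \eqref{eqn:a_inside_terms_disappear}.

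For the reverse direction, given a $(\Sigma\semifree,E\semifree)$-algebra $(X,\Brackets{\cdot})$ I define $\alpha : TX\to X$ on representatives by $\alpha(\ov t) \defeq \Brackets{t}_{\Brackets{\newa}}$, the unique homomorphic extension (in the sense of \eqref{eqn:def_[[]]_variables}--\eqref{eqn:def_[[]]_operations}) that interprets each leaf $x$ as $\Brackets{\newa}(x)$. I expect the main obstacle here to be \emph{well-definedness}: one must check that $\alpha$ respects the $E$-congruence. For this I would first observe, using \eqref{eqn:idempotency} and \eqref{eqn:a_in_front_disappear}, that $\Brackets{\newa}$ fixes the whole image of $\alpha$; a substitution lemma then shows that for any substitution $\sigma$ and any $E$-instance $\sigma(t)=\sigma(s)$ one has $\alpha(\ov{\sigma(t)}) = \Brackets{t(\newa v_1,\dots,\newa v_n)}_g = \Brackets{s(\newa v_1,\dots,\newa v_n)}_g = \alpha(\ov{\sigma(s)})$, where $g(v_i) \defeq \alpha(\ov{\sigma(v_i)})$ and the middle equality is exactly \eqref{eqn:a_inside_terms_disappear}. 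Associativity of $\alpha$ then follows by induction on the term structure of an element $\ov\tau\in TTX$: writing the two sides as $\alpha(\mu_X\ov\tau)=\alpha(\ov{\tau[u/\ov u]})$ and $\alpha(T\alpha\,\ov\tau)=\alpha(\ov{\tau[\alpha(\ov u)/\ov u]})$, the base case uses that $\Brackets{\newa}$ fixes the image of $\alpha$, and the inductive step is purely homomorphic.

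Finally, I would check that the two assignments are mutually inverse on objects --- one direction is the lemma of the second paragraph, the other uses \eqref{eqn:a_inside_disappear} to recover $\Brackets{\op}$ from $\Brackets{\op}(\newa v_1,\dots,\newa v_n)$ --- and that both preserve homomorphisms, the latter again by an easy induction on terms using naturality of $\eta$ and the description of $\mu_X$ and $Tf$ from \cref{lem:homom_by_nat_of_mu}. This establishes $\EMs{T}\cisom\catalg{\Sigma\semifree,E\semifree}$ and hence the theorem. The crux of the whole argument is the interplay between the associativity axiom and the equations of \cref{def:Sigmas_Es} in the two inductive lemmas, and in particular the well-definedness check, where all four families of equations in $E\semifree$ are simultaneously needed.
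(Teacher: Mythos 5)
Your proposal is correct and follows the paper's overall architecture: reduce via \cref{lem:alg-pres-equiv} and the Petri\c{s}an--Sarkis isomorphism $\EM{T\semifree}\cisom\EMs{T}$ to constructing identity-on-carrier functors $\EMs{T}\rightleftarrows\catalg{\Sigma\semifree,E\semifree}$, with the same two assignments ($\Parens{\newa}\defeq\alpha\circ\eta_X$ and $\Parens{\op}\defeq\alpha\circ\freeinterX{\op}\circ\eta_X^n$ one way, $\alpha(\ov{t})\defeq\Parens{t}_{\Parens{\newa}}$ the other) and the same mutual-inverse and homomorphism checks. Where you genuinely diverge is in the two technical pillars. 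First, your forward induction hypothesis $\Brackets{t}(\Brackets{\newa}x_1,\dots,\Brackets{\newa}x_n)=\alpha(\ov{t(x_1,\dots,x_n)})$ holds for \emph{all} terms, variables included, whereas the paper's \cref{lem:generalisation_of_def} is restricted to depth at least $1$; this lets you read off \eqref{eqn:a_inside_disappear} and \eqref{eqn:a_inside_terms_disappear} directly and dispenses with the three-way case analysis on variables in the paper's proof of \cref{lem:G_well_defined_equations_satisfied}. Second, and more substantially, you prove well-definedness of $\alpha$ (the paper's \cref{lem:well_definedness_of_a}) semantically: the kernel of the homomorphic evaluation $t\mapsto\Brackets{t}_{\Brackets{\newa}}$ on $\terms{\Sigma}{X}$ is automatically a congruence, so it suffices to check the generating $E$-instances, and these follow from \eqref{eqn:a_inside_terms_disappear} together with your observation (from \eqref{eqn:idempotency} and \eqref{eqn:a_in_front_disappear}) that $\Brackets{\newa}$ fixes every value of the evaluation. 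The paper instead proves the purely syntactic \cref{lem:deduction_tree_adaptation} ($E\vdash t=s$ implies $E\semifree\vdash t(\newa v_1,\dots,\newa v_n)=s(\newa v_1,\dots,\newa v_n)$) by transforming equational-logic deduction trees, with \cref{lem:generalisation_eq_23} absorbing the delicate substitution-rule case, and then appeals to soundness. Your route is shorter and avoids proof theory entirely --- your ``fixes the image'' observation plays exactly the role that \cref{lem:generalisation_eq_23} plays in the paper's substitution case --- while the paper's route yields a standalone derivability result relating the two theories that is reusable beyond this one well-definedness check. When writing this up, do state explicitly the one step you leave implicit: checking single substitution instances of $E$ suffices precisely because the kernel of a homomorphism out of the term algebra is a congruence and therefore contains the smallest congruence generated by those instances.
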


The proof of this theorem is the goal of the rest of \cref{sec:theorem}.
As a direct consequence, we have the following corollary.
It was originally formulated as Conjecture 5.4 in \cite{Petrisan_Sarkis_2021} by Petri\c{s}an and Sarkis.

\begin{corollary} \label{thm:conjecture}
    If $(\Sigma,E)$ is an algebraic presentation of a monad $(M,\eta,\mu)$,
    then $(\Sigma\semifree,E\semifree)$ is an algebraic presentation of $(M\semifree,\eta\semifree,\mu\semifree)$, the semifree monad on $M$. 
\end{corollary}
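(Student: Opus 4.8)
The plan is to deduce the corollary from \cref{thm:big_theorem} by reducing the arbitrary presented monad $M$ to the free monad of its presentation, exploiting that the notion of algebraic presentation (\cref{def:algebraic_presentation}) is defined purely in terms of a monad isomorphism and is therefore invariant under monad isomorphism. Writing $T \defeq \freemonad{\Sigma,E}$, the hypothesis that $(\Sigma,E)$ presents $M$ means precisely that $T \isom M$ as monads, so every assertion about $M\semifree$ can be transported from the already-established statement about $T\semifree$.

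Concretely, I would carry out four steps. First, unfold the hypothesis: since $(\Sigma,E)$ is an algebraic presentation of $M$, \cref{def:algebraic_presentation} supplies a monad isomorphism $T \isom M$. Second, apply \cref{lem:monads_isomorphic_imply_semifree_monads_isomorphic} to this isomorphism to obtain a monad isomorphism of the semifree monads $T\semifree \isom M\semifree$ (note that this is an isomorphism of full monad structures, matching units and multiplications, not merely of functor parts). Third, invoke \cref{thm:big_theorem}, which states exactly that $(\Sigma\semifree, E\semifree)$ presents $T\semifree$, i.e.\ $\freemonad{\Sigma\semifree,E\semifree} \isom T\semifree$. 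Finally, compose the two isomorphisms to obtain $\freemonad{\Sigma\semifree,E\semifree} \isom T\semifree \isom M\semifree$; by \cref{def:algebraic_presentation} this is precisely the claim that $(\Sigma\semifree, E\semifree)$ is an algebraic presentation of $M\semifree$.

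Once \cref{thm:big_theorem} is available, this derivation is essentially a formality: its only ingredients are the isomorphism-invariance of presentations and the fact (\cref{lem:monads_isomorphic_imply_semifree_monads_isomorphic}) that the semifree construction preserves monad isomorphisms, which the excerpt has already verified by checking that $\sigma\semifree$ and $\tau\semifree$ are mutually inverse. Accordingly, I expect no genuine obstacle at the level of the corollary itself. The real difficulty lives entirely in \cref{thm:big_theorem}. There, the hard part is not the reduction but establishing the concrete isomorphism $\EM{T\semifree} \cisom \catalg{\Sigma\semifree,E\semifree}$, which via the Petri\c{s}an--Sarkis correspondence $\EM{T\semifree} \cisom \EMs{T}$ reduces to matching $T$-semialgebras with $(\Sigma\semifree,E\semifree)$-algebras. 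The delicate step I would anticipate is verifying, in the passage from a $(\Sigma\semifree,E\semifree)$-algebra back to a semialgebra, that the induced structure map $\alpha : TX \to X$ satisfies the associativity square \eqref{eqn:associativity_axiom2}; this should require an induction on term depth together with careful use of the four equation schemes \eqref{eqn:idempotency}--\eqref{eqn:a_inside_terms_disappear} defining $E\semifree$, in particular ensuring that the idempotent $\alpha \circ \eta_X$ interpreting $\newa$ interacts coherently with the old operations of $\Sigma$.
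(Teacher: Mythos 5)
Your proposal is correct and follows exactly the paper's own proof: unfold the hypothesis into a monad isomorphism $\freemonad{\Sigma,E} \isom M$, apply \cref{lem:monads_isomorphic_imply_semifree_monads_isomorphic} to get $\freemonad{\Sigma,E}\semifree \isom M\semifree$, invoke \cref{thm:big_theorem} for $\freemonad{\Sigma\semifree,E\semifree} \isom \freemonad{\Sigma,E}\semifree$, and compose. Your closing remarks about where the genuine difficulty lies (the concrete isomorphism $\EMs{T} \cisom \catalg{\Sigma\semifree,E\semifree}$, especially the associativity of the induced $\alpha$) also accurately reflect the structure of the paper's argument.
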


\begin{proof}
    Assume that we have a monad isomorphism $\freemonad{\Sigma,E} \isom M$.
    By \cref{lem:monads_isomorphic_imply_semifree_monads_isomorphic}, their semifree monads are also isomorphic $\freemonad{\Sigma,E}\semifree \isom M\semifree$.
    By \cref{thm:big_theorem}, $\freemonad{\Sigma,E}\semifree \isom \freemonad{\Sigma\semifree,E\semifree}$.
    Hence, $M\semifree \isom \freemonad{\Sigma\semifree,E\semifree}$, which means that $(\Sigma\semifree,E\semifree)$ is an algebraic presentation of $M\semifree$.
\end{proof}

We will need a few technical lemmas.
The next one shows that equations $\eqref{eqn:a_in_front_disappear}$ and $\eqref{eqn:a_inside_disappear}$ extend inductively to all terms of depth at least $1$.

\begin{lemma} \label{lem:generalisation_eq_23}
    For all $t \in \terms{\Sigma}{\variables}$ of depth at least $1$,
    an $\Sigma_{}\semifree$-algebra that satisfies \eqref{eqn:idempotency}-\eqref{eqn:a_inside_disappear} also satisfies 
    \begin{align} 
          \newa t(v_1,\ldots,v_n) &= t(v_1,\ldots,v_n), \text{ and}
          \label{eqn:generalisation2} \\
          t(\newa v_1, \ldots, \newa v_n) &= t(v_1,\ldots,v_n).
          \label{eqn:generalisation3}
    \end{align}
\end{lemma}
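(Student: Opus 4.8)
The plan is to prove the two claimed equations \eqref{eqn:generalisation2} and \eqref{eqn:generalisation3} separately. The first is essentially immediate, while the second calls for an induction on the depth $\cp(t)$ that feeds on the first.

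For \eqref{eqn:generalisation2} I would observe that any $t \in \terms{\Sigma}{\variables}$ of depth at least $1$ has an outermost operation symbol, say $t = \op(t_1, \ldots, t_m)$ with $(\op : m) \in \Sigma$. Since the algebra satisfies \eqref{eqn:a_in_front_disappear} under every variable assignment, I may instantiate that axiom at the assignment sending each $v_i$ to the interpretation of the subterm $t_i$; this is exactly the substitution instance $\newa \op(t_1, \ldots, t_m) = \op(t_1, \ldots, t_m)$, that is, $\newa t = t$. No induction is needed, and constants fall under the case $m = 0$, where \eqref{eqn:a_in_front_disappear} already reads $\newa c = c$.

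For \eqref{eqn:generalisation3}, writing $t^\newa$ for the term obtained by replacing every variable $v_i$ in $t$ by $\newa v_i$, I would argue by induction on $\cp(t)$. In the base case $\cp(t) = 1$ we have $t = \op(w_1, \ldots, w_m)$ with each $w_j$ a variable, so $t^\newa = \op(\newa w_1, \ldots, \newa w_m)$ and a single instance of \eqref{eqn:a_inside_disappear} yields $t^\newa = t$. In the inductive step, $t = \op(t_1, \ldots, t_m)$ and $t^\newa = \op(t_1^\newa, \ldots, t_m^\newa)$, and the idea is to rewrite each argument into the shape $\newa(\cdot)$: if $t_j = v_i$ is a variable then $t_j^\newa = \newa v_i$ already has this shape, while if $\cp(t_j) \geq 1$ then the induction hypothesis gives $t_j^\newa = t_j$ and the already-proved \eqref{eqn:generalisation2} gives $t_j = \newa t_j$, hence $t_j^\newa = \newa t_j$. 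In both cases $t_j^\newa = \newa u_j$ with $u_j = t_j$, so $t^\newa = \op(\newa u_1, \ldots, \newa u_m)$, and \eqref{eqn:a_inside_disappear} collapses this in one step to $\op(u_1, \ldots, u_m) = \op(t_1, \ldots, t_m) = t$.

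The main obstacle is precisely this inductive step of \eqref{eqn:generalisation3}: the arguments $t_j^\newa$ are not uniformly of the form $\newa(\cdot)$ demanded by \eqref{eqn:a_inside_disappear}, because compound subterms shed their outer $\newa$ under the induction hypothesis. The key device is to invoke \eqref{eqn:generalisation2} to \emph{reintroduce} a leading $\newa$ on exactly those compound arguments, making the whole argument tuple uniform so that \eqref{eqn:a_inside_disappear} applies once. All of this reasoning takes place at the level of interpretations in a fixed $\Sigma\semifree$-algebra (equivalently, via closure of equational entailment under substitution), so only the hypotheses \eqref{eqn:idempotency}--\eqref{eqn:a_inside_disappear} are used, as the statement requires.
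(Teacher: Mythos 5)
Your proof is correct, and it differs from the paper's in a small but genuine way. The paper proves \eqref{eqn:generalisation2} and \eqref{eqn:generalisation3} by a single simultaneous induction on $t$: in the inductive step for \eqref{eqn:generalisation3}, after applying the induction hypothesis to the compound arguments, it re-attaches a leading $\newa$ to \emph{every} argument by using \eqref{eqn:a_inside_disappear} in reverse (which turns the variable arguments $\newa w_j$ into $\newa \newa w_j$), then invokes idempotency \eqref{eqn:idempotency} to collapse the doubled $\newa$'s, and finally applies \eqref{eqn:a_inside_disappear} forwards. You instead observe that \eqref{eqn:generalisation2} needs no induction at all --- it is a substitution instance of \eqref{eqn:a_in_front_disappear}, since the equations hold under every assignment of carrier elements to the variables, in particular the assignment sending $v_i$ to the interpretation of $t_i$ --- and then, in the induction for \eqref{eqn:generalisation3}, you use the already-established \eqref{eqn:generalisation2} to re-introduce $\newa$ on the compound arguments only, so that a single forward application of \eqref{eqn:a_inside_disappear} finishes the step. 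The net effect is that your argument never uses idempotency \eqref{eqn:idempotency}: the conclusion already follows from \eqref{eqn:a_in_front_disappear} and \eqref{eqn:a_inside_disappear} alone, a slight sharpening that the paper's route does not expose (the paper needs \eqref{eqn:idempotency} only to undo the extra $\newa$ its reverse step creates on variable arguments). Both proofs share the essential mechanism --- reasoning at the level of interpretations so that axioms can be instantiated at interpretations of subterms, and homogenising the argument tuple to the shape $\newa(\cdot)$ before collapsing it --- so the difference is one of economy rather than of strategy.
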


In \cref{def:Sigmas_Es}, equation \eqref{eqn:a_inside_terms_disappear} tells us that
equations from $E_{}$ give rise to equations in $E\semifree_{}$ by substituting $v \mapsto \newa v$, for all variables $v$.
The next lemma indicates that the same procedure can be done for theorems, i.e., that a theorem deducible from $E_{}$ becomes, after the substitution $v \mapsto \newa v$, a theorem deducible from $E_{}\semifree$.

\begin{lemma} \label{lem:deduction_tree_adaptation}
    For all terms $t(v_1,\ldots,v_n),s(v_1,\ldots,v_n) \in \terms{\Sigma}{\variables}$,
    \[
        E_{} \vdash t(v_1,\ldots,v_n) = s(v_1,\ldots,v_n)
        \ \Longrightarrow \ 
        E_{}\semifree \vdash t(\newa v_1,\ldots, \newa v_n) = s(\newa v_1,\ldots,\newa v_n).
    \]
\end{lemma}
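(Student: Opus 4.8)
The plan is to induct on the structure of an equational-logic derivation witnessing $E \vdash t = s$. To make the induction go through, I would prove the apparently more general statement that for \emph{all} terms $t, s \in \terms{\Sigma}{\variables}$ (in arbitrarily many variables), $E \vdash t = s$ implies $E\semifree \vdash \phi(t) = \phi(s)$, where $\phi : \terms{\Sigma}{\variables} \to \terms{\Sigma\semifree}{\variables}$ is the substitution $v \mapsto \newa v$ extended as a $\Sigma$-homomorphism. The key structural feature of $\phi$ is that it touches only variables and leaves operation symbols untouched, i.e.\ $\phi(\op(t_1, \ldots, t_n)) = \op(\phi(t_1), \ldots, \phi(t_n))$ for every $(\op : n) \in \Sigma$; this is exactly what makes $\phi$ interact cleanly with the congruence rule. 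The original statement is then the special case in which $t$ and $s$ share the variables $v_1, \ldots, v_n$.

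The structural rules of equational logic are immediate. For reflexivity there is nothing to prove; symmetry and transitivity follow by applying the same rule to the images under $\phi$ supplied by the induction hypothesis; and for congruence, if $E \vdash t_i = s_i$ for all $i$ yields $E \vdash \op(t_1,\ldots,t_n) = \op(s_1,\ldots,s_n)$, then the induction hypothesis gives $E\semifree \vdash \phi(t_i) = \phi(s_i)$, and since $\phi$ commutes with $\op$, a single congruence step in $E\semifree$ produces $E\semifree \vdash \phi(\op(t_1,\ldots,t_n)) = \phi(\op(s_1,\ldots,s_n))$.

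The only substantial case, and the main obstacle, is the leaf rule, where the equation is a substitution instance of an axiom: $(t = s) \in E$ and we derive $t[\rho] = s[\rho]$ for a substitution $\rho : v_i \mapsto u_i$. By \eqref{eqn:a_inside_terms_disappear}, the equation $\phi(t) = \phi(s)$, that is $t[\newa v_i / v_i] = s[\newa v_i / v_i]$, is itself an axiom of $E\semifree$; substituting $v_i \mapsto u_i$ into it gives $E\semifree \vdash t[\newa(u_i)/v_i] = s[\newa(u_i)/v_i]$. This is \emph{not} yet the goal, because $\phi(t[\rho]) = \phi(s[\rho])$ unfolds to $t[\phi(u_i)/v_i] = s[\phi(u_i)/v_i]$, and in general $\newa(u_i) \neq \phi(u_i)$ syntactically: applying $\newa$ to the whole subterm $u_i$ differs from pushing $\newa$ down onto the variables of $u_i$. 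The crucial observation is that the two are nonetheless provably equal in $E\semifree$. If $u_i$ is a variable, then $\newa(u_i)$ and $\phi(u_i)$ coincide literally. If $u_i$ has depth at least $1$, then \eqref{eqn:generalisation2} gives $E\semifree \vdash \newa(u_i) = u_i$ and \eqref{eqn:generalisation3} gives $E\semifree \vdash \phi(u_i) = u_i$, whence $E\semifree \vdash \newa(u_i) = \phi(u_i)$. Replacing equals by equals via congruence then bridges the discrepancy, yielding $E\semifree \vdash t[\phi(u_i)/v_i] = t[\newa(u_i)/v_i] = s[\newa(u_i)/v_i] = s[\phi(u_i)/v_i]$, which is exactly $E\semifree \vdash \phi(t[\rho]) = \phi(s[\rho])$.

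I expect this bridging step, reconciling $\newa$ applied \emph{outside} a substituted term with $\newa$ \emph{pushed down} to its variables, to be the one place where \cref{lem:generalisation_eq_23} is genuinely needed, and the remaining structural cases to be routine bookkeeping. A minor point to get right is the exact formulation of the deduction system assumed: if substitution is taken as a standalone inference rule rather than folded into axiom instances, the same computation $\phi(t[\rho]) = t[\phi(u_i)/v_i]$ together with the equalities $\newa(u_i) = \phi(u_i)$ handles it identically, so the argument is insensitive to this choice.
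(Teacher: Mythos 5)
Your proposal is correct and takes essentially the same route as the paper's proof: transform the derivation rule by rule, dispatch the structural rules trivially, and isolate substitution (equivalently, axiom instantiation) as the one nontrivial case, reconciling $\newa$ applied \emph{outside} a substituted term with $\newa$ \emph{pushed down} to its variables via \cref{lem:generalisation_eq_23} together with idempotency \eqref{eqn:idempotency}. The only divergence is bookkeeping: you bridge argument-by-argument, deriving $E\semifree \vdash \newa u_i = \phi(u_i)$ for each substituted term and then appealing to the admissible replacement-of-equals rule, whereas the paper instead does a case distinction on the shape of $t'$ and $t''$, so that \eqref{eqn:generalisation3} followed by a single primitive substitution step suffices and no derived congruence rule is needed.
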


The last lemma is purely technical.
Its reasoning appears multiple times in different proofs.
Stating it here allows us to prove it once for all.

\begin{lemma} \label{lem:formula_a_opX}
    Let $(\Sigma,E)$ be an algebraic theory with free monad $(\freemonad{\Sigma,E},\freemonadunit,\freemonadmult)$.
    For every $\freemonad{\Sigma,E}$-semialgebra $\alpha: \freemonad{\Sigma,E} X \to X$ and operation symbol $(\op:n) \in \Sigma$, we have
    \begin{equation} \label{eqn:formula_a_opX}
        \alpha \circ \freeinterX{\op} = \alpha \circ \freeinterX{\op} \circ \eta_X^n \circ \alpha^n.
    \end{equation}
\end{lemma}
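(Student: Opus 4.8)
The plan is to lift the $n$ arguments from $TX$ into $TTX$ along the unit, use this to factor the free interpretation $\freeinterX{\op}$ through the multiplication $\mu_X$, and then trade $\mu_X$ for $T\alpha$ by means of the associativity of the semialgebra. Throughout I write $T = \freemonad{\Sigma,E}$ with unit $\eta$ and multiplication $\mu$, and I recall from \cref{lem:homom_by_nat_of_mu} that both $\mu_X$ and $T\alpha$ are $(\Sigma,E)$-algebra homomorphisms $(TTX,\freeinterTX{\cdot}) \to (TX,\freeinterX{\cdot})$.

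The first step is to establish the factorization
\[
    \freeinterX{\op} = \mu_X \circ \freeinterTX{\op} \circ \eta_{TX}^n.
\]
Since $\mu_X$ is a homomorphism it commutes with the interpretation of $\op$, so for $w_1,\ldots,w_n \in TX$ one has $\mu_X\bigl(\freeinterTX{\op}(\eta_{TX}w_1,\ldots,\eta_{TX}w_n)\bigr) = \freeinterX{\op}(\mu_X\eta_{TX}w_1,\ldots,\mu_X\eta_{TX}w_n)$, and the monad unit law $\mu_X \circ \eta_{TX} = \id_{TX}$ collapses each argument back to $w_i$.

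I would then compose this identity on the left with $\alpha$ and apply the associativity axiom \eqref{eqn:associativity_axiom2} of the semialgebra, which reads $\alpha \circ \mu_X = \alpha \circ T\alpha$. This gives
\[
    \alpha \circ \freeinterX{\op} = \alpha \circ T\alpha \circ \freeinterTX{\op} \circ \eta_{TX}^n.
\]
Now I push $T\alpha$ inside: being a homomorphism, $T\alpha \circ \freeinterTX{\op} = \freeinterX{\op} \circ (T\alpha)^n$, and naturality of $\eta$ applied to $\alpha : TX \to X$ yields $T\alpha \circ \eta_{TX} = \eta_X \circ \alpha$, hence $(T\alpha)^n \circ \eta_{TX}^n = \eta_X^n \circ \alpha^n$. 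Substituting turns the right-hand side into $\alpha \circ \freeinterX{\op} \circ \eta_X^n \circ \alpha^n$, which is exactly \eqref{eqn:formula_a_opX}.

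The only real idea is the lift: one must insert $\eta_{TX}$ on each argument precisely so that $\mu_X$ can absorb it, after which associativity converts $\alpha \circ \mu_X$ into $\alpha \circ T\alpha$ and the two homomorphism facts of \cref{lem:homom_by_nat_of_mu} do the rest. This is the operation-indexed analogue of the computation \eqref{eqn:a_etaX_a=a}, and once the factorization is in place everything else is routine naturality and the monad laws, so I do not expect a genuine obstacle.
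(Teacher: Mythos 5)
Your proposal is correct and follows essentially the same route as the paper's proof: insert $(\mu_X \circ \eta_{TX})^n = \id$ via the unit law, push $\mu_X$ through $\op$ using \cref{lem:homom_by_nat_of_mu}, trade $\alpha \circ \mu_X$ for $\alpha \circ T\alpha$ by associativity \eqref{eqn:associativity_axiom2}, push $T\alpha$ back through $\op$ by the same lemma, and finish with naturality of $\eta$. The only difference is presentational — you isolate the factorization $\freeinterX{\op} = \mu_X \circ \freeinterTX{\op} \circ \eta_{TX}^n$ as a first step, whereas the paper performs the identical rewriting pointwise in one chain of equalities.
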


We now tackle the proof of \cref{thm:big_theorem}.
For simplicity, we will denote in the rest of this section the free monad $\smash{(\freemonad{\Sigma,E},\freemonadunit,\freemonadmult)}$ simply by $\smash{(T,\eta,\mu)}$.
To prove that $(\Sigma\semifree,E\semifree)$ is an algebraic presentation of $T\semifree$, it suffices by \cref{lem:alg-pres-equiv} to prove that $\catalg{\Sigma\semifree,E\semifree} \cisom \EM{T\semifree}$.
Recall that $\EM{T\semifree} \cisom \EMs{T}$, by Theorem~3.4 in \cite{Petrisan_Sarkis_2021}, i.e., $T\semifree$-algebras are concretely isomorphic to $T$-semialgebras.
Therefore, it suffices to prove that $\EMs{T} \cisom \catalg{\Sigma\semifree, E\semifree}$.

\subsection{From $T$-semialgebras to $(\Sigma_{}\semifree, E_{}\semifree)$-algebras} \label{sec:forward_direction}
    
In the forward direction, suppose we have a semialgebra $\alpha: TX \to X$.
We want to obtain an $(\Sigma_{}\semifree,E_{}\semifree)$-algebra.
It will be constructed with carrier $X$, since we are aiming for a concrete isomorphism. 

\begin{definition} \label{def:forward_functor}
    We define the mapping
    \[
        G: \EMs{T} \to \catalg{\Sigma_{}\semifree, E_{}\semifree} \\
    \]
    by $G(X,\alpha) \defeq (X, \Parens{\cdot})$ on objects, where the interpretation $\Parens{\cdot}$ is defined on the operation symbols $\newa : 1$ and $(\op : n) \in \Sigma_{}$ as
    \begin{align}
        \Parens{\newa} &\defeq \left( X \xrightarrow{\eta_X} TX \xrightarrow{\alpha} X \right) \label{eqn:def_newa_interpretation}, \\
        \Parens{\op} &\defeq \left( X^n \xrightarrow{(\eta_X)^n} (TX)^n \xrightarrow{  \freeinterX{\op}} TX \xrightarrow{\alpha} X \right), \label{eqn:def_op_interpretation}
    \end{align}
    and $G(f) \defeq f$ on morphisms.
\end{definition}

The goal now is to demonstrate that $G$ is well-defined on objects and arrows. It then follows immediately that $G$ is a functor due to being essentially identity on arrows.
To this end, we first establish in \cref{lem:generalisation_of_def}, a property that generalises both \eqref{eqn:def_newa_interpretation} and \eqref{eqn:def_op_interpretation} into one formula.
Then, we show in \cref{lem:G_well_defined_equations_satisfied} that $G$ indeed outputs $(\Sigma\semifree_{},E\semifree_{})$-algebras.
Lastly, we show in \cref{lem:G_well_defined_functoriality} that $G$ outputs $(\Sigma\semifree,E\semifree)$-algebra homomorphisms, and hence is also well-defined on arrows.

\begin{lemma} \label{lem:generalisation_of_def}
    For all terms $t(v_1,\ldots,v_n) \in \terms{\Sigma}{\variables}$ of depth at least $1$, and all variable assignments $\sigma : \variables \to X$, we have
    \begin{equation} \label{eqn:generalisation_of_def}
       \Parens{t}_\sigma = \alpha \circ \freeinterX{t}_{\eta_X \circ \sigma}.
    \end{equation}
\end{lemma}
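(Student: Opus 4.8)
The plan is to prove \eqref{eqn:generalisation_of_def} by induction on the structure of $t$, restricted to terms of depth at least $1$. Both sides are built by the same recursion over the term, so an induction is natural, and the hypothesis $\cp(t) \geq 1$ is essential: at depth $0$ (a bare variable $v$) the claimed identity would read $\sigma(v) = \alpha(\eta_X(\sigma(v)))$, i.e.\ $\alpha \circ \eta_X = \id$, which holds only for genuine algebras and fails for semialgebras.

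For the base case $\cp(t) = 1$, we have $t = \op(v_{i_1},\ldots,v_{i_n})$ for some $(\op:n) \in \Sigma$ and variables $v_{i_1},\ldots,v_{i_n}$ (with $n=0$ covering a constant). Since $\freeinterX{v_{i_k}}_{\eta_X\circ\sigma} = \eta_X(\sigma(v_{i_k}))$, expanding $\Parens{\op}$ via \eqref{eqn:def_op_interpretation} makes both sides reduce literally to $\alpha(\freeinterX{\op}(\eta_X(\sigma(v_{i_1})),\ldots,\eta_X(\sigma(v_{i_n}))))$, so the base case is immediate.

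For the inductive step, write $t = \op(t_1,\ldots,t_n)$ and set $b_i \defeq \Parens{t_i}_\sigma$ and $a_i \defeq \freeinterX{t_i}_{\eta_X\circ\sigma}$. Expanding the left-hand side by the homomorphism property and \eqref{eqn:def_op_interpretation} gives $\Parens{t}_\sigma = \alpha(\freeinterX{\op}(\eta_X(b_1),\ldots,\eta_X(b_n)))$, while the right-hand side is $\alpha(\freeinterX{\op}(a_1,\ldots,a_n))$. I would then apply \cref{lem:formula_a_opX}, in the form \eqref{eqn:formula_a_opX}, to \emph{both} sides; this replaces each argument $z_i$ by $\eta_X(\alpha(z_i))$. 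It then suffices to prove, for every $i$, the per-argument identity $\alpha(\eta_X(b_i)) = \alpha(a_i)$: applying $\eta_X$ to it matches the two argument tuples, hence (after $\freeinterX{\op}$ and a final $\alpha$) the two sides.

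The main obstacle is that the arguments $t_i$ need not all have depth at least $1$; some may be variables, on which the induction hypothesis is unavailable (and, as noted above, false). This is exactly why the symmetric application of \cref{lem:formula_a_opX} to both sides is needed to reconcile the two kinds of arguments. I would close the per-argument identity by two cases. If $t_i = v_j$ is a variable, then $a_i = \eta_X(\sigma(v_j)) = \eta_X(b_i)$, so $\alpha(\eta_X(b_i)) = \alpha(a_i)$ holds trivially. If $\cp(t_i) \geq 1$, the induction hypothesis gives $b_i = \alpha(a_i)$, whence $\alpha(\eta_X(b_i)) = \alpha(\eta_X(\alpha(a_i))) = \alpha(a_i)$ by the semialgebra identity \eqref{eqn:a_etaX_a=a}, namely $\alpha \circ \eta_X \circ \alpha = \alpha$. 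This closes the induction.
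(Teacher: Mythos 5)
Your proof is correct, but it takes a genuinely different route from the paper's. The paper factors the result through two auxiliary lemmas proved in the appendix: it first establishes, for \emph{all} terms $t$ (variables included), the strengthened identity $\alpha \circ \eta_X \circ \Parens{t}_\sigma = \alpha \circ \freeinterX{t}_{\eta_X \circ \sigma}$ by a structural induction that needs no case distinction --- prefixing the left-hand side with $\alpha \circ \eta_X$ is precisely what makes the variable case true, the same observation you use to explain why the depth restriction is necessary --- and then observes that for terms of depth at least $1$ the prefix $\alpha \circ \eta_X$ can be dropped, a short computation from \eqref{eqn:a_etaX_a=a} requiring no induction. You instead keep the statement as it stands, induct only over terms of depth at least $1$, and absorb the difficulty into a per-argument case split inside the inductive step, made possible by applying \cref{lem:formula_a_opX} symmetrically to both sides. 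The two arguments rest on the same algebraic facts, namely \eqref{eqn:a_etaX_a=a} and the ability to insert $\eta_X \circ \alpha$ in front of the arguments of $\freeinterX{\op}$; but where you invoke \cref{lem:formula_a_opX} as a black box (legitimately: its proof is independent of the present lemma, so there is no circularity), the paper's induction re-derives that insertion inline, in both directions, which is why its equational chain is longer. What the paper's decomposition buys is an induction with no case analysis and a reusable intermediate statement valid for all terms; what yours buys is a single, shorter induction that needs no auxiliary statement beyond results the paper has already established.
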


\begin{lemma} \label{lem:G_well_defined_equations_satisfied}
    For all $T$-semialgebras $\alpha: TX \to X, G(X,\alpha)$ is a $(\Sigma\semifree,E\semifree)$-algebra.
\end{lemma}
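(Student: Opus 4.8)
The plan is to verify directly that $G(X,\alpha)=(X,\Parens{\cdot})$ satisfies each of the four families of equations \eqref{eqn:idempotency}--\eqref{eqn:a_inside_terms_disappear} making up $E\semifree$. Two facts get used repeatedly: the interpretations are $\Parens{\newa}=\alpha\circ\eta_X$ and $\Parens{\op}=\alpha\circ\freeinterX{\op}\circ\eta_X^n$, and the idempotency identity \eqref{eqn:a_etaX_a=a}, namely $\alpha\circ\eta_X\circ\alpha=\alpha$, which holds because $\alpha$ is a $T$-semialgebra.

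The first three equations are one-line manipulations of function composites. For \eqref{eqn:idempotency} I compute $\Parens{\newa}\circ\Parens{\newa}=\alpha\circ\eta_X\circ\alpha\circ\eta_X$, which collapses to $\alpha\circ\eta_X=\Parens{\newa}$ by a single use of \eqref{eqn:a_etaX_a=a}. For \eqref{eqn:a_in_front_disappear} the same identity turns $\Parens{\newa}\circ\Parens{\op}=\alpha\circ\eta_X\circ\alpha\circ\freeinterX{\op}\circ\eta_X^n$ into $\alpha\circ\freeinterX{\op}\circ\eta_X^n=\Parens{\op}$. For \eqref{eqn:a_inside_disappear} I expand $\Parens{\op}\circ(\Parens{\newa})^n=\alpha\circ\freeinterX{\op}\circ\eta_X^n\circ\alpha^n\circ\eta_X^n$ and then invoke \cref{lem:formula_a_opX} to rewrite $\alpha\circ\freeinterX{\op}\circ\eta_X^n\circ\alpha^n$ as $\alpha\circ\freeinterX{\op}$, leaving exactly $\Parens{\op}$.

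The real work is equation \eqref{eqn:a_inside_terms_disappear}: for each $(t,s)\in E$ and each assignment $\sigma:\variables\to X$ I must show $\Parens{t(\newa v_1,\ldots,\newa v_n)}_\sigma=\Parens{s(\newa v_1,\ldots,\newa v_n)}_\sigma$. By compositionality of term interpretation, the substituted term evaluates to $\Parens{t}_\tau$ for the assignment $\tau\defeq\Parens{\newa}\circ\sigma=\alpha\circ\eta_X\circ\sigma$ (and likewise for $s$). I would then establish, for every $\Sigma$-term $r$, the bridge identity $\Parens{r}_\tau=\alpha\circ\freeinterX{r}_{\eta_X\circ\tau}$: when $r$ has depth at least $1$ this is \cref{lem:generalisation_of_def}, and when $r=v_i$ is a single variable both sides equal $\alpha(\eta_X(\sigma(v_i)))$ after one application of \eqref{eqn:a_etaX_a=a}. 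Applying this to $t$ and to $s$ reduces the goal to comparing $\freeinterX{t}_{\eta_X\circ\tau}$ with $\freeinterX{s}_{\eta_X\circ\tau}$. Since $\eta_X\circ\tau$ is a variable assignment into $TX$ and the free algebra $(TX,\freeinterX{\cdot})$ is a genuine $(\Sigma,E)$-algebra, it validates $(t,s)\in E$, so these two evaluations agree; post-composing with $\alpha$ yields $\Parens{t}_\tau=\Parens{s}_\tau$, which is the claim.

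The \emph{main obstacle} I anticipate is the bookkeeping in this last case: one must keep the $\Sigma\semifree$-interpretation $\Parens{\cdot}$ on $X$ strictly distinct from the free-algebra interpretation $\freeinterX{\cdot}$ on $TX$, transporting between them only through \cref{lem:generalisation_of_def}. Its hypothesis that $r$ have depth at least $1$ is precisely why the pure-variable subcase cannot be folded into the same invocation and must be dispatched separately by \eqref{eqn:a_etaX_a=a}. Once evaluation of $t(\newa v_1,\ldots,\newa v_n)$ and $s(\newa v_1,\ldots,\newa v_n)$ has been pushed through $\alpha$ onto $TX$, the argument closes immediately, because satisfaction of $E$ is already built into $(TX,\freeinterX{\cdot})$.
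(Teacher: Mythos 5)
Your proof is correct, and for equations \eqref{eqn:idempotency}--\eqref{eqn:a_inside_disappear} it coincides with the paper's: the same one-line computations from \eqref{eqn:a_etaX_a=a}, and the same appeal to \cref{lem:formula_a_opX}. Where you genuinely diverge is the treatment of \eqref{eqn:a_inside_terms_disappear}. The paper evaluates both sides of $(t,s)\in E$ at the assignment $\eta_X\circ\sigma$, splits into three cases according to whether $t$ and $s$ are variables, and in the non-variable cases finishes by re-inserting the $\newa$'s via \cref{lem:generalisation_eq_23}, equation \eqref{eqn:generalisation3}. You instead absorb the $\newa$'s into the valuation at the outset: by the substitution lemma, $\Parens{t(\newa v_1,\ldots,\newa v_n)}_\sigma = \Parens{t}_\tau$ with $\tau = \Parens{\newa}\circ\sigma$, and then a single bridge identity $\Parens{r}_\tau = \alpha\circ\freeinterX{r}_{\eta_X\circ\tau}$ --- \cref{lem:generalisation_of_def} for depth at least $1$, a direct computation with \eqref{eqn:a_etaX_a=a} for variables --- reduces everything to satisfaction of $E$ in the free algebra $(TX,\freeinterX{\cdot})$. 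Your route buys uniformity: the case analysis is per term (variable or not) rather than per pair, so there are two cases instead of three, and \cref{lem:generalisation_eq_23} is not needed in this proof at all. The costs are two. First, you rely on compositionality of interpretation under substitution, which is standard but is nowhere stated in the paper and strictly requires its own small induction over terms. Second, your variable case of the bridge identity is valid only because $\tau$ lands in the image of the idempotent $\alpha\circ\eta_X$ (for an arbitrary assignment the identity fails when $r$ is a variable), so the argument depends on that special form of $\tau$; you do handle this correctly with one application of \eqref{eqn:a_etaX_a=a}, but it is the one step a reader should check carefully.
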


\begin{proof}
    We check that $(X,\Parens{\cdot}) \defeq G(X,\alpha)$ satisfies the equations in $E\semifree_{}$.
    Let $\sigma\colon \variables \to X$ be a variable assignment. 
    \begin{enumerate}[(i)]
        \item 
        For $E\semifree$-equations arising from \eqref{eqn:idempotency}, we have:
        \begin{align*}
            \Parens{\newa \newa v_1}_\sigma &= \Parens{\newa} \Parens{\newa} (\sigma v_1) \\
            &= (\alpha \circ \eta_X) \circ (\alpha \circ \eta_X) (\sigma v_1)
            \tag{by \eqref{eqn:def_newa_interpretation}} \\
            &= \alpha \circ \eta_X (\sigma v_1) 
            \tag{by \eqref{eqn:a_etaX_a=a}} \\
            &= \Parens{\newa v_1}_\sigma.
        \end{align*}
        
        \item
        For $E\semifree$-equations arising from \eqref{eqn:a_in_front_disappear}.
        \begin{align*}
            \Parens{\newa ( &\op (v_1, \ldots, v_n))}_\sigma  \\
            &= \Parens{\newa} \circ \Parens{\op} (\sigma v_1,\ldots,\sigma v_n) \\
            &= \alpha \circ \eta_X \circ \alpha \circ \freeinterX{\op} \circ (\eta_X)^n (\sigma v_1,\ldots,\sigma v_n) 
            \tag{by \eqref{eqn:def_newa_interpretation}, \eqref{eqn:def_op_interpretation}} \\
            &= \alpha \circ \freeinterX{\op} \circ (\eta_X)^n (\sigma v_1,\ldots,\sigma v_n)
            \tag{by \eqref{eqn:a_etaX_a=a}} \\ %
            &= \Parens{\op(v_1,\ldots,v_n)}_\sigma \tag{by \eqref{eqn:def_op_interpretation}}
        \end{align*}
        
        
        \item For $E\semifree$-equations arising from \eqref{eqn:a_inside_disappear}, we have:
        \begin{align*}
            \Parens{\op &(\newa v_1,  \ldots,\newa v_n)}_\sigma \\
            &= \Parens{\op} ( \Parens{\newa v_1}_\sigma, \ldots, \Parens{\newa v_n }_\sigma ) \\
            &= \alpha \circ \freeinterX{\op} \circ \eta_X^n (\alpha \circ \eta_X (\sigma v_1), \ldots, \alpha \circ \eta_X (\sigma v_n)) \tag{by \eqref{eqn:def_newa_interpretation}, \eqref{eqn:def_op_interpretation}} \\
            &= \alpha \circ \freeinterX{\op} \circ \eta_X^n \circ \alpha^n \circ \eta_X^n (\sigma v_1,\ldots,\sigma v_n) \\
            &= \alpha \circ \freeinterX{\op} \circ \eta_X^n (\sigma v_1,\ldots,\sigma v_n) \tag{by \eqref{eqn:formula_a_opX}} \\
            &= \Parens{\op(v_1,\ldots,v_n)}_\sigma. \tag{by \eqref{eqn:def_op_interpretation}} 
        \end{align*}
        
        \item For $E\semifree$-equations arising from \eqref{eqn:a_inside_terms_disappear}, let
        $(t(v_1,\ldots,v_n) = s(v_1,\ldots,v_n)) \in E_{}$.
        We have so far verified \eqref{eqn:idempotency}-\eqref{eqn:a_inside_disappear} and we can thus invoke \cref{lem:generalisation_eq_23}. 
        Since $(TX,\freeinterX{\cdot})$ is a $(\Sigma,E)$-algebra, and $\eta_X \circ \sigma : \variables \to TX$ is a variable assignement, we have $\freeinterX{t}_{\eta_X \circ \sigma} = \freeinterX{s}_{\eta_X \circ \sigma}$.
        We distinguish cases:
        \vspace{.1em}
        
        \begin{itemize}
            \item 
            Suppose $t$ and $s$ are variables, $v_1$ and $v_2$ respectively:
            \begin{align*}
                \freeinterX{v_1}_{\eta_X \circ \sigma} = \freeinterX{v_2}_{\eta_X \circ \sigma}
                &\Leftrightarrow \eta_X \circ \sigma(v_1) = \eta_X \circ \sigma(v_2) \tag{def \eqref{eqn:def_[[]]_variables}} \\
                &\Rightarrow \alpha \circ \eta_X \circ \sigma(v_1) = \alpha \circ \eta_X \circ \sigma(v_2) \\
                &\Rightarrow \Parens{\newa v_1}_\sigma = \Parens{ \newa v_2}_\sigma. \tag{def \eqref{eqn:def_newa_interpretation}}
            \end{align*}
            
            \item
            Suppose one is a variable and the other is not, say w.l.o.g.~that $t$ is $v_1$:
            \begin{align*}
                \freeinterX{v_1}_{\eta_X \circ \sigma} = \freeinterX{s}_{\eta_X \circ \sigma}
                &\Leftrightarrow \eta_X \circ \sigma(v_1) = \freeinterX{s}_{\eta_X \circ \sigma} 
                \tag{def \eqref{eqn:def_[[]]_variables}} \\
                &\Rightarrow \alpha \circ \eta_X \circ \sigma(v_1) = \alpha \circ \freeinterX{s}_{\eta_X \circ \sigma} \\
                &\Rightarrow \Parens{\newa v_1}_\sigma = \Parens{ s(v_1,\ldots,v_n) }_\sigma
                \tag{def \eqref{eqn:def_newa_interpretation}; \eqref{eqn:generalisation_of_def}} \\
                &\Rightarrow \Parens{\newa v_1}_\sigma = \Parens{ s(\newa v_1,\ldots, \newa v_n) }_\sigma. \tag{by \eqref{eqn:generalisation3}}
            \end{align*}
            
            \item
            Suppose neither of $t$ and $s$ is a variable:
            \begin{align*}
                \freeinterX{t}_{\eta_X \circ \sigma} \hspace{-.2em}= \freeinterX{s}_{\eta_X \circ \sigma} \hspace{-.2em}
                &\Rightarrow \alpha \circ \freeinterX{t}_{\eta_X \circ \sigma} = \alpha \circ \freeinterX{s}_{\eta_X \circ \sigma} \\
                &\Rightarrow \Parens{t(v_1,\ldots,v_n)}_\sigma = \Parens{s(v_1,\ldots,v_n)}_\sigma 
                \tag{by \eqref{eqn:generalisation_of_def}} \\
                &\Rightarrow \Parens{t(\newa v_1, \ldots, \newa v_n)}_\sigma = \Parens{s(\newa v_1, \ldots, \newa v_n)}_\sigma.
                \tag{by \eqref{eqn:generalisation3}}
            \end{align*}
        \end{itemize}
    \end{enumerate}
\end{proof}

\begin{lemma}\label{lem:G_well_defined_functoriality}
     $G$ maps $T$-semialgebra homomorphisms to $(\Sigma\semifree,E\semifree)$-algebra homomorphisms.
\end{lemma}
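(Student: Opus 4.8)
The plan is to exploit that $G$ acts as the identity on underlying functions: given a $T$-semialgebra homomorphism $f : (X,\alpha) \to (Y,\beta)$, I must check that the very same function $f$ is a $(\Sigma\semifree, E\semifree)$-algebra homomorphism from $G(X,\alpha)$ to $G(Y,\beta)$. Writing $\Parens{\cdot}^X$ and $\Parens{\cdot}^Y$ for the interpretations produced by $G$ (distinct from the free-algebra interpretations $\freeinterX{\cdot}$ and $\freeinterY{\cdot}$, which use square brackets), the homomorphism condition unfolds to $f \circ \Parens{\omega}^X = \Parens{\omega}^Y \circ f^n$ for every $(\omega : n) \in \Sigma\semifree$. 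Since $\Sigma\semifree = \Sigma \uplus \set{\newa : 1}$, there are exactly two cases to treat: the new unary symbol $\newa$ and an arbitrary old symbol $(\op : n) \in \Sigma$. Throughout I would use two facts: the semialgebra-homomorphism condition $f \circ \alpha = \beta \circ Tf$, and the naturality of the unit, $Tf \circ \eta_X = \eta_Y \circ f$.

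For the new symbol, recall $\Parens{\newa}^X = \alpha \circ \eta_X$ and $\Parens{\newa}^Y = \beta \circ \eta_Y$ by \eqref{eqn:def_newa_interpretation}. I would compute directly
\[
  f \circ \alpha \circ \eta_X
  = \beta \circ Tf \circ \eta_X
  = \beta \circ \eta_Y \circ f,
\]
using the homomorphism condition and then naturality of $\eta$; the right-hand side is exactly $\Parens{\newa}^Y \circ f$.

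For an old symbol $(\op : n)$, recall $\Parens{\op}^X = \alpha \circ \freeinterX{\op} \circ (\eta_X)^n$. The extra ingredient is \cref{lem:homom_by_nat_of_mu}, applied to $f : X \to Y$, which says $Tf : (TX, \freeinterX{\cdot}) \to (TY, \freeinterY{\cdot})$ is a $(\Sigma,E)$-algebra homomorphism, i.e.\ $Tf \circ \freeinterX{\op} = \freeinterY{\op} \circ (Tf)^n$. I would then chain
\[
  f \circ \alpha \circ \freeinterX{\op} \circ (\eta_X)^n
  = \beta \circ Tf \circ \freeinterX{\op} \circ (\eta_X)^n
  = \beta \circ \freeinterY{\op} \circ (Tf \circ \eta_X)^n
  = \beta \circ \freeinterY{\op} \circ (\eta_Y)^n \circ f^n,
\]
using the homomorphism condition, then \cref{lem:homom_by_nat_of_mu} (together with $(Tf)^n \circ (\eta_X)^n = (Tf \circ \eta_X)^n$), and finally naturality of $\eta$ applied componentwise; the right-hand side is $\Parens{\op}^Y \circ f^n$, as required.

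There is no serious obstacle: the whole argument is a short diagram chase. The only point needing care is the old-operation case, where pushing $Tf$ past the free interpretation $\freeinterX{\op}$ is \emph{not} mere naturality but exactly the content of \cref{lem:homom_by_nat_of_mu}; once that lemma is invoked, everything reduces to naturality of $\eta$ and the semialgebra-homomorphism condition. Since $G$ is the identity on morphisms, functoriality ($G(\id) = \id$ and $G(g \circ f) = G(g) \circ G(f)$) is then immediate, so this lemma completes the verification that $G$ is a well-defined functor.
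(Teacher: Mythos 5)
Your proposal is correct and follows essentially the same route as the paper's proof: the same case split over $\newa$ and the old symbols $(\op : n) \in \Sigma$, with exactly the same three ingredients (the semialgebra-homomorphism condition, naturality of $\eta$, and \cref{lem:homom_by_nat_of_mu} to push $Tf$ past $\freeinterX{\op}$). The only difference is presentational—you write equational chains where the paper pastes the same commuting squares into two diagrams.
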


\begin{proof} 
    Suppose $f : (X,\alpha) \to (Y,\beta)$ is an $T$-semialgebra homomorphism.
    Let $(X, \Parens{\cdot}^X) \defeq G(X,\alpha)$ and $(Y, \Parens{\cdot}^Y) \defeq G(Y,\beta)$.
    We check that $G(f)$, which is defined as $f$ in \cref{def:forward_functor}, is an $(\Sigma\semifree_{}, E\semifree_{})$-algebra homomorphism, or in other words, that it commutes with the interpretations of the operations $\newa$ and each $(\op : n) \in \Sigma_{}$.
    \begin{center}
        \begin{tikzcd}
            X 
                \ar[d, "f"']
                \ar[r, "\eta_X"] 
                \arrow[urrd, to path= { 
                    -- ([yshift=1ex]\tikztostart.north) 
                    |- ([yshift=1ex]\tikztotarget.north) node[near end,above,font=\scriptsize]{${\Parens{\newa}^X}$}
                    -- (\tikztotarget)}]
                \ar[dr, phantom, "\scriptstyle (\eta \text{ nat.})"] 
            & TX
                \ar[r, "\alpha"]
                \ar[d, "Tf" description]
                \ar[dr, phantom, "\scriptstyle (f \text{ hom.})"]
            & X
                \ar[d, "f"] \\
            Y 
                \ar[r, "\eta_Y"'] 
                \arrow[drru, to path= { 
                    -- ([yshift=-1ex]\tikztostart.south) 
                    |- ([yshift=-1ex]\tikztotarget.south) node[near end, below, font=\scriptsize] {${\Parens{\newa}^Y}$}
                    -- (\tikztotarget)}]
            & TY
                \ar[r, "\beta"']
            & Y 
        \end{tikzcd}
    \qquad \qquad
        \begin{tikzcd}
            X^n
                \ar[d, "f^n"']
                \ar[r, "(\eta_X)^n"]
                \arrow[urrrd, to path= { 
                    -- ([yshift=2ex]\tikztostart.north) 
                    |- ([yshift=2ex]\tikztotarget.north) node[near end,above,font=\scriptsize]{$\Parens{\op}^X$}
                    -- (\tikztotarget)}]
                \ar[dr, phantom, "\scriptstyle (\eta \text{ nat.})^n"]
            & (TX)^n
                \ar[d, "(Tf)^n" description]
                \ar[r, "\freeinterX{\op}"]
                \ar[dr, phantom, "\scriptstyle (Lem. \ref{lem:homom_by_nat_of_mu})"]
            & TX
                \ar[d, "Tf" description]
                \ar[r, "\alpha"]
                \ar[dr, phantom, "\scriptstyle(f \text{ hom.})"]
            & X 
                \ar[d, "f"] \\
            Y^n 
                \ar[r, "(\eta_Y)^n"'] 
                \arrow[drrru, to path= { 
                    -- ([yshift=-2ex]\tikztostart.south) 
                    |- ([yshift=-2ex]\tikztotarget.south) node[near end, below, font=\scriptsize] {$\Parens{\op}^Y$}
                    -- (\tikztotarget)}]
            & (TY)^n 
                \ar[r, "\freeinterY{\op}"'] 
            & TY 
                \ar[r, "\beta"'] & Y
        \end{tikzcd}
    \end{center}
\end{proof}

\subsection{From $(\Sigma\semifree, E\semifree)$-algebras to $T$-semialgebras} \label{sec:backward_direction}

For the backward direction, given a $(\Sigma\semifree, E\semifree)$-algebra $(X, \Parens{\cdot})$, we want to define a $T$-semialgebra $\alpha:TX \to X$.
Since the elements of $TX$ are equivalence classes of terms, we can construct the desired semialgebra and the backward functor $H$ as follows.

\begin{definition}\label{def:backward_functor}
    We define the mapping
    \[
        H : \catalg{\Sigma_{}\semifree, E_{}\semifree} \to \EMs{T}
    \]
    by $H(X,\Parens{\cdot}) \defeq (X,\alpha)$ on objects, where $\alpha$ is defined as follows:
    \begin{equation} \label{eqn:def_semialgebra_a}
        \alpha : TX \to X : \ov{t} \mapsto \Parens{t}_{\Parens{\newa}},
    \end{equation}
    and $H(f) \defeq f$ on morphisms.
\end{definition}

We now show that $H$ is well-defined on objects and morphisms. It then follows immediately that $H$ is a functor due to being essentially identity on morphisms.
Since the definition of $\alpha$ relies on equivalence classes, we first show in \cref{lem:well_definedness_of_a} that $\alpha$ is well-defined, i.e., that changing representatives does not matter.
Then, we prove in \cref{lem:a_is_associative} that $\alpha$ is indeed a $T$-semialgebra.
Lastly, we show in \cref{lem:H_well_defined_functoriality} that $H$ outputs $T$-semialgebra homomorphisms.

\begin{lemma} \label{lem:well_definedness_of_a}
    Given a $(\Sigma\semifree, E\semifree)$-algebra $(X,\Parens{\cdot})$, and $(X,\alpha) \defeq H(X,\Parens{\cdot})$, then $\alpha$ from \eqref{eqn:def_semialgebra_a} is a well-defined function.
\end{lemma}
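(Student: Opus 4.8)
The plan is to show that $\alpha$ respects the congruence defining $TX$. Recall that $TX = \freemonad{\Sigma,E}X$ is $\terms{\Sigma}{X}$ quotiented by the smallest congruence containing $E$, so $\alpha$ is well defined exactly when, for all $t,t' \in \terms{\Sigma}{X}$,
\[
    \ov{t} = \ov{t'} \ \Longrightarrow \ \Parens{t}_{\Parens{\newa}} = \Parens{t'}_{\Parens{\newa}}.
\]
By the construction of the quotient together with soundness and completeness of equational logic, the hypothesis $\ov{t}=\ov{t'}$ is equivalent to $E \vdash t = t'$, where the elements of $X$ occurring in $t,t'$ are read as (free) variables. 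So the task reduces to transporting such a derivation into a statement about $\Parens{\cdot}$.

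First I would rename the finitely many elements $x_1,\dots,x_k \in X$ appearing in $t,t'$ to fresh variables $v_1,\dots,v_k$, obtaining $\hat t, \hat{t'} \in \terms{\Sigma}{\variables}$ with $E \vdash \hat t(v_1,\dots,v_k) = \hat{t'}(v_1,\dots,v_k)$; this lifting is legitimate because the generators $x_i$ do not occur in the axioms of $E$ and hence behave exactly as variables in any derivation. Now \cref{lem:deduction_tree_adaptation} applies verbatim and yields
\[
    E\semifree \vdash \hat t(\newa v_1, \dots, \newa v_k) = \hat{t'}(\newa v_1, \dots, \newa v_k).
\]
Since $(X,\Parens{\cdot})$ is a $(\Sigma\semifree, E\semifree)$-algebra, soundness of equational logic gives $\Parens{\hat t(\newa v_1,\dots,\newa v_k)}_\rho = \Parens{\hat{t'}(\newa v_1,\dots,\newa v_k)}_\rho$ for every assignment $\rho : \variables \to X$, and I would specialise to $\rho(v_i) = x_i$.

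The final step is the identity linking this interpretation back to the definition of $\alpha$: for any $u \in \terms{\Sigma}{X}$ with $\hat u \in \terms{\Sigma}{\variables}$ its renaming, the substitution property of term interpretation gives $\Parens{\hat u(\newa v_1,\dots,\newa v_k)}_\rho = \Parens{\hat u}_{g}$ with $g(v_i) = \Parens{\newa}(\rho(v_i)) = \Parens{\newa}(x_i)$, while $\Parens{u}_{\Parens{\newa}} = \Parens{\hat u}_{h}$ with $h(v_i) = \Parens{\newa}(x_i)$, so $g = h$ and the two values coincide. Applying this to $t$ and to $t'$ and chaining the three displayed steps yields $\Parens{t}_{\Parens{\newa}} = \Parens{t'}_{\Parens{\newa}}$, which is exactly well-definedness. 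I expect the only real obstacle to lie in the bookkeeping of this last paragraph: keeping straight the two roles of the elements of $X$, first as the generators of the terms representing elements of $TX$ and then as the targets of the assignment $\Parens{\newa} : X \to X$, and verifying the substitution identity $\Parens{u(\newa v_i/v_i)}_{(v_i\mapsto x_i)} = \Parens{u}_{\Parens{\newa}}$ that licenses passing from the $\newa$-wrapped syntactic theorem to \eqref{eqn:def_semialgebra_a}. An essentially equivalent route, avoiding the renaming, is to verify directly that the kernel relation $\setvbar{(t,t')}{\Parens{t}_{\Parens{\newa}} = \Parens{t'}_{\Parens{\newa}}}$ is a $\Sigma$-congruence on $\terms{\Sigma}{X}$ containing every $E$-instance, using that $\Parens{\newa}$ is idempotent by \eqref{eqn:idempotency} and that each $\Parens{\op}$ is fixed by $\Parens{\newa}$ on its output by \eqref{eqn:a_in_front_disappear}, so that the relevant assignments factor through the image of $\Parens{\newa}$ and the $E\semifree$-equations from \eqref{eqn:a_inside_terms_disappear} become applicable; then it contains the generating congruence and the conclusion follows.
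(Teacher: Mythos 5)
Your proposal is correct and follows essentially the same route as the paper's own proof: reduce well-definedness to a derivation $E \vdash t = t'$ obtained by reading the elements of $X$ as variables, transport it to $E\semifree \vdash t(\newa v_1,\dots,\newa v_n) = t'(\newa v_1,\dots,\newa v_n)$ via \cref{lem:deduction_tree_adaptation}, and conclude by soundness in the model $(X,\Parens{\cdot})$ with the assignment $v_i \mapsto x_i$. The only difference is presentational: you spell out the substitution identity linking $\Parens{\hat u(\newa v_1,\dots,\newa v_k)}_\rho$ to $\Parens{u}_{\Parens{\newa}}$, which the paper handles implicitly by rewriting the definition of $\alpha$ as $\ov{t(x_1,\ldots,x_n)} \mapsto \Parens{t(v_1,\ldots,v_n)}_{\Parens{\newa} \circ (v_i \mapsto x_i)}$ at the outset.
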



The next lemma states two identities that give more insight into how $\alpha$ works on distinct elements.
It makes future manipulations of $\alpha$ easier.

\begin{lemma}\label{lem:formula_a_opmx}
    Given a $(\Sigma\semifree, E\semifree)$-algebra $(X,\Parens{\cdot})$, and $(X,\alpha) \defeq H(X,\Parens{\cdot})$, then for all $x \in X$, $(\op:n \in \Sigma)$, and $c_1,\ldots,c_n \in \freealgebra{\Sigma}{X}{E}$, it holds that
    \begin{align}
        \alpha \circ \eta_X (x) &= \Parens{\newa} (x), \text{ and}\label{eqn:formula_a_eta} \\
        \alpha \circ \freeinterX{\op} (c_1,\ldots,c_n) &= \Parens{\op} (\alpha c_1, \ldots, \alpha c_n), \label{eqn:formula_a_opmx}
    \end{align}
\end{lemma}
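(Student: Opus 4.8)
The plan is to prove both identities by directly unfolding the definition of $\alpha$ given in \eqref{eqn:def_semialgebra_a}, relying throughout on \cref{lem:well_definedness_of_a} so that we may freely choose representatives of the congruence classes involved. Neither identity needs induction on term structure; both are essentially a matter of chasing the definitions of $\eta_X$, of $\freeinterX{\op}$, and of the extended interpretation $\Parens{\cdot}_{(-)}$.

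For \eqref{eqn:formula_a_eta}, I would first recall that the unit of the free monad sends $x$ to the class of the corresponding depth-zero term, i.e.\ $\eta_X(x) = \ov{x}$. Applying \eqref{eqn:def_semialgebra_a} then gives $\alpha(\eta_X(x)) = \alpha(\ov{x}) = \Parens{x}_{\Parens{\newa}}$, and the base clause \eqref{eqn:def_[[]]_variables} of the extended interpretation evaluates the right-hand side to $\Parens{\newa}(x)$. This is immediate.

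For \eqref{eqn:formula_a_opmx}, I would choose representatives $t_1,\ldots,t_n \in \terms{\Sigma}{X}$ with $c_i = \ov{t_i}$. Since $\freeinterX{\op}(\ov{t_1},\ldots,\ov{t_n}) = \ov{\op(t_1,\ldots,t_n)}$, definition \eqref{eqn:def_semialgebra_a} yields $\alpha \circ \freeinterX{\op}(c_1,\ldots,c_n) = \Parens{\op(t_1,\ldots,t_n)}_{\Parens{\newa}}$. Because each $t_i$ uses only operations from $\Sigma \subseteq \Sigma\semifree$, the operation clause \eqref{eqn:def_[[]]_operations} of the extended interpretation applies and splits this term into $\Parens{\op}(\Parens{t_1}_{\Parens{\newa}},\ldots,\Parens{t_n}_{\Parens{\newa}})$. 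Finally, each inner factor satisfies $\Parens{t_i}_{\Parens{\newa}} = \alpha(\ov{t_i}) = \alpha(c_i)$ again by \eqref{eqn:def_semialgebra_a}, which produces exactly the right-hand side $\Parens{\op}(\alpha c_1,\ldots,\alpha c_n)$.

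The computation is routine and I do not expect a genuine obstacle. The only point requiring care is that $\alpha$ is defined on congruence classes, so the choice of representatives $t_i$ must be immaterial; this is precisely what \cref{lem:well_definedness_of_a} guarantees, and I would cite it explicitly to justify evaluating $\alpha$ on $\ov{\op(t_1,\ldots,t_n)}$ through the representative $\op(t_1,\ldots,t_n)$. A secondary thing worth flagging in the write-up is that $t_i$, being a $\Sigma$-term, is a fortiori a $\Sigma\semifree$-term, so that applying the $\Sigma\semifree$-interpretation $\Parens{\cdot}$ to it is legitimate.
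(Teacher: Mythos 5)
Your proposal is correct and follows essentially the same route as the paper's own proof: both identities are established by unfolding the definition \eqref{eqn:def_semialgebra_a} of $\alpha$ on a chosen representative, using $\eta_X(x)=\ov{x}$ and $\freeinterX{\op}(\ov{t_1},\ldots,\ov{t_n})=\ov{\op(t_1,\ldots,t_n)}$, and then applying the clauses \eqref{eqn:def_[[]]_variables} and \eqref{eqn:def_[[]]_operations} of the extended interpretation, with no induction needed. Your explicit appeal to \cref{lem:well_definedness_of_a} to justify the choice of representatives is a point the paper leaves implicit, but it is the same argument.
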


The proof of the associativity of $\alpha$ highlights the use of term representatives.

\begin{lemma}\label{lem:a_is_associative}
    Given a $(\Sigma\semifree, E\semifree)$-algebra $(X,\Parens{\cdot})$, and $(X,\alpha) \defeq H(X,\Parens{\cdot})$, then $\alpha$ is a $T$-semialgebra, i.e., it satisfies the associativity axiom \eqref{eqn:associativity_axiom2}.
\end{lemma}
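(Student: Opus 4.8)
The plan is to prove the associativity square \eqref{eqn:associativity_axiom2}, i.e.\ that $\alpha \circ \mu_X = \alpha \circ T\alpha$ as maps $TTX \to X$, by induction on the term structure of a representative of an element of $TTX = \freealgebra{\Sigma}{TX}{E}$, as announced in the proof sketch. The observation that makes the induction go through is that \emph{both} composites are $\Sigma$-algebra homomorphisms into the $\Sigma$-algebra $(X,\Parens{\cdot})$. Indeed, $\mu_X$ and $T\alpha$ are $(\Sigma,E)$-algebra homomorphisms by \cref{lem:homom_by_nat_of_mu} (the second with $f = \alpha$), hence in particular $\Sigma$-algebra homomorphisms, while $\alpha\colon (TX,\freeinterX{\cdot}) \to (X,\Parens{\cdot})$ commutes with every $(\op:n)\in\Sigma$ by \eqref{eqn:formula_a_opmx}. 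Note that $(X,\Parens{\cdot})$ need not satisfy $E$, so it is only a $\Sigma$-algebra and not a $(\Sigma,E)$-algebra; this is why I phrase everything in terms of plain $\Sigma$-homomorphisms and do not invoke the universal property of the free $(\Sigma,E)$-algebra $TTX$.

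Concretely, I would fix $\Theta \in TTX$ together with a representative $\Phi \in \terms{\Sigma}{TX}$ such that $\Theta = \ov{\Phi}$, and induct on the structure of $\Phi$. In the inductive step $\Phi = \op(\Phi_1,\ldots,\Phi_n)$ we have $\ov{\Phi} = \freeinterTX{\op}(\ov{\Phi_1},\ldots,\ov{\Phi_n})$, and since both $\alpha\circ\mu_X$ and $\alpha\circ T\alpha$ commute with $\freeinterTX{\op}$ (by the previous paragraph) the claim reduces immediately to the induction hypothesis applied to $\ov{\Phi_1},\ldots,\ov{\Phi_n}$. Both composites are genuine functions on congruence classes, using \cref{lem:well_definedness_of_a} for the well-definedness of $\alpha$, so the computation does not depend on the chosen representative $\Phi$.

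The base case is where the actual work sits. Here $\Phi$ is a single generator, i.e.\ $\Theta = \eta_{TX}(c)$ for some $c \in TX$. The unit law \eqref{eqn:unit_axioms} gives $\mu_X(\eta_{TX}(c)) = c$, whence $\alpha\mu_X(\eta_{TX}(c)) = \alpha(c)$; naturality of $\eta$ at $\alpha$ gives $T\alpha(\eta_{TX}(c)) = \eta_X(\alpha(c))$, whence $\alpha T\alpha(\eta_{TX}(c)) = \alpha(\eta_X(\alpha(c)))$, which by \eqref{eqn:formula_a_eta} equals $\Parens{\newa}(\alpha(c))$. Thus the base case reduces to the claim that $\Parens{\newa}$ fixes the image of $\alpha$, that is $\Parens{\newa}(\alpha(c)) = \alpha(c)$ for all $c \in TX$ — the analogue at the level of the constructed algebra of the semialgebra identity \eqref{eqn:a_etaX_a=a}. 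I would prove it by a case distinction according to whether $c$ admits a representative that is a single variable $x \in X$ or only representatives of depth at least $1$: in the first case $\alpha(c) = \Parens{\newa}(x)$ and idempotency \eqref{eqn:idempotency} gives $\Parens{\newa}(\Parens{\newa}(x)) = \Parens{\newa}(x)$; in the second case, choosing any representative $r$ with $\cp(r) \geq 1$, the value $\alpha(c) = \Parens{r}_{\Parens{\newa}}$ lies in the image of a top-level operation $\Parens{\op}$, and \eqref{eqn:a_in_front_disappear} yields $\Parens{\newa}(\Parens{\op}(\ldots)) = \Parens{\op}(\ldots)$.

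The main obstacle I anticipate is precisely this base-case identity $\Parens{\newa}\circ\alpha = \alpha$: unlike the inductive step it cannot be reduced to the homomorphism property, and it genuinely consumes the new equations \eqref{eqn:idempotency} and \eqref{eqn:a_in_front_disappear} of $E\semifree$ — this is where the design of $(\Sigma\semifree,E\semifree)$ pays off. A secondary point requiring care is keeping the induction on representatives rigorous while working modulo $E$-congruence; I address this by first recording that both composites are well-defined functions on classes, so that any representative $\Phi$ of a fixed element may be used freely in the computation.
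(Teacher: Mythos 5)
Your proposal is correct and follows essentially the same route as the paper's proof: structural induction on a representative term in $\terms{\Sigma}{TX}$, with the inductive step discharged by the homomorphism properties of $\mu_X$ and $T\alpha$ (\cref{lem:homom_by_nat_of_mu}) together with \eqref{eqn:formula_a_opmx}, and the base case reduced via the unit law, naturality of $\eta$, and \eqref{eqn:formula_a_eta} to the identity $\Parens{\newa}\circ\alpha = \alpha$, proved by the same case split (variable representative via idempotency \eqref{eqn:idempotency}, complex representative via \eqref{eqn:a_in_front_disappear}). Your packaging of the inductive step as ``both composites are $\Sigma$-algebra homomorphisms into $(X,\Parens{\cdot})$'' is just a cleaner statement of the computation the paper carries out, and your remark that freeness of $TTX$ cannot be invoked (since $(X,\Parens{\cdot})$ need not satisfy $E$) correctly identifies why the explicit induction is needed.
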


\begin{proof}[Proof of \cref{lem:a_is_associative}]
We have to show that ${\alpha \circ T\alpha (\ov{t}) = \alpha \circ \mu_X (\ov{t})}$ for all $\ov{t} \in TTX = \freealgebra{\Sigma}{TX}{E}$.
We do an induction on $t$:
    \begin{itemize}
        \item
        For the base case suppose $t$ is some $\ov{s} \in TX = \freealgebra{\Sigma}{X}{E}$.
        The goal can be reformulated as $\alpha \circ T\alpha \circ \eta_{TX} (\ov{s}) = \alpha \circ \mu_X \circ \eta_{TX} (\ov{s})$.
        By the unit law \eqref{eqn:unit_axioms}, it is the same as proving $\alpha \circ T\alpha \circ \eta_{TX} (\ov{s}) = \alpha (\ov{s})$.
        We distinguish cases for $s$:
        \begin{itemize}
            \item
            If $s$ is some $x \in X$, i.e., $\ov{s}=\eta_X(x)$:
            \begin{align*}
                \alpha \circ T\alpha \circ \eta_{TX} \circ \eta_X (x) &= \alpha \circ \eta_X \circ \alpha \circ \eta_X (x) \tag{$\eta$ nat.} \\
                &= \Parens{\newa} \circ \Parens{\newa}(x) \tag{by \eqref{eqn:formula_a_eta}} \\
                &= \Parens{\newa} (x) \tag{idempotence \eqref{eqn:idempotency}} \\
                &= \alpha \circ \eta_X (x) \tag{by \eqref{eqn:formula_a_eta}}
            \end{align*}
            
            \item
            If $s = \op(s_1,\ldots,s_m)$ for ${s_1,\ldots,s_n \in \terms{\Sigma}{X}}$ and $(\op:m)\in \Sigma$:
            \begin{align*} 
                \alpha \circ T\alpha \circ \eta_{TX} \big(\ov{s} \big) 
                &= \alpha \circ T\alpha \circ \eta_{TX} \big(\ov{\op(s_1,\ldots,s_m)} \big) 
                \\
                &= \alpha \circ T\alpha \circ \eta_{TX} \left( \freeinterX{\op} (\ov{s_1},\ldots,\ov{s_m}) \right)
                \tag{def.~$\freeinterX{\cdot}$}\\
                &= \alpha \circ \eta_X \circ \alpha \circ \freeinterX{\op} (\ov{s_1},\ldots,\ov{s_m})
                \tag{$\eta$ nat.} \\
                &= \Parens{\newa}  \circ \Parens{\op} (\alpha  (\ov{s_1}),\ldots,\alpha  (\ov{s_m}))
                \tag{by \eqref{eqn:formula_a_eta},\eqref{eqn:formula_a_opmx}} \\
                &= \Parens{\op} (\alpha  (\ov{s_1}),\ldots,\alpha  (\ov{s_m})) 
                \tag{equation \eqref{eqn:a_in_front_disappear}} \\
                &= \alpha \circ \freeinterX{\op} (\ov{s_1},\ldots,\ov{s_m}) 
                \tag{by \eqref{eqn:formula_a_opmx}} \\
                &= \alpha \big(\ov{\op(s_1,\ldots,s_m)} \big)
                \tag{def. $\Brackets{\cdot}$} \\
                &= \alpha \big(\ov{s} \big).
            \end{align*}
        \end{itemize}
        
        \item
        For the induction step of $t$, suppose $\alpha \circ T\alpha (\ov{t_i}) = \alpha \circ \mu_X (\ov{t_i})$ holds for $t_1,\ldots,t_n \in \terms{\Sigma}{TX}$ and let us prove it for $t = \op(t_1,\ldots,t_n)$.
        \begin{align*}
            \alpha \circ T\alpha \big( \ov{t} \big) 
            &= \alpha \circ T\alpha \big( \ov{\op(t_1,\ldots,t_n)} \big) 
            \\
            &= \alpha \circ T\alpha \circ \freeinterTX{\op} (\ov{t_1},\ldots,\ov{t_n})
            \tag{def.~$\freeinterTX{\cdot}$} \\
            &= \alpha \circ \freeinterX{\op} \circ (T\alpha)^n (\ov{t_1},\ldots,\ov{t_n})
            \tag{\cref{lem:homom_by_nat_of_mu}} \\
            &= \Parens{\op} \circ (\alpha \circ T\alpha)^n (\ov{t_1},\ldots,\ov{t_n}) 
            \tag{by \eqref{eqn:formula_a_opmx}} \\
            &= \Parens{\op} \circ (\alpha \circ \mu_X)^n (\ov{t_1},\ldots,\ov{t_n}) 
            \tag{by I.H.} \\
            &= \alpha \circ \freeinterX{\op} \circ \mu_X^n (\ov{t_1},\ldots,\ov{t_n}) 
            \tag{by \eqref{eqn:formula_a_opmx}} \\
            &= \alpha \circ \mu_X \circ \freeinterTX{\op} (\ov{t_1},\ldots,\ov{t_n})
            \tag{\cref{lem:homom_by_nat_of_mu}} \\
            &= \alpha \circ \mu_X \big( \ov{\op(t_1,\ldots,t_m)} \big)
            \tag{def.~$\freeinterTX{\cdot}$} \\
            &= \alpha \circ \mu_X \big( \ov{t} \big) 
        \end{align*}
    \end{itemize}
    This concludes the proof that $\alpha$ is associative.
\end{proof}

\begin{lemma}\label{lem:H_well_defined_functoriality}
   $H$ maps $(\Sigma\semifree,E\semifree)$-algebra homomorphisms to $T$-semialgebra homomorphisms.
\end{lemma}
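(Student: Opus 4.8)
The plan is to show that if $f:(X,\Parens{\cdot}^X) \to (Y,\Parens{\cdot}^Y)$ is a $(\Sigma\semifree,E\semifree)$-algebra homomorphism, then $H(f) = f$ is a $T$-semialgebra homomorphism, i.e.\ that $f \circ \alpha = \beta \circ Tf$ where $(X,\alpha) \defeq H(X,\Parens{\cdot}^X)$ and $(Y,\beta) \defeq H(Y,\Parens{\cdot}^Y)$. Since every element of $TX$ is a congruence class $\ov{t}$ with $t \in \terms{\Sigma}{X}$, it suffices to verify the equality $f(\alpha(\ov{t})) = \beta(Tf(\ov{t}))$ for an arbitrary representative $t$, and the natural strategy is induction on the term structure of $t$, exactly mirroring the inductive style used in \cref{lem:a_is_associative}.

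First I would unwind both sides using the defining formula \eqref{eqn:def_semialgebra_a}. On the left, $f(\alpha(\ov{t})) = f\bigl(\Parens{t}^X_{\Parens{\newa}^X}\bigr)$. On the right, $Tf(\ov{t}) = \ov{t[f(x)/x]}$ by the action of the free monad on morphisms, so $\beta(Tf(\ov{t})) = \Parens{t[f/x]}^Y_{\Parens{\newa}^Y}$. The base case is $t = x \in X$, where both sides reduce to comparing $f(\Parens{\newa}^X(x))$ with $\Parens{\newa}^Y(f(x))$; these agree precisely because $f$ commutes with the interpretation of the unary symbol $\newa$, which is part of the assumption that $f$ is a $(\Sigma\semifree,E\semifree)$-homomorphism. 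For the inductive step $t = \op(t_1,\ldots,t_n)$, I would rewrite $\alpha \circ \freeinterX{\op}$ using \eqref{eqn:formula_a_opmx} from \cref{lem:formula_a_opmx}, namely $\alpha(\freeinterX{\op}(\ov{t_1},\ldots,\ov{t_n})) = \Parens{\op}^X(\alpha(\ov{t_1}),\ldots,\alpha(\ov{t_n}))$, then apply $f$, push it inside using that $f$ commutes with $\Parens{\op}$, invoke the induction hypothesis $f(\alpha(\ov{t_i})) = \beta(Tf(\ov{t_i}))$ componentwise, and finally reassemble the right-hand side with \eqref{eqn:formula_a_opmx} applied to $\beta$ together with naturality of the free-monad action.

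The cleanest way to carry this out is probably a chain of equalities in a single \texttt{align*} display, with each step justified by either the homomorphism property of $f$ for the relevant operation symbol, \eqref{eqn:formula_a_opmx}, or the induction hypothesis; this parallels the proof of \cref{lem:a_is_associative} so closely that the write-up should be short. I do not anticipate a genuine obstacle here: the main thing to be careful about is the bookkeeping of how $Tf$ acts on congruence classes (so that the representative of $Tf(\ov{\op(t_1,\ldots,t_n)})$ is $\op(t_1[f/x],\ldots,t_n[f/x])$, which is $\freeinterY{\op}$ applied to the $Tf(\ov{t_i})$), and to ensure the base case correctly isolates the $\newa$-compatibility of $f$ rather than conflating it with the $\Sigma$-operations. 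Once both the $\newa$-case and the $\op$-case of the homomorphism condition for $f$ are invoked in the appropriate places, the two sides coincide, completing the verification that $H$ is well-defined on morphisms and hence a functor.
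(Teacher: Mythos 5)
Your proposal is correct and follows essentially the same route as the paper's proof: induction on the term representative, using \eqref{eqn:formula_a_eta} and \eqref{eqn:formula_a_opmx} to unwind $\alpha$ and $\beta$, the $\newa$- and $\op$-compatibility of $f$, and the fact that $Tf$ commutes with $\freeinterX{\op}$ (\cref{lem:homom_by_nat_of_mu}). No gaps; writing out the chain of equalities you describe reproduces the paper's argument.
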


\begin{proof}
    Suppose $f : (X, \Parens{\cdot}^X) \to (Y, \Parens{\cdot}^Y)$ is a $(\Sigma\semifree_{}, E\semifree_{})$-algebra homomorphism.
    Let $(X,\alpha) \defeq H (X, \Parens{\cdot}^X)$ and $(Y,\beta) \defeq H (Y, \Parens{\cdot}^Y)$.
    We want to check that $H(f)$, which is equal to $f$, is a $T$-semialgebra homomorphism, i.e. the following commute:
    \begin{center} \begin{tikzcd}
        TX
            \ar[d, "\alpha"']
            \ar[r, "Tf"]
        & TY
            \ar[d, "\beta"] \\
        X 
            \ar[r, "f"']
        & Y
    \end{tikzcd} \end{center}
    We prove $f \circ \alpha (\ov{t}) = \beta \circ Tf (\ov{t})$ for all $\ov{t} \in TX = \freealgebra{\Sigma}{X}{E}$ by an induction on $t$:
    \begin{itemize}
        \item 
        Suppose $t$ is some $x \in X$:
        \begin{align*}
            f \circ \alpha (\ov{x}) &= f \circ \alpha \circ \eta_X (x)
            \tag{def.~$\eta_X$} \\
            &= f \circ \Parens{\newa}^X (x) 
            \tag{by \eqref{eqn:formula_a_eta}} \\
            &= \Parens{\newa}^Y \circ f (x) 
            \tag{$f$ homom.} \\
            &= \beta \circ \eta_Y \circ f(x) 
            \tag{by \eqref{eqn:formula_a_eta}} \\
            &= \beta \circ Mf \circ \eta_X(x) 
            \tag{$\eta$ nat.} \\
            &= \beta \circ Mf (\ov{x}). 
            \tag{def.~$\eta_X$}
        \end{align*}
        
        \item 
        Suppose that it holds for $t_1,\ldots, t_n$ and let us prove it for $t = \op(t_1,\ldots,t_n)$:
        \begin{align*}
            f \circ \alpha (\ov{\op(t_1,\ldots,t_n)})
            &= f \circ \alpha \circ \freeinterX{\op} (\ov{t_1}, \ldots, \ov{t_n})
            \tag{def.~$\freeinterX{\cdot}$} \\
            &= f \circ \Parens{\op}^X (\alpha (\ov{t_1}), \ldots, \alpha (\ov{t_n}))
            \tag{by \eqref{eqn:formula_a_opmx}} \\
            &= \Parens{\op}^Y (f \circ \alpha)^n (\ov{t_1}, \ldots, \ov{t_n}) 
            \tag{$f$ homom.} \\
            &= \Parens{\op}^Y (\beta \circ Tf)^n (\ov{t_1}, \ldots, \ov{t_n}) 
            \tag{by I.H.} \\
            &= \beta \circ \freeinterY{\op} \circ (Tf)^n (\ov{t_1}, \ldots, \ov{t_n}) 
            \tag{by \eqref{eqn:formula_a_opmx}} \\
            &= \beta \circ Tf \circ \freeinterX{\op}(\ov{t_1}, \ldots, \ov{t_n}) 
            \tag{\cref{lem:homom_by_nat_of_mu}} \\
            &= \beta \circ Tf (\ov{\op(t_1,\ldots,t_m)}). 
            \tag{def.~$\freeinterX{\cdot}$}
        \end{align*}
    \end{itemize}
\end{proof}

\subsection{Joining both constructions} \label{sec:joining}

We have shown in the two previous sections that we have functors $G$ and $H$ as shown here:
    \[
        G: \EMs{T} \rightleftarrows \catalg{\Sigma_{}\semifree, E_{}\semifree} : H.
    \]
It remains to show that they are inverses.

\begin{lemma} \label{lem:G_H_inverses}
    The functors $G$ and $H$ are inverses.
\end{lemma}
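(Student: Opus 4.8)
The plan is to exploit that both $G$ and $H$ are the identity on underlying sets and on morphisms: since $G(f) = f$ and $H(f) = f$, the composites $H \circ G$ and $G \circ H$ automatically act as the identity on every homomorphism. Hence it suffices to check that each round-trip recovers the original algebra structure on objects. I would treat the two composites separately, and in both cases lean on the characterisations of $\alpha$ collected in \cref{lem:formula_a_opmx} (equations \eqref{eqn:formula_a_eta} and \eqref{eqn:formula_a_opmx}) together with the technical identity \eqref{eqn:formula_a_opX} of \cref{lem:formula_a_opX}.

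For $H \circ G = \Id_{\EMs{T}}$, I would start from a $T$-semialgebra $(X,\alpha)$, set $(X,\Parens{\cdot}) \defeq G(X,\alpha)$ and $(X,\alpha') \defeq H(X,\Parens{\cdot})$, and prove $\alpha'(\ov t) = \alpha(\ov t)$ for all $\ov t \in TX$ by induction on the term structure of $t$. In the base case $\ov t = \eta_X(x)$, equation \eqref{eqn:formula_a_eta} gives $\alpha'(\eta_X(x)) = \Parens{\newa}(x)$, and the definition of $\Parens{\newa}$ in $G$ rewrites this as $\alpha(\eta_X(x))$. In the inductive step $\ov t = \freeinterX{\op}(\ov{t_1},\ldots,\ov{t_n})$, equation \eqref{eqn:formula_a_opmx} rewrites $\alpha'(\ov t)$ as $\Parens{\op}(\alpha'(\ov{t_1}),\ldots,\alpha'(\ov{t_n}))$; the induction hypothesis replaces each $\alpha'(\ov{t_i})$ by $\alpha(\ov{t_i})$, and unfolding the definition of $\Parens{\op}$ in $G$ produces $\alpha \circ \freeinterX{\op} \circ \eta_X^n \circ \alpha^n(\ov{t_1},\ldots,\ov{t_n})$. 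This equals $\alpha \circ \freeinterX{\op}(\ov{t_1},\ldots,\ov{t_n}) = \alpha(\ov t)$ exactly by the identity \eqref{eqn:formula_a_opX}, closing the induction.

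For $G \circ H = \Id$, I would start from a $(\Sigma\semifree,E\semifree)$-algebra $(X,\Parens{\cdot})$, set $(X,\alpha) \defeq H(X,\Parens{\cdot})$ and $(X,\Parens{\cdot}') \defeq G(X,\alpha)$, and check $\Parens{\cdot}' = \Parens{\cdot}$ on each operation symbol. For $\newa$, the definition of $\Parens{\newa}'$ in $G$ together with \eqref{eqn:formula_a_eta} yields $\Parens{\newa}'(x) = \alpha(\eta_X(x)) = \Parens{\newa}(x)$. For $(\op:n) \in \Sigma$, the definition of $\Parens{\op}'$ together with \eqref{eqn:formula_a_opmx} and \eqref{eqn:formula_a_eta} gives $\Parens{\op}'(x_1,\ldots,x_n) = \Parens{\op}(\Parens{\newa}(x_1),\ldots,\Parens{\newa}(x_n))$, which equals $\Parens{\op}(x_1,\ldots,x_n)$ since $(X,\Parens{\cdot})$ satisfies equation \eqref{eqn:a_inside_disappear}. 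Thus $\Parens{\cdot}' = \Parens{\cdot}$.

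I expect the $H \circ G$ direction to be the main obstacle. The round-trip through $G$ wraps each argument in a unit $\eta_X$ and re-reads operations through $\alpha$, so the reconstructed $\alpha'$ does not literally coincide with $\alpha$ term-by-term on composite terms; showing the spurious $\eta_X$'s and $\alpha$'s collapse is precisely the content of \cref{lem:formula_a_opX}, and recognising that this technical lemma is the right tool—together with the need for an induction on term depth to propagate agreement from the variables to all of $TX$—is the crux. By contrast, the $G \circ H$ direction is essentially immediate once equation \eqref{eqn:a_inside_disappear} is identified as the responsible axiom.
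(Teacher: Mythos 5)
Your proposal is correct and follows essentially the same route as the paper's proof: the same induction on term structure for the $H \circ G$ direction (using \eqref{eqn:formula_a_eta}, \eqref{eqn:formula_a_opmx}, and collapsing the spurious $\eta_X^n \circ \alpha^n$ via \eqref{eqn:formula_a_opX}), and the same symbol-by-symbol check for $G \circ H$ finishing with equation \eqref{eqn:a_inside_disappear}. Your explicit remark that both functors act as the identity on morphisms, so only the object parts need checking, is also exactly the (implicit) structure of the paper's argument.
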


\begin{proof}
    \begin{itemize}
        \item
        Suppose we start with a $T$-semialgebra $\alpha: TX \to X$.
        Let $(X, \Parens{\cdot}) \defeq G(X,\alpha)$ and $(X,\alpha') \defeq H (X, \Parens{\cdot})$.
        We prove $\alpha'(\ov{t}) = \alpha(\ov{t})$ for all $\ov{t} \in TX = \freealgebra{\Sigma}{X}{E}$ by induction on $t$:
        \begin{itemize}
            \item
            Suppose $t$ is some $x \in X$:
            \begin{align*}
                \alpha' (\ov{x}) &= \alpha' \circ \eta_X (x)
                \tag{def.~$\eta_X$} \\
                &= \Parens{\newa} (x) 
                \tag{by \eqref{eqn:formula_a_eta}} \\
                &= \alpha \circ \eta_X (x)
                \tag{def. of $\Parens{\newa}$ in \eqref{eqn:def_newa_interpretation}} \\
                &= \alpha \circ (\ov{x}).
                \tag{def.~$\eta_X$}
            \end{align*}
            
            \item
            Suppose it holds for $t_1, \ldots, t_n$ and let us prove it for $t = \op (t_1,\ldots,t_n)$:
            \begin{align*}
                \alpha' ( \ov{t} ) 
                &= \alpha' \big( \ov{\op(t_1,\ldots,t_n)} \big)
                \\
                &= \alpha' \circ \freeinterX{\op} (\ov{t_1}, \ldots, \ov{t_m})
                \tag{def.~$\freeinterX{\cdot}$} \\
                &= \Parens{\op} (\alpha' (\ov{t_1}) , \ldots , \alpha' (\ov{t_n})) 
                \tag{by \eqref{eqn:formula_a_opmx}} \\
                &= \Parens{\op} (\alpha (\ov{t_1}) , \ldots , \alpha (\ov{t_n}))
                \tag{by I.H.} \\
                &= \alpha \circ \freeinterX{\op} \circ \eta_X^n \circ \alpha^n (\ov{t_1} , \ldots , \ov{t_n}) 
                \tag{def.~$\Parens{\op}$ in \eqref{eqn:def_op_interpretation}} \\
                &= \alpha \circ \freeinterX{\op} (\ov{t_1} , \ldots , \ov{t_n})
                \tag{by \eqref{eqn:formula_a_opX}} \\
                &= \alpha \big( \ov{\op(t_1,\ldots,t_n)} \big)
                \tag{def.~$\freeinterX{\cdot}$} \\
                &= \alpha ( \ov{t} )
            \end{align*}
        \end{itemize}
        
        \item 
        Suppose we start with an $(\Sigma\semifree, E\semifree)$-algebra $(X,\Parens{\cdot})$.
        Let $(X,\alpha) \defeq H (X,\Parens{\cdot})$ and $(X,\Parens{\cdot}') \defeq G(X,\alpha)$.
        We want to prove that $\Parens{\cdot}' = \Parens{\cdot}$.
        Let us check first for $(\newa : 1) \in \Sigma\semifree$:
        \begin{align*}
            \Parens{\newa}' 
            &= \alpha \circ \eta_X 
            \tag{def.~$\Parens{\cdot}'$ in \eqref{eqn:def_newa_interpretation}} \\
            &= \Parens{\newa}, 
            \tag{by \eqref{eqn:formula_a_eta}}
        \end{align*}
        and then for $(\op : n) \in \Sigma_{}$:
        \begin{align*}
            \Parens{\op}' &= \alpha \circ \freeinterX{\op} \circ (\eta_X)^n
            \tag{def.~$\Parens{\cdot}'$ in \eqref{eqn:def_op_interpretation}} \\
            &= \Parens{\op} \circ (\alpha \circ \eta_X)^n 
            \tag{by \eqref{eqn:formula_a_opmx}} \\
            &= \Parens{\op} \circ \Parens{\newa}^n 
            \tag{by \eqref{eqn:formula_a_eta}} \\
            &= \Parens{\op}.
            \tag{equation \eqref{eqn:a_inside_disappear} in $E_{}\semifree$}
        \end{align*}
    \end{itemize}
\end{proof}

The proof of \cref{thm:big_theorem} is now complete and contained in Lemmas \ref{lem:G_well_defined_equations_satisfied}, 
\ref{lem:G_well_defined_functoriality},
\ref{lem:well_definedness_of_a},
\ref{lem:a_is_associative},
\ref{lem:H_well_defined_functoriality}, and
\ref{lem:G_H_inverses}.

\section{Examples}\label{sec:examples}

We now give multiple examples to illustrate the applicability of \cref{thm:conjecture}.
First, notice that some equations in $E\semifree$ can be simplified to equations in $E$. 
More precisely, equations in $E\semifree$ that arise via  \eqref{eqn:a_inside_terms_disappear} from an equation $t=s$ in $E$ where $t$ and $s$ are terms of depth at least 1, reduces to $t=s$ due to \cref{lem:generalisation_eq_23}.
Similarly, if $t$ is variable $v$ and $s$ is not a variable, then the equation in $E\semifree$ arising from \eqref{eqn:a_inside_terms_disappear} becomes $\newa v = s$; the case when $s$ is a variable and $t$ is not is analogous.

The presentations of the semifree monads on the maybe monad $(-)+1$, the semigroup monad $(-)^+$ and the distribution monad $\distribution$ were proven each individually by hand 
in \cite{Petrisan_Sarkis_2021}.
Those three examples gave a strong intuition that a uniform presentation was possible, which lead the authors of \cite{Petrisan_Sarkis_2021} to conjecture \cref{thm:conjecture} that we proved in this article.

\begin{example} \label{ex:exception}
    The \emph{exception} monad $X \mapsto X + K$, where $K$ is a fixed set (meant to contain a list of possible exception states) is presented by the theory of $K$-pointed sets $\Sigma = \setvbar{c_k : 0}{k \in K}$, $E = \emptyset$. 
    Its semifree monad has functor $X \mapsto X + (X + K)$, and presentation
    \begin{align*}
        \Sigma\semifree &= \setvbar{c_k : 0}{k \in K} \cup \set{\newa : 1}, \\
        E\semifree &= \set{\newa \newa v = \newa v} \cup \setvbar{\newa c_k = c_k}{k \in K}.
    \end{align*}
    Its algebras $(X,\Brackets{\cdot})$ are sets $X$ with a retract $Y = \ima (\Brackets{\newa})  \subseteq X$, the retraction being $\Brackets{\newa} : X \to X$,
    and with a set of distinguished elements $\setvbar{y_k}{k\in K} \subseteq Y$ in the retract.
\end{example}

\begin{example}\label{ex:list}
    The \emph{list} monad $X \mapsto X^* = \bigsqcup_{n \geq 0} X^n$ is presented by the theory of monoids $\Sigma_* = \set{e : 0,\cdot:2}$, $E_* = \set{(u \cdot v) \cdot w = u \cdot (v \cdot w),\ \  e \cdot v = v,\ \  v \cdot e = v}$, see e.g. \cite[Example 10.7]{Awodey_2006}.
    Its semifree monad has functor $X \mapsto X + X^*$.
    Notice that its presentation can be simplified.
    By \eqref{eqn:a_inside_terms_disappear} and \cref{lem:generalisation_eq_23}, we obtain the equations $e \cdot v = \newa v$ and $v \cdot e = \newa v$.
    These equations show that the new symbol $(\newa:1)$ is not needed since it can be expressed using the operations $(e:0)$ and $(\cdot:2)$.
    However, we then need to add the equation $e \cdot v = v \cdot e$.
    If we continue this simplification, we end up with
    \[
            E\semifree_* = \left\{
            \begin{aligned}
                e \cdot v &= v \cdot e,
                & e \cdot e &= e,\\
                e \cdot (u \cdot v) &= u \cdot v,
                & (u \cdot v) \cdot w &= u \cdot (v \cdot w)
            \end{aligned}
            \right\}.
    \]
    This corresponds to the theory of semigroups $(S,\cdot)$ that admit a retract $R = S \cdot S$ which is a monoid $(R,\cdot,e)$. The retraction is $e \cdot (-) = (-) \cdot e: S \to S$.
\end{example}

\begin{example}\label{ex:multiset}
    The \emph{multiset} monad $\multiset(X) = \setvbar{\phi : X \to \N}{\supp{\phi} \text{ finite}}$, where $\supp{\phi}$ means the support of $\phi$, is presented  by the theory of commutative monoids $\Sigma_\multiset = \set{e : 0,\cdot:2}$, $E_\multiset = E_* \uplus \set{u \cdot v = v \cdot u}$ \cite[Section 2]{Jacobs_2010}.
    Its semifree monad has functor $\multiset\semifree(X) = X + \multiset(X)$.
    Its presentation can be built direct on the simplified presentation of the semifree monoid monad of \cref{ex:list}.
    Furthermore, $u \cdot v = v \cdot u$ renders the equation $e \cdot v = v \cdot e$ redundant.
    Therefore, this presentation ends up being $\Sigma\semifree_\multiset = \set{e:0, \cdot:2}$ and
    \[
            E\semifree_* = \left\{
            \begin{aligned}
                u \cdot v &= v \cdot u,
                & e \cdot e &= e,\\
                e \cdot (u \cdot v) &= u \cdot v,
                & (u \cdot v) \cdot w &= u \cdot (v \cdot w)
            \end{aligned}
            \right\}.
    \]
    This corresponds to the theory of commutative semigroups $(S,\cdot)$ that admits a retract $R = S \cdot S$, which is a commutative monoid $(R,\cdot,e)$. The retraction is $e \cdot (-): S \to S$.
\end{example}

\begin{example}\label{ex:powerset}
    The \emph{finite powerset} monad $\powerset(X) = \setvbar{Y \subseteq X}{Y \text{ finite}}$ is presented   by the theory of join-semilattices with bottom $\Sigma_\powerset = \set{e : 0,\cdot:2}$, $E_\powerset = E_\multiset \uplus \set{v \cdot v = v}$ \cite[p.~81]{Jacobs_1994}.
    Its semifree monad has functor $\powerset\semifree(x) = X + \powerset(X)$.
    The equation $v \cdot v = v$ becomes $v \cdot v = \newa v$, and as in \cref{ex:list}, the symbol $\newa$ can be replaced by $e \cdot -$.
    We end up with
    \[
        \Sigma\semifree_\powerset = \set{e:0, \cdot:2},
        \qquad
        E\semifree_\powerset = E\semifree_\multiset \uplus \set{v \cdot v = e \cdot v}.
    \]
\end{example}

\begin{example}\label{ex:state}
        The \emph{state} monad $\state(X) = (S \times X)^S$, where $S$ is a fixed finite set (generally of states), is presented  by the theory of global states (see \cite{Metayer_2004,Plotkin_Power_2001}) $\Sigma_\state = \set{f:n} \cup \setvbar{g_i:1}{1 \leq i \leq n}$, where $n \defeq \Card{S}$, and
    \[
        E_\state =
        \left\{
        \begin{aligned}
            g_i g_j v &= g_j v, & 
            g_i f (v_1, \ldots, v_n) &= g_i v_i, 
            & 
            f(g_1 v, \ldots, g_n v) = v
        \end{aligned}
        \right\}.
    \]
    The semifree monad on $\state$ has functor $\state\semifree(x) = X + (S \times X)^S$.
    Its presentation can be simplified.
    The equation $f(g_1 v, \ldots, g_n v) = \newa v$ implies that $\newa$ can be expressed as $f(g_1 (-), \ldots, g_n (-))$.
    Then, some equations turn out to be redundant, like $g_i \newa v = g_i v$ or $\newa \newa v = \newa v$.
    After simplifications, we end up with $\Sigma\semifree_\state = \set{f:n} \cup \setvbar{g_i:1}{0 \leq i \leq n}$, and
    \[
        E\semifree_\state =
        \left\{
            \begin{aligned}
                f(g_1 v_1, \ldots, g_n v_n) &= f(v_1, \ldots, v_n), & g_i g_j v &= g_j v,\\
                f(g_i v, \ldots, g_i v) &= g_i v, & g_i f (v_1, \ldots, v_n) &= g_i v_i
            \end{aligned}
        \right\}.
    \]
\end{example}

\begin{example}\label{ex:id_sn}
    Consider the repeated semifree construction on the identity monad $\Id$ on $\catset$.
    \begin{itemize}
        \item
        $\Id$ is presented by $\Sigma = E = \emptyset$.
        Note that $\Id$-algebras are identity morphisms because of the unit axiom \eqref{eqn:unit_axioms}, 
        and $(\Sigma,E)$-algebras are just sets with no operations.
        The presentation sends an $\Id$-algebra $(X,\id_X)$ to the $(\Sigma,E)$-algebra $(X,\emptyset)$.
        
        \item
        $\Id\semifree(X) = X + X$ 
        is presented by $\Sigma\semifree = \set{\newa : 1}$ and $E\semifree = \set{\newa \newa v = \newa v}$.
        
        \item
        $\Id\nsemifree{2}(X) = X + (X+X)$ is presented by $\Sigma\nsemifree{2} = \set{\newa, \newb : 1}$ and
        \[
            E\nsemifree{2} =
            \set{
                \newb \newb v = \newb v, \quad
                \newb \newa v = \newa v, \quad
                \newa \newb v = \newa v, \quad
                \newa \newa v = \newa v
            }.
        \]
        The equation directly given by \eqref{eqn:a_inside_terms_disappear} is $\newa \newa \newb v = \newa \newb v$, which simplifies to $\newa \newa v = \newa v$ using \cref{lem:generalisation_eq_23}.
        We can summarize $E\nsemifree{2}$ as saying that $\newa$ and $\newb$ are idempotent, and that $\newa$ absorbs $\newb$ on the left and on the right.
        
        \item
        Repeating the procedure $n$ times, we inductively obtain a monad on $n+1$ disjoint copies of $X$, $\Id\nsemifree{n}(X) = X + \ldots + X \isom n_\infty \times X$, where $n_\infty \defeq n \uplus \set{\infty} = \set{0, \ldots, n-1, \infty}$.
        The set $n_\infty$ is a linear order, as a subset of the extended natural numbers $\N \uplus \set{\infty}$.
        It is also a monoid, with unit $\infty$ and operation $\mathsf{min}$.
        Hence, we have a writer monad. 
        Its presentation is given by $n$ unary idempotents $\Sigma\nsemifree{n} = \set{\newa_0,\ldots,\newa_{n-1} : 1}$.
        An idempotent $\newa_i$ absorbs another one $\newa_j$ on both sides whenever $i < j$:
        \[
            E\nsemifree{n} =
            \setvbar{\newa_i \newa_j v = \newa_{\mathsf{min}(i,j)} v}{0 \leq i,j \leq n-1}.
        \]
        An $\id\nsemifree{n}$-algebra with carrier set $X$ corresponds then to having a left-action of 
        $n_\infty$ on $X$.
        This is not a surprise, as algebras of writer monads are always actions.
        Being a linear order, $n_\infty$ is also a meet-semilattice.
        Such an $\id\nsemifree{n}$-algebra can thus be viewed as a semi-lattice automaton, similar to the lattice automata of \cite{Kupferman_Lustig_2007}. The carrier $X$ is the set of states, the $\newa_i$ are the letters of the alphabet, and a transition on $\newa_i$ is simply associated with the value of $\newa_i$ in the semilattice $n_\infty$.  
    \end{itemize}
\end{example}

\section{Relation between semifree monads and other monad constructions} \label{sec:ideal_monad}

\emph{Ideal monads} were introduced by Aczel et al.~\cite{Aczel_Adamek_Milius_Velebil_2003} to study solutions to guarded recursive equations using coalgebraic methods.
The authors investigated completely iterative monads, following earlier work from Elgot et al.~\cite{Elgot_1978} on iterative algebraic theories.
Ideal monads are an abstraction of the core properties of completely iterative monads, and, in particular, completely iterative monads are ideal. 
Moreover, Ghani \& Uustalu~\cite{Ghani_Uustalu_2004} showed that ideal monads have the right mathematical structure to avoid the problems encountered in the construction of monad coproducts, and they gave a simple construction of coproducts of ideal monads by investigating coalgebraic fixed points.
This same construction was later proved to be applicable to the class of consistent $\catset$-monads \cite{Adamek_Milius_Bowler_Levy_2012}.

We now define ideal monads and show that semifree monads are ideal.

\begin{definition} \label{def:ideal_monad}
    In a category with finite coproducts, an \myemph{ideal monad} is a quintuple $(T,\eta,\mu,T_0,m_0)$ where $(T,\eta,\mu)$ is a monad with functor $T = \Id + T_0$, unit $\eta = \inl^{\Id+T} : \Id \to \Id + T_0$, and where $\mu : TT \to T$ ``restricts'' to the natural transformation $m_0 : T_0T \to T_0$, meaning that the following square commutes.
    \begin{equation} \label{eqn:diagram_def_ideal_monad}
        \begin{tikzcd}[column sep=large]
            T_0T
                \ar[r, "\inr^{\Id + T} \circ T"]
                \ar[d, "m_0"']
            & TT
                \ar[d, "\mu"] \\
            T_0
                \ar[r, "\inr^{\Id + T}"']
            & T
        \end{tikzcd}
    \end{equation}
\end{definition}

\noindent Notice that having \eqref{eqn:diagram_def_ideal_monad} commute is equivalent to having
\begin{equation} \label{eqn:def_ideal_monad}
    \mu = [\Id_{\Id + T_0}, \inr^{\Id + T_0} \circ m_0].
\end{equation}

Examples of ideal monads include 
(free) completely iterative monads \cite[Example~4.4]{Aczel_Adamek_Milius_Velebil_2003},
algebraically free monads,
exception monads, and interactive output monads
\cite{Ghani_Uustalu_2004}.



\begin{example}
    Semifree monads are ideal.
    Take a semifree monad $(M\semifree,\eta\semifree,\mu\semifree)$.
    It is enough to show that its multiplication $\mu\semifree$ satisfies \eqref{eqn:def_ideal_monad}.
    For that, let
    \[
        m_0 \defeq \big( M M\semifree \xrightarrow{\smash{ M[\eta,\Id_M] } } MM \xrightarrow{\mu} M \big)
    \]
    and notice that the equation on the right-hand side of \eqref{eqn:def_ideal_monad} becomes the exact definition of $\mu\semifree$ in \cref{def:semifree_monad}.
    Hence, $(M\semifree,\eta\semifree,\mu\semifree,M,m_0)$ is ideal.
\end{example}

Semifree monads thus enjoy all the nice properties of ideal monads such as having simple coproducts \cite{Ghani_Uustalu_2004}.

The concrete isomorphism between $M\semifree$-algebras and $M$-semialgebras can be phrased as, any semialgebra structure for $M$ on an object $X$ can be replaced by an Eilenberg-Moore algebra structure for $M\semifree$. 
We will show that this property has an analogue for ideal monads. 
However, since $T_0$ is only assumed to be an endofunctor, there is no notion of $T_0$-semialgebra, but we can use the ideal monad structure to define an analogue of the associativity diagram of Eilenberg-Moore algebras  \eqref{eqn:associativity_axiom2}. 
This leads us to consider functor $T_0$-algebras (i.e., morphisms of the type $a \colon T_0 X \to X$), such that a square, analogous to the associativity square of \eqref{eqn:associativity_axiom2}, commutes (diagram \eqref{eqn:analogue_semialgebra_algebra_for_ideal_monads} below).
We denote the category of functor $T_0$-algebras and $T_0$-algebra morphisms by $\catalg{T_0}$.

\begin{lemma} \label{lem:analogue_semialgebra_algebra_for_ideal_monads}
    Let $(T,\eta,\mu,T_0,m_0)$ be an ideal monad.
    There is an isomorphism between $\EM{T}$ and the full subcategory of $\catalg{T_0}$ of all $a:T_0 X \to X$ that makes \eqref{eqn:analogue_semialgebra_algebra_for_ideal_monads} commute.
    \begin{equation} \label{eqn:analogue_semialgebra_algebra_for_ideal_monads}
        \begin{tikzcd}[column sep=large]
            T_0 T X
                \ar[r, "{T_0 [\Id_X, a]}"]
                \ar[d, "m_{0,X}"']
            & T_0 X
                \ar[d, "a"] \\
            T_0 X \ar[r, "a"']
            & X
        \end{tikzcd}
    \end{equation}
\end{lemma}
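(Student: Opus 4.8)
The plan is to exploit the rigidity that the ideal structure $T = \Id + T_0$, $\eta = \inl^{\Id+T_0}$ imposes on $T$-algebras, and then to translate the associativity axiom \eqref{eqn:associativity_axiom2} directly into the square \eqref{eqn:analogue_semialgebra_algebra_for_ideal_monads}. The whole correspondence will be the identity on carriers and on underlying maps, so the main content is a bookkeeping computation with coproduct injections, and functoriality will then be immediate.

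First I would analyse the objects. A $T$-algebra is a map $\gamma : TX \to X$, and since $TX = X + T_0 X$ it is a copairing $\gamma = [\gamma_1, \gamma_2]$ with $\gamma_1 : X \to X$ and $\gamma_2 : T_0 X \to X$. The unit axiom \eqref{eqn:unit_axioms2}, together with $\eta_X = \inl^{X + T_0 X}$, gives $\gamma_1 = \gamma \circ \inl^{X+T_0X} = \id_X$. Hence every $T$-algebra has the form $\gamma = [\Id_X, a]$ for the unique functor $T_0$-algebra $a \defeq \gamma_2 : T_0 X \to X$, and conversely any $a : T_0 X \to X$ yields a candidate $[\Id_X, a]$ whose unit axiom holds by construction. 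This fixes a bijection between the underlying data of $T$-algebras and of functor $T_0$-algebras.

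Second I would translate the associativity of $\gamma = [\Id_X, a]$ into a condition on $a$. Using $\mu_X = [\Id_{TX}, \inr^{X+T_0X} \circ m_{0,X}]$ from \eqref{eqn:def_ideal_monad} and decomposing the domain as $TTX = TX + T_0 TX$, both composites $\gamma \circ \mu_X$ and $\gamma \circ T\gamma$ restrict to $\gamma$ on the $TX$-summand, so they agree there automatically. On the $T_0 TX$-summand, $\gamma\circ\mu_X$ becomes $a \circ m_{0,X}$, while $T\gamma = \gamma + T_0\gamma$ makes $\gamma \circ T\gamma$ restrict to $a \circ T_0\gamma = a \circ T_0[\Id_X, a]$. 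Thus associativity of $\gamma$ is equivalent to $a \circ m_{0,X} = a \circ T_0[\Id_X, a]$, which is exactly the commutativity of \eqref{eqn:analogue_semialgebra_algebra_for_ideal_monads}. This identifies the $T$-algebras with precisely those functor $T_0$-algebras satisfying the square.

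Finally I would verify that morphisms coincide, which also confirms that nothing is lost by passing to the \emph{full} subcategory. For $\gamma = [\Id_X,a]$ and $\delta = [\Id_Y, b]$, a map $f : X \to Y$ satisfies $f \circ \gamma = \delta \circ Tf$ iff, expanding along $TX = X + T_0 X$ and using $Tf = f + T_0 f$, its two components read $f = f$ and $f \circ a = b \circ T_0 f$; the latter is exactly the functor $T_0$-algebra homomorphism condition. Hence the assignments $[\Id_X,a] \mapsto a$ and $a \mapsto [\Id_X,a]$ extend to mutually inverse functors that leave carriers and underlying maps unchanged, giving the claimed isomorphism of categories. The one point demanding care is the second step, where one must track the coproduct injections carefully when expanding $\mu_X$ and $T\gamma$ over $TTX = TX + T_0 TX$; beyond that, no genuine obstacle arises, as every equality reduces to the universal property of the coproduct.
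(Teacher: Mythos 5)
Your proposal is correct and follows essentially the same route as the paper's proof: the unit axiom forces every $T$-algebra into the form $[\Id_X,a]$, associativity is then checked componentwise over $TTX = TX + T_0 TX$ (trivial on the first summand, yielding exactly the square \eqref{eqn:analogue_semialgebra_algebra_for_ideal_monads} on the second), and the homomorphism conditions match by expanding $Tf = f + T_0 f$. There is no substantive difference in decomposition, key steps, or level of generality.
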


As observed in \cref{lem:semifree_construction_is_a_functor}, the semifree construction is an endofunctor on the category of monads $\catmon{C}$.
It is therefore a natural question to ask whether $(-)\semifree$ is a \emph{monad transformer}, i.e., a pointed endofunctor on $\catmon{C}$ \cite{Liang_Hudak_Jones_1995}. 
Here, \emph{pointed} means to admit a natural transformation $\id_\catmon{C} \Rightarrow (-)\semifree$.
It turns out that this functor is not pointed, but it is \emph{co-pointed}, meaning that it admits a natural transformation $\epsilon\colon (-)\semifree \Rightarrow \id_\catmon{C}$. We have, in fact, a comonad $((-)\semifree,\epsilon,\delta)$ on $\catmon{C}$.

We collect the above observations in the following lemmas.

\begin{lemma} \label{lem:semifree_construction_not_pointed_endofunctor}
    Given a category $\cat{C}$ with finite coproducts, the semifree endofunctor $(-)\semifree : \catmon{C} \to \catmon{C}$ is not pointed.
\end{lemma}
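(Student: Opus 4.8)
The plan is to disprove pointedness by exhibiting a \emph{single} monad that admits no monad morphism into its own semifree monad. A point would be a natural transformation $\iota \colon \id_{\catmon{C}} \Rightarrow (-)\semifree$, and in particular it would supply, for every monad $M$, a monad morphism $\iota_M \colon M \Rightarrow M\semifree$. Naturality is not even needed for the obstruction: it suffices to produce one monad $M$ for which $\catmon{C}(M, M\semifree) = \emptyset$, since then the component $\iota_M$ cannot exist.

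Working in $\cat{C} = \catset$, I take $M$ to be the constant monad $Z$ with $ZX = 1$ for all $X$ (the terminal object of $\catmon{C}$), whose unit $\eta^Z_X \colon X \to 1$ is the unique map and whose multiplication is the forced map $1 \to 1$. One checks immediately that this is a monad. By \cref{def:semifree_monad} its semifree monad has functor $Z\semifree X = X + 1$ and unit $\eta^{Z\semifree}_X = \inl^{X + 1}$.

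Now suppose, towards a contradiction, that $\sigma \colon Z \Rightarrow Z\semifree$ is a monad morphism. The unit axiom \eqref{eqn:monad_map_axiom1} requires $\sigma_X \circ \eta^Z_X = \eta^{Z\semifree}_X = \inl^{X + 1}$ for every set $X$. The key observation is that $\eta^Z_X$ collapses all of $X$ to a single point, so the left-hand side $\sigma_X \circ \eta^Z_X$ is a constant map $X \to X + 1$, whereas $\inl^{X+1}$ is injective. For any $X$ with at least two elements these maps differ, so no such $\sigma$ exists. Notice that the multiplication axiom \eqref{eqn:monad_map_axiom2} plays no role here: freeing up the unit in the semifree construction makes the unit axiom alone unsatisfiable for a unit-collapsing monad. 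Consequently there is no monad morphism $Z \Rightarrow Z\semifree$, hence no natural family $\iota$ can exist, and $(-)\semifree$ is not pointed.

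The main thing to be careful about is the passage to a general base category $\cat{C}$. The argument goes through verbatim whenever $\cat{C}$ has a terminal object $1$ together with an object $X$ for which $\inl^{X + 1}$ fails to factor through $\eta^Z_X \colon X \to 1$ (e.g. any $X$ with two distinct global points), which covers the case of interest $\cat{C} = \catset$ and all the $\catset$-monads studied in the paper. Pinning down the correct nondegeneracy hypothesis on $\cat{C}$ is the only delicate point; the combinatorial core of the proof is immediate.
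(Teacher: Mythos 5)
Your proof is correct, and it rests on the same counterexample as the paper's: the terminal monad $\singleset$ on $\catset$ with $\singleset(X)=1$ (your $Z$), whose semifree monad is $X \mapsto X+1$ with unit $\inl^{X+1}$. The difference is in how the contradiction is extracted. The paper first pins down the shape of any candidate monad morphism $\tau_{\singleset}\colon \singleset \Rightarrow \singleset\semifree$: at $X=\emptyset$ the component $1 \to \emptyset + 1$ is forced to be $\inr$, and naturality of $\tau_{\singleset}$ along $\emptyset \to X$ then forces $\tau_{\singleset,X} = \inr^{X+1}$ for every $X$; this contradicts the unit axiom \eqref{eqn:monad_map_axiom1} because $\inr \circ \eta_X \neq \inl^{X+1}$ for nonempty $X$. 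You bypass both the empty set and the naturality step: the unit axiom alone gives $\sigma_X \circ \eta^Z_X = \inl^{X+1}$, and the left-hand side factors through the singleton, hence is constant, while $\inl^{X+1}$ is injective --- impossible as soon as $X$ has two elements. Your route is more economical (one component of one axiom suffices, and the multiplication axiom and all naturality are irrelevant), and it isolates exactly the nondegeneracy needed to run the argument in a base category other than $\catset$; note that some such hypothesis is genuinely needed, since the lemma as stated quantifies over all $\cat{C}$ with finite coproducts but fails for degenerate $\cat{C}$ (e.g.\ the terminal category), and the paper's own proof likewise silently works in $\catset$. One small caution on your closing remark: in a general $\cat{C}$, ``two distinct global points of $X$'' suffices only if $\inl^{X+1}$ does not identify them (coproduct injections need not be monic in arbitrary categories), so the robust formulation of your hypothesis is the one you also give, namely that $\inl^{X+1}$ does not factor through $X \to 1$.
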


\begin{lemma}\label{lem:copointed_endofunctor}
    Given a category $\cat{C}$ with finite coproducts, the triple $((-)\semifree, \epsilon, \delta)$ is a comonad on $\catmon{C}$ by
     defining for a monad $(M,\eta,\mu)$ and an object $X$:
\begin{align*}
        \epsilon_{M,X} : X+MX &\xrightarrow{[\eta_X,\id_{MX}]} MX, \text{ and} \\
        \delta_{M,X} : X + MX &\xrightarrow{\id_X + \inr^{X+MX}} X + (X + MX).
    \end{align*}
\end{lemma}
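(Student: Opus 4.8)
The plan is to verify directly the data and axioms that make $((-)\semifree,\epsilon,\delta)$ a comonad on $\catmon{C}$. There are three things to check: (i) for each monad $(M,\eta,\mu)$ the maps $\epsilon_M : M\semifree \Rightarrow M$ and $\delta_M : M\semifree \Rightarrow (M\semifree)\semifree$ are \emph{monad morphisms}, since the components of a natural transformation between endofunctors of $\catmon{C}$ must themselves lie in $\catmon{C}$; (ii) $\epsilon$ and $\delta$ are natural in $M$, i.e.\ they commute with $\sigma\semifree$ for every monad morphism $\sigma$; and (iii) the counit laws $\epsilon_{M\semifree}\circ\delta_M = \id_{M\semifree} = (\epsilon_M)\semifree\circ\delta_M$ and the coassociativity law $\delta_{M\semifree}\circ\delta_M = (\delta_M)\semifree\circ\delta_M$ hold. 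Because every map involved is assembled from the injections $\inl,\inr$ together with $\eta$ and $\mu$, each identity is an equality of $\cat{C}$-morphisms that I would establish at a fixed object $X$ by precomposing with $\inl$ and $\inr$ and using the universal property of the coproduct.

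For step (i), naturality in $X$ and the unit axiom \eqref{eqn:monad_map_axiom1} are immediate from $\eta\semifree = \inl$: one has $\epsilon_{M,X}\circ\inl = [\eta_X,\id_{MX}]\circ\inl = \eta_X$, and $\delta_{M,X}\circ\inl$ is the outer left injection, which is the unit of $(M\semifree)\semifree$. The multiplicativity axiom \eqref{eqn:monad_map_axiom2} for $\epsilon_M$ becomes tractable after observing that the inner map $[\eta,\Id_M]$ in the definition of $\mu\semifree$ (\cref{def:semifree_monad}) is exactly $\epsilon_M$; precomposing $\epsilon_M\circ\mu\semifree$ with $\inl$ and $\inr$ then reduces, through the naturality of $\eta$ and the left unit law \eqref{eqn:unit_axioms} of $M$, to the two summands of $\mu\circ(\epsilon_M\epsilon_M)$.

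Step (ii) is a short check. Using $\sigma\semifree_X = \id_X + \sigma_X$ from \cref{lem:semifree_construction_is_a_functor}, the square $\epsilon_T\circ\sigma\semifree = \sigma\circ\epsilon_M$ splits into an $\inl$-component, which is precisely the unit axiom $\sigma_X\circ\eta^M_X = \eta^T_X$ for $\sigma$, and an $\inr$-component, which is the identity $\sigma_X = \sigma_X$; the naturality square for $\delta$ splits the same way, both halves being immediate. The comonad identities of step (iii) are then pure coproduct bookkeeping: for instance $\epsilon_{M\semifree,X}\circ\delta_{M,X}$ carries the left copy of $X$ to itself via $\inl$ and the summand $MX$ to itself via $\inr$, hence equals $\id_{M\semifree X}$, and the remaining counit law and coassociativity follow by the same reasoning, precomposing with $\inl$ and $\inr$.

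I expect the one genuinely laborious step to be the multiplicativity axiom \eqref{eqn:monad_map_axiom2} for $\delta_M$, because it forces one to expand the two \emph{nested} semifree multiplications, $\mu\semifree$ for $M$ and $\mu\semifree$ for $M\semifree$, each given by the formula of \cref{def:semifree_monad}. My plan there is again to precompose the domain $M\semifree M\semifree X = M\semifree X + M M\semifree X$ with its two injections: on the left summand both sides collapse to $\delta_{M,X}$, while on the right summand, after cancelling the outer $\inr$'s, the equality reduces to the naturality of $\eta$ and $\mu$ and the unit and associativity laws \eqref{eqn:unit_axioms}--\eqref{eqn:associativity_axiom} of $M$. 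Combining (i)--(iii) then gives that $((-)\semifree,\epsilon,\delta)$ is a comonad on $\catmon{C}$.
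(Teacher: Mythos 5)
Your proposal is correct and follows essentially the same route as the paper's proof: direct componentwise verification, via the coproduct injections, that $\epsilon_M$ and $\delta_M$ are monad morphisms, that $\epsilon$ and $\delta$ are natural in $M$, and that the counit and coassociativity laws hold. The only (harmless) discrepancy is that the step you flag as laborious needs even less than you predict: the multiplicativity of $\delta_M$ on the $MM\semifree X$ summand reduces to the coproduct identity $[\inl,\id]\circ(\id+\inr)=\id$ applied under $M$, with no appeal to the naturality of $\eta$ or $\mu$ nor to the unit/associativity laws of $M$.
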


As consequence of the above, the semifree construction is not a monad transformer, at least not with the straightforward definitions.
In general, a monad transformer $T_L$ is defined on a ``base'' monad $L$ such that  $L = T_L(\id_\cat{C})$.
In our case, we have $(\id_\cat{C})\semifree = \id_\cat{C} + \id_\cat{C}$. 
This is an interesting monad in itself, even though it was not the starting point of the semifree construction.

\newpage

\section{Conclusion}\label{sec:future_work}
In this paper, we proved a uniform algebraic presentation of semifree monads $M\semifree(X) = X + M(X)$ when an algebraic presentation of the monad $M$ is known.
We also showed that semifree monads are instances of ideal monads, and that the semifree construction is not a monad transformer, but it is a comonad on the category of monads. 


There are several directions for future work.
Given that the functor part of the semifree monad $M\semifree$ is a functor coproduct $\Id + M$, it would be interesting to understand better the relationship to coproducts of monads, and whether \cref{thm:conjecture} could be generalised to give presentations of (certain) monad coproducts~\cite{Ghani_Uustalu_2004}. 
Similarly, we would like to investigate if the observations we made in \cref{ex:id_sn} on algebras for the iterated semifree monad on identity can be generalised to other monads.


The presentation of semifree monads provides a means to study no-go theorems for weak distributive laws, using the correspondence between weak distributive laws for $M$ and certain distributive laws for $M\semifree$ \cite{Petrisan_Sarkis_2021}, and a similar approach as in \cite{Zwart_Marsden_2019}.
In \cite{Zwart_Marsden_2019}, no-go theorems for distributive laws are proved using criteria on presentations of the monads.
However, their no-go theorems are not directly applicable for the semifree monad, as they require equations where one side is a single variable, e.g., unitality $e * v = v$ or idempotence $v * v = v$.
Such equations never occur in the presentations of $M\semifree$, as the simplest terms that those equations contain are of the form $\newa v$. 
Hence new problematic equations in presentations of monads must be identified.
Furthermore, weak distributive laws for $M$ correspond to distributive laws for $M\semifree$ satisfying an extra condition, and hence to a subclass of the composite theories for $M\semifree$.
This subclass may satisfy more equations, which could be helpful in establishing no-go theorems for weak distributive laws for $M$.


A more fundamental question related to the definition of algebraic presentation is whether, over the category $\catset$, an isomorphism between Eilenberg-Moore categories of algebras implies an isomorphism of $\catset$-monads.
This fails over general categories, but an argument may exist for the specific case of $\catset$. 

Finally, since $\catset$-monads sometimes have interesting liftings to other categories, one could consider the following question.
Suppose we have a uniform presentation for a construction on a certain $\catset$-monad which can be lifted to a category $\cat{C}$, 
does this presentation also lift to $\cat{C}$?
This was investigated in \cite{Mio_Sarkis_Vignudelli_2021} for the non-empty convex distribution monad and its Hausdorff–Kantorovich lifting to metric spaces.


\newpage
\bibliographystyle{meta/splncs04}
\bibliography{main}

\begin{thebibliography}{10}
\providecommand{\url}[1]{\texttt{#1}}
\providecommand{\urlprefix}{URL }
\providecommand{\doi}[1]{https://doi.org/#1}

\bibitem{Aczel_Adamek_Milius_Velebil_2003}
Aczel, P., Ad{\'{a}}mek, J., Milius, S., Velebil, J.: Infinite trees and
  completely iterative theories: a coalgebraic view. Theoretical Computer
  Science  \textbf{300}(1-3),  1--45 (2003).
  \doi{10.1016/S0304-3975(02)00728-4}

\bibitem{Adamek_Rosicky_1994}
Adamek, J., Rosicky, J.: Locally Presentable and Accessible Categories. London
  Mathematical Society Lecture Note Series, Cambridge University Press (1994).
  \doi{10.1017/CBO9780511600579}

\bibitem{Adamek_Milius_Bowler_Levy_2012}
Ad{\'{a}}mek, J., Milius, S., Bowler, N.J., Levy, P.B.: Coproducts of monads on
  set. In: Proceedings of the 27th Annual {IEEE} Symposium on Logic in Computer
  Science, {LICS} 2012, Dubrovnik, Croatia, June 25-28, 2012. pp. 45--54.
  {IEEE} Computer Society (2012). \doi{10.1109/LICS.2012.16}

\bibitem{Awodey_2006}
Awodey, S.: Category Theory. Oxford Logic Guides, Ebsco Publishing (2006)

\bibitem{Barr_Wells_1985_TTT}
Barr, M., Wells, C.: Toposes, Triples and Theories. Comprehensive studies in
  mathematics, Springer New York (1985)

\bibitem{Beck_1969}
Beck, J.: Distributive laws. In: Seminar on triples and categorical homology
  theory. p. 119–140. Springer (1969)

\bibitem{Bonchi_Santamaria_2021}
Bonchi, F., Santamaria, A.: Combining semilattices and semimodules. In: Kiefer,
  S., Tasson, C. (eds.) Foundations of Software Science and Computation
  Structures - 24th International Conference, {FOSSACS} 2021, Proceedings.
  LNCS, vol. 12650, pp. 102--123. Springer (2021).
  \doi{10.1007/978-3-030-71995-1\_6}

\bibitem{Bonchi_Sokolova_Vignudelli_2019}
Bonchi, F., Sokolova, A., Vignudelli, V.: The theory of traces for systems with
  nondeterminism and probability. In: 34th Annual {ACM/IEEE} Symposium on Logic
  in Computer Science, {LICS} 2019. pp. 1--14. {IEEE} (2019).
  \doi{10.1109/LICS.2019.8785673}

\bibitem{Bohm_2010}
Böhm, G.: The weak theory of monads. Advances in Mathematics  \textbf{225}
  (2010). \doi{10.1016/j.aim.2010.02.015}

\bibitem{Elgot_1978}
Elgot, C.C., Bloom, S.L., Tindell, R.: On the algebraic structure of rooted
  trees. Journal of Computer and System Sciences  \textbf{16},  361--399
  (1978). \doi{10.1007/978-1-4613-8177-8\_7}

\bibitem{Garner_2020}
Garner, R.: The vietoris monad and weak distributive laws. Applied Categorical
  Structures  \textbf{28}(2),  339--354 (2020).
  \doi{10.1007/s10485-019-09582-w}

\bibitem{Ghani_Uustalu_2004}
Ghani, N., Uustalu, T.: Coproducts of ideal monads. {RAIRO} Theoretical
  Informatics and Applications  \textbf{38}(4),  321--342 (2004).
  \doi{10.1051/ita:2004016}

\bibitem{Goy_Petrisan_2020}
Goy, A., Petrisan, D.: Combining probabilistic and non-deterministic choice via
  weak distributive laws. In: Hermanns, H., Zhang, L., Kobayashi, N., Miller,
  D. (eds.) {LICS} '20: 35th Annual {ACM/IEEE} Symposium on Logic in Computer
  Science. pp. 454--464. {ACM} (2020). \doi{10.1145/3373718.3394795}

\bibitem{Heerdt_Sammartino_Silva_2020}
van Heerdt, G., Sammartino, M., Silva, A.: Learning automata with side-effects.
  In: Coalgebraic Methods in Computer Science - 15th {IFIP} {WG} 1.3
  International Workshop, {CMCS} 2020. LNCS, vol. 12094, pp. 68--89. Springer
  (2020). \doi{10.1007/978-3-030-57201-3\_5}

\bibitem{Hyland_Plotkin_Power_2006}
Hyland, M., Plotkin, G.D., Power, J.: Combining effects: Sum and tensor.
  Theoretical Computer Science  \textbf{357}(1-3),  70--99 (2006).
  \doi{10.1016/j.tcs.2006.03.013}

\bibitem{Hyland_2021}
Hyland, M., Tasson, C.: The linear-non-linear substitution 2-monad. In: Spivak,
  D.I., Vicary, J. (eds.) Proceedings of the 3rd Annual International Applied
  Category Theory Conference 2020, {ACT} 2020. EPTCS, vol.~333, pp. 215--229
  (2020). \doi{10.4204/EPTCS.333.15}

\bibitem{Jacobs_1994}
Jacobs, B.: Semantics of weakening and contraction. Annals of Pure and Applied
  Logic  \textbf{69}(1),  73--106 (1994). \doi{10.1016/0168-0072(94)90020-5}

\bibitem{Jacobs_2010}
Jacobs, B.: Convexity, duality and effects. In: Calude, C.S., Sassone, V.
  (eds.) Theoretical Computer Science - 6th {IFIP} {TC} 1/WG 2.2 International
  Conference, {TCS} 2010. Proceedings. {IFIP} Advances in Information and
  Communication Technology, vol.~323, pp. 1--19. Springer (2010).
  \doi{10.1007/978-3-642-15240-5\_1}

\bibitem{Jacobs_Silva_Sokolova_2014}
Jacobs, B., Silva, A., Sokolova, A.: Trace semantics via determinization.
  Journal of Computer and System Sciences  \textbf{81}(5),  859--879 (2015).
  \doi{10.1016/j.jcss.2014.12.005}

\bibitem{Jaskelioff_09}
Jaskelioff, M.: Modular monad transformers. In: Castagna, G. (ed.) Programming
  Languages and Systems, 18th European Symposium on Programming, {ESOP} 2009.
  Proceedings. LNCS, vol.~5502, pp. 64--79. Springer (2009).
  \doi{10.1007/978-3-642-00590-9\_6}

\bibitem{Katsumata_Rivas_Uustalu_2020}
Katsumata, S., Rivas, E., Uustalu, T.: Interaction laws of monads and comonads.
  In: Hermanns, H., Zhang, L., Kobayashi, N., Miller, D. (eds.) {LICS} '20:
  35th Annual {ACM/IEEE} Symposium on Logic in Computer Science. pp. 604--618.
  {ACM} (2020). \doi{10.1145/3373718.3394808}

\bibitem{Klin_Salamanca_2018}
Klin, B., Salamanca, J.: Iterated covariant powerset is not a monad. In:
  Staton, S. (ed.) Proceedings of the Thirty-Fourth Conference on the
  Mathematical Foundations of Programming Semantics, {MFPS} 2018. Electronic
  Notes in Theoretical Computer Science, vol.~341, pp. 261--276. Elsevier
  (2018). \doi{10.1016/j.entcs.2018.11.013}

\bibitem{Kupferman_Lustig_2007}
Kupferman, O., Lustig, Y.: Lattice automata. In: Cook, B., Podelski, A. (eds.)
  Verification, Model Checking, and Abstract Interpretation, 8th International
  Conference, {VMCAI} 2007. Proceedings. LNCS, vol.~4349, pp. 199--213.
  Springer (2007). \doi{10.1007/978-3-540-69738-1\_14}

\bibitem{Liang_Hudak_Jones_1995}
Liang, S., Hudak, P., Jones, M.P.: Monad transformers and modular interpreters.
  In: Cytron, R.K., Lee, P. (eds.) Conference Record of POPL'95: 22nd {ACM}
  {SIGPLAN-SIGACT} Symposium on Principles of Programming Languages, San
  Francisco, California, USA, January 23-25, 1995. pp. 333--343. {ACM} Press
  (1995). \doi{10.1145/199448.199528}

\bibitem{MacLane_1971}
MacLane, S.: Categories for the Working Mathematician, Graduate Texts in
  Mathematics, vol.~5. Springer-Verlag, New York (1971)

\bibitem{Manes_1976}
Manes, E.: Algebraic Theories, Graduate Texts in Mathematics, vol.~26. Springer
  (1976). \doi{10.1007/978-1-4612-9860-1}

\bibitem{Metayer_2004}
M\'etayer, F.: State monads and their algebras. arXiv:math/0407251, Category
  Theory  (2004). \doi{10.48550/arXiv.math/0407251}

\bibitem{Milius_Pattinson_Schroeder_2015}
Milius, S., Pattinson, D., Schr{\"{o}}der, L.: Generic trace semantics and
  graded monads. In: Moss, L.S., Sobocinski, P. (eds.) 6th Conference on
  Algebra and Coalgebra in Computer Science, {CALCO} 2015. LIPIcs, vol.~35, pp.
  253--269. Schloss Dagstuhl - Leibniz-Zentrum f{\"{u}}r Informatik (2015).
  \doi{10.4230/LIPIcs.CALCO.2015.253}

\bibitem{Mio_Sarkis_Vignudelli_2021}
Mio, M., Sarkis, R., Vignudelli, V.: Combining nondeterminism, probability, and
  termination: Equational and metric reasoning. In: 36th Annual {ACM/IEEE}
  Symposium on Logic in Computer Science, {LICS} 2021. pp. 1--14. {IEEE}
  (2021). \doi{10.1109/LICS52264.2021.9470717}

\bibitem{Mio_Vignudelli_2020}
Mio, M., Vignudelli, V.: Monads and quantitative equational theories for
  nondeterminism and probability. In: Konnov, I., Kov{\'{a}}cs, L. (eds.) 31st
  International Conference on Concurrency Theory, {CONCUR} 2020. LIPIcs,
  vol.~171, pp. 28:1--28:18. Schloss Dagstuhl - Leibniz-Zentrum f{\"{u}}r
  Informatik (2020). \doi{10.4230/LIPIcs.CONCUR.2020.28}

\bibitem{Moggi_1991}
Moggi, E.: Notions of computation and monads. Information and Computation
  \textbf{93}(1),  55--92 (1991). \doi{10.1016/0890-5401(91)90052-4},
  selections from 1989 IEEE Symposium on Logic in Computer Science

\bibitem{Petrisan_Sarkis_2021}
Petrisan, D., Sarkis, R.: Semialgebras and weak distributive laws. In:
  Sokolova, A. (ed.) Proceedings 37th Conference on Mathematical Foundations of
  Programming Semantics, {MFPS} 2021. {EPTCS}, vol.~351, pp. 218--241 (2021).
  \doi{10.4204/EPTCS.351.14}

\bibitem{Plotkin_Power_2001}
Plotkin, G.D., Power, J.: Notions of computation determine monads. In: Nielsen,
  M., Engberg, U. (eds.) Foundations of Software Science and Computation
  Structures, 5th International Conference, {FOSSACS} 2002. Proceedings. LNCS,
  vol.~2303, pp. 342--356. Springer (2002). \doi{10.1007/3-540-45931-6\_24}

\bibitem{Power_2005}
Power, J.: Discrete {Lawvere} theories. In: Fiadeiro, J.L., Harman, N.,
  Roggenbach, M., Rutten, J.J.M.M. (eds.) Algebra and Coalgebra in Computer
  Science: First International Conference, {CALCO} 2005, Proceedings. LNCS,
  vol.~3629, pp. 348--363. Springer (2005). \doi{10.1007/11548133\_22}

\bibitem{Riehl_2017}
Riehl, E.: Category Theory in Context. Aurora: Dover Modern Math Originals,
  Dover Publications (2017)

\bibitem{Street_2009}
Street, R.: Weak distributive laws. Theory and Applications of Categories
  [electronic only]  \textbf{22},  313--320 (01 2009)

\bibitem{Varacca_Winskel_2006}
Varacca, D., Winskel, G.: Distributing probabililty over nondeterminism.
  Mathematical Structures in Computer Science  \textbf{16},  87--113 (2006).
  \doi{10.1017/S0960129505005074}

\bibitem{Wadler_1992}
Wadler, P.: The essence of functional programming. In: Sethi, R. (ed.)
  Conference Record of the Nineteenth Annual {ACM} {SIGPLAN-SIGACT} Symposium
  on Principles of Programming Languages. pp. 1--14. {ACM} Press (1992).
  \doi{10.1145/143165.143169}

\bibitem{Zwart_Marsden_2019}
Zwart, M., Marsden, D.: No-go theorems for distributive laws. In: 34th Annual
  {ACM/IEEE} Symposium on Logic in Computer Science, {LICS} 2019. pp. 1--13.
  {IEEE} (2019). \doi{10.1109/LICS.2019.8785707}

\end{thebibliography}
\newpage

\section{Appendix}

\begin{lemma} \label{lem:free_object_preserve_by_concrete_isom}
    Concrete isomorphisms preserve free objects
\end{lemma}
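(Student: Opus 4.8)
The plan is to read the statement through the definitions of \cref{def:concrete}: a concrete isomorphism is a concrete functor $\Phi \colon \cat{C} \to \cat{D}$ (so $U_\cat{D} \circ \Phi = U_\cat{C}$) that is an isomorphism of categories, and ``preserving free objects'' should mean that if $A \in \cat{C}$ is free over a set $X$ with universal arrow $\eta \colon X \to U_\cat{C}(A)$, then $\Phi(A)$ is free over $X$ in $\cat{D}$, with the \emph{same} underlying function $\eta$ serving as universal arrow through the identification $U_\cat{D}(\Phi(A)) = U_\cat{C}(A)$. The entire proof is then a transport of the universal property across $\Phi$.

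First I would record that the inverse functor $\Phi^{-1}$ is again concrete, i.e.\ $U_\cat{C} \circ \Phi^{-1} = U_\cat{D}$; this is immediate by precomposing the defining equation $U_\cat{D} \circ \Phi = U_\cat{C}$ with $\Phi^{-1}$. Consequently, for every $D \in \cat{D}$ one has the equality of underlying sets $U_\cat{C}(\Phi^{-1}(D)) = U_\cat{D}(D)$, and the underlying functions of morphisms are preserved on the nose: $U_\cat{D}(\Phi h) = U_\cat{C}(h)$ for each $\cat{C}$-morphism $h$, and $U_\cat{C}(\Phi^{-1} k) = U_\cat{D}(k)$ for each $\cat{D}$-morphism $k$. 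These three identities are the only facts the argument needs.

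Next I would check the universal property of $\Phi(A)$ directly. Given an arbitrary object $D \in \cat{D}$ and a function $g \colon X \to U_\cat{D}(D)$, I must produce a unique $\cat{D}$-morphism $k \colon \Phi(A) \to D$ with $U_\cat{D}(k) \circ \eta = g$. For existence, set $B \defeq \Phi^{-1}(D)$; by the previous paragraph $g$ is a function $X \to U_\cat{C}(B)$, so freeness of $A$ in $\cat{C}$ yields a unique $h \colon A \to B$ with $U_\cat{C}(h) \circ \eta = g$, and $k \defeq \Phi(h) \colon \Phi(A) \to D$ works because $U_\cat{D}(k) = U_\cat{C}(h)$. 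For uniqueness, any competitor $k'$ transports to $h' \defeq \Phi^{-1}(k')$, which satisfies $U_\cat{C}(h') \circ \eta = U_\cat{D}(k') \circ \eta = g$; the uniqueness clause for $A$ then forces $h' = h$, whence $k' = \Phi(h') = \Phi(h) = k$.

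I do not expect a genuine obstacle here: the lemma is a routine diagram-chase once the concreteness of $\Phi^{-1}$ is in hand. The only point that needs care is notational bookkeeping, namely keeping the identification $U_\cat{D}(\Phi(A)) = U_\cat{C}(A)$ explicit so that a single function $\eta$ legitimately plays the role of universal arrow on both sides. It is worth remarking that faithfulness of the forgetful functors is never actually invoked: the strict commutation $U_\cat{D}\circ\Phi = U_\cat{C}$ together with the bijectivity of $\Phi$ on objects and morphisms already supplies all the equalities of underlying data that the transport requires.
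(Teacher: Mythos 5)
Your proof is correct and follows essentially the same route as the paper's: transport the universal property across the isomorphism, using that the inverse functor is also concrete so the same underlying function $\eta$ serves as the universal arrow on both sides. Your treatment of uniqueness (explicitly transporting a competitor back through $\Phi^{-1}$) is slightly more spelled out than the paper's, which compresses existence and uniqueness into one equivalence, but the argument is the same.
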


\begin{proof}[Proof of \cref{lem:free_object_preserve_by_concrete_isom}]
    Take concrete categories $U : \cat{C} \to \catset$ and $U' : \cat{C'} \to \catset$, and a concrete isomorphism $F : C \rightleftarrows C' : F\inv.$
    Let $X$ be a set and $Y \in \cat{C}$ be free on $X$ (w.r.t.~$U$).
    We want to prove that $FY$ is also free on $X$ (w.r.t.~$U'$).
    Since $Y$ is free on $X$, there exists an arrow $\eta_X : X \to UY$.
    We want an arrow $X \to U'FY$, but $F$ being concrete implies $U'F(Y) = UY$, thus we can consider the same arrow $\eta_X$.
    To prove the universal property, take $Z \in C'$ with an arrow $f : X \to U'Z$.
    Consider $F\inv Z$ in $\cat{C}$.
    It comes with the same arrow ${f: X \to U'Z  = UF\inv(Z)}$.
    By the universal property of $Y$ being free on $X$, there exists a unique $\tilde{f} : Y \to F\inv Z$ such that $f = U\tilde{f} \circ \eta_X$.
    This is equivalent to saying that there exists a unique function $F\tilde{f} : FY \to Z$ such that $U'F\tilde{f}$ such that $f = U'F\tilde{f} \circ \eta_X$, concluding the proof.
    \begin{center}
        \begin{tikzpicture}[mybox/.style={draw, inner sep=5pt, rounded corners}]
            \node[mybox,
                    label=west:{$\cat{C}$}
                    ] at (-2,0) {%
                \begin{tikzcd}
                    Y \ar[r, "\tilde{f}"] & F\inv Z
                \end{tikzcd}
            };
            \node[mybox,
                    label=east:{$\cat{C'}$}
                    ] at (2,0) {%
                \begin{tikzcd}
                    FY \ar[r, "F\tilde{f}"] & Z
                \end{tikzcd}
            };
            \node[mybox,
                    label=west:{$\catset$}
                    ] at (0,-2) {%
                \begin{tikzcd}
                    UY = U'FY \ar[rr, "U\tilde{f} = U'F\tilde{f}"] & & U'Z = UF\inv Z \\
                    & X \ar[ul, "\eta_X"] \ar[ur, "f"'] & 
                \end{tikzcd}
            };
        \end{tikzpicture}
    \end{center}
    
\end{proof}

\begin{proof}[Proof of \cref{lem:isom_monads_iff_isom_algebras_cats}]
    $(\Rightarrow)$ Suppose we have a monad isomorphism $\sigma : M \xrightarrow{\isom} T$.
    We construct two concrete functors:
    \[{
        \begin{aligned}
            F: \EM{M} &\rightleftarrows \EM{T} : F\inv \\
            (MX \xrightarrow{\alpha} X) &\mapsto (TX \xrightarrow{\sigma_X\inv} MX \xrightarrow{\alpha} X) \\
            (MX \xrightarrow{\sigma_X} TX \xrightarrow{\beta} X) &\mapsfrom (TX \xrightarrow{\beta} X)
        \end{aligned}
    }\]
    Let us verify that the image of $F$ are indeed $T$-algebras:
    \begin{center}
        \begin{tikzcd}
            X
                \ar[r, "\eta^T_X"]
                \ar[dr, "\eta^M_X" description, bend right]
                \ar[ddr, equal, bend right]
                \ar[dr, phantom, "\scriptstyle \eqref{eqn:monad_map_axiom1}", shift left = .3em]
            & TX \ar[d, "\sigma_X\inv"] \\
            {} 
                \ar[dr, phantom, "\scriptstyle \eqref{eqn:unit_axioms2}", shift left = .7em] 
            & MX \ar[d, "\alpha"] \\
            & X
        \end{tikzcd}
        \qquad \qquad
        \begin{tikzcd}
            T^2 X
                \ar[rr, "\mu^T_X"]
                \ar[d, "T \sigma_X\inv"']
                \ar[drr, phantom, "\scriptstyle \eqref{eqn:monad_map_axiom2}"]
            &
            & TX
                \ar[d, "\sigma_X\inv"] \\
            TMX
                \ar[r, "\sigma_{MX}\inv"]
                \ar[d, "T\alpha"']
                \ar[dr, "\scriptstyle (\sigma\inv \text{ nat.})", phantom]
            & MMX
                \ar[r, "\mu_X"]
                \ar[d, "M\alpha" description]
                \ar[dr, phantom, "\scriptstyle \eqref{eqn:associativity_axiom2}"]
            & MX
                \ar[d, "\alpha"] \\
            TX
                \ar[r, "\sigma_X\inv"']
            & MX
                \ar[r, "\alpha"']
            & X
        \end{tikzcd}
    \end{center}
    Similarly, $F\inv$ is also a valid functor.
    The functors $F$ and $F\inv$ are concrete and are clearly inverse of each other.
    
    $(\Leftarrow)$ Suppose we have a concrete isomorphism 
    \[
        F: \EM{M} \rightleftarrows \EM{T} : F\inv.
    \]
    Take a set $X$.
    In $\EM{M}$, $(MX, \mu^M_X)$ is a free object on $X$.
    By \cref{lem:free_object_preserve_by_concrete_isom}, the object $(TX,\alpha) \defeq F\inv(TX,\mu^T_X)$ is also free on $X$.
    By uniqueness of free objects, there exists an isomorphism $\sigma_X$ between the two and it must, in particular, commute with the units:
    \begin{equation} \label{eqn:isomomorphism_commute_with_units} 
        \begin{tikzcd} 
            (MX,\mu^M_X)
                \ar[rr, "\sigma_X", shift left]
            &
            & (TX, \alpha)
                \ar[ll, "\sigma_X\inv", shift left] \\
            & X 
                \ar[ul, "\eta^M_X"]
                \ar[ur, "\eta^T_X"']
            &
        \end{tikzcd}
    \end{equation}
    We want to prove that $\sigma$ is natural in $X$.
    Take $f:X \to Y$ and let $(TY,\beta) \defeq F\inv(TY,\mu^T_Y)$.
    We will show that $\sigma_Y \circ Mf = Tf \circ \sigma_X$ by using the universal property of $(MX,\mu^M_X)$ as a free objects. More precisely, by showing that $\sigma_Y \circ Mf $ and $Tf \circ \sigma_X$ are both extensions of $\eta^T_Y \circ f$, and hence they must be equal. This amounts to showing that the triangle diagram below in $\catset$ commutes:
    \begin{center}
        \begin{tikzpicture}[mybox/.style={draw, inner sep=5pt, rounded corners}]
            \node[mybox,
                    label=west:{$\EM{M}$}
                    ] at (0,0) {%
                \begin{tikzcd}
                    (MX,\mu^M_X) 
                        \ar[rr, shift left, "\sigma_Y \circ Mf"]
                        \ar[rr, shift right, "Tf \circ \sigma_X"']
                    &
                    & (TY, \beta)
                \end{tikzcd}
            };
            \node[mybox,
                    label=west:{$\catset$}
                    ] at (0,-2.2) {%
                \begin{tikzcd}[row sep=small]
                    MX
                        \ar[rr, shift left, "\sigma_Y \circ Mf"]
                        \ar[rr, shift right, "Tf \circ \sigma_X"']
                    &
                    & TY \\
                    & Y
                        \ar[ur, "\eta^T_Y"']
                    & \\
                    X
                        \ar[ur, "f"']
                        \ar[uu, "\eta^M_X"]
                    & &
                \end{tikzcd}
            };
        \end{tikzpicture}
    \end{center}
    Verification:
    \begin{center}
        \begin{tikzcd}[sep = small]
            {}
                \ar[drr, phantom, "\scriptstyle \eqref{eqn:isomomorphism_commute_with_units}", pos =.55]
            & MX
                \ar[dr, "\sigma_X"]
            & \\
            X
                \ar[ur, "\eta^M_X"]
                \ar[rr, "\eta^T_X" description]
                \ar[d, "f"']
                \ar[drr, phantom, "\scriptstyle (\eta^T \text{ nat.})"]
            &
            & TX
                \ar[d, "Tf"] \\
            Y
                \ar[rr, "\eta^T_Y"']
            &
            & TY
        \end{tikzcd}
        \qquad \qquad
        \begin{tikzcd}[sep=small]
            X
                \ar[rr, "\eta^M_X"]
                \ar[d, "f"']
                \ar[drr, phantom, "\scriptstyle (\eta^M \text{ nat.})"]
            &
            & MX
                \ar[d, "Mf"] \\
            Y
                \ar[rr, "\eta^M_Y" description]
                \ar[dr, "\eta^T_Y"']
                \ar[drr, phantom,  "\scriptstyle \eqref{eqn:isomomorphism_commute_with_units}", pos =.45]
            &
            & MY
                \ar[dl, "\sigma_Y"] \\
            & TY
            & {}
        \end{tikzcd}
    \end{center}
    Similarly, $\sigma\inv$ is also natural in $X$.
    We thus have a natural isomorphism $\sigma : M \Rightarrow T$.
    It remains to check that it is a monad morphism.
    The first requirement $\sigma_X \circ \eta^M_X = \eta^T_X$ holds by definition of $\sigma$, as depicted in \eqref{eqn:isomomorphism_commute_with_units}.
    The second requirement requires us to prove that
    \begin{center}
        \begin{tikzcd}
            MMX
                \ar[r, "M \sigma_X"]
                \ar[d, "\mu^M_X"']
            & MTX
                \ar[r, "\sigma_{TX}"]
            & TTX
                \ar[d, "\mu^T_X"] \\
            MX
                \ar[rr, "\sigma_X"']
            &
            & TX
        \end{tikzcd}
    \end{center}
    commutes.
    Let us prove it using the universal property of free objects again:
    \begin{center}
        \begin{tikzpicture}[mybox/.style={draw, inner sep=5pt, rounded corners}]
            \node[mybox,
                    label=west:{$\EM{M}$}
                    ] at (0,0) {%
                \begin{tikzcd}
                    (MMX,\mu^M_{MX}) 
                        \ar[rr, shift left, "\mu^T_X \cdot \sigma_{TX} \cdot M\sigma_X"]
                        \ar[rr, shift right, "\sigma_X \cdot \mu^M_X"']
                    &
                    & (TY, \beta)
                \end{tikzcd}
            };
            \node[mybox,
                    label=west:{$\catset$}
                    ] at (0,-2.2) {%
                \begin{tikzcd}[row sep=small]
                    MMX
                        \ar[rr, shift left, "\mu^T_X \cdot \sigma_{TX} \cdot M\sigma_X"]
                        \ar[rr, shift right, "\sigma_X \cdot \mu^M_X"']
                    &
                    & TY \\
                    & {}
                    & \\
                    X
                        \ar[uurr, "\sigma_X"']
                        \ar[uu, "\eta^M_{MX}"]
                    & &
                \end{tikzcd}
            };
        \end{tikzpicture}
    \end{center}
    Verification:
    \begin{center}
        \begin{tikzcd}
            MX
                \ar[r, "\eta^M_{MX}"]
                \ar[d, "\sigma_X"']
                \ar[dr, phantom, "\scriptstyle (\eta^M \text{ nat.})"]
            & MMX
                \ar[d, "M \sigma_X"] \\
            TX
                \ar[r, "\eta^M_{TX}" description]
                \ar[dr, "\eta^T_{TX}" description, bend right]
                \ar[ddr, equal, bend right]
                \ar[dr, phantom, "\scriptstyle \eqref{eqn:isomomorphism_commute_with_units}", pos=.6, shift left = .2em]
            & MTX
                \ar[d, "\sigma_{TX}"] \\
            {}
                \ar[dr, phantom, "\scriptstyle \eqref{eqn:unit_axioms}", pos =.75, shift left = .5em]
            & TTX
                \ar[d, "\mu^T_X"] \\
            {}
            & TX
        \end{tikzcd}
        \qquad \qquad
        \begin{tikzcd}[sep=small]
            MX
                \ar[r, "\eta^M_{MX}"]
                \ar[dr, equal, bend right]
                \ar[dr, phantom, "\scriptstyle \eqref{eqn:unit_axioms}", pos =.75, shift left = .3em]
            & MMX
                \ar[d, "\mu^M_X"] \\
            {}
            & MX
                \ar[d, "\sigma_X"] \\
            {}
            & TX
        \end{tikzcd}
    \end{center}
    Similarly, $\sigma\inv$ is a monad morphism.
    Hence we have a monad isomorphism $\sigma : M \isom T : \sigma \inv$ concluding the proof.
\end{proof}

\begin{proof}[Proof of \cref{lem:homom_by_nat_of_mu}]
    Take a function symbol $(\op:n) \in \Sigma$.
    We have:
    \begin{center}
        \begin{tikzcd}
             (TTX)^n
                \ar[r, "\ov{\op}"]
                \ar[d, "\mu_X^n"']
                \arrow[urrd, to path= { 
                    -- ([yshift=1.5ex]\tikztostart.north) 
                    |- ([yshift=1.5ex]\tikztotarget.north) node[near end,above,font=\scriptsize]{${\freeinterTX{\op}}$}
                    -- (\tikztotarget)}]
                \ar[dr, phantom, "\scriptstyle (*)"] 
            & TTTX
                \ar[d, "T\mu_X" description]
                \ar[r, "\mu_{TX}"]
                \ar[dr, phantom, "\scriptstyle (\mu \text{ assoc.})"]
            & TTX \vphantom{(TTX)^n}
                \ar[d, "\mu_X"] \\
            (TX)^n
                \ar[r, "\ov{\op}"']
                \arrow[drru, to path= { 
                    -- ([yshift=-1.5ex]\tikztostart.south) 
                    |- ([yshift=-1.5ex]\tikztotarget.south) node[near end, below, font=\scriptsize] {${\freeinterX{\op}}$}
                    -- (\tikztotarget)}]
            & TTX 
                \ar[r, "\mu_X"']
            & TX \vphantom{(TX)^n}
        \end{tikzcd}
        \qquad
        \begin{tikzcd}
            (TY)^n
                \ar[d, "(Tf)^n"']
                \ar[r, "\ov{\op}"]
                \arrow[urrd, to path= { 
                    -- ([yshift=2ex]\tikztostart.north) 
                    |- ([yshift=2ex]\tikztotarget.north) node[near end,above,font=\scriptsize]{$\freeinterY{\op}$}
                    -- (\tikztotarget)}]
                \ar[dr, phantom, "\scriptstyle (*)"]
            & TTY
                \ar[d, "TTf" description]
                \ar[r, "\mu_Y"]
                \ar[dr, phantom, "\scriptstyle (\mu \text{ nat.})"]
            & TY \vphantom{(TY)^n}
                \ar[d, "Tf"] \\
            (TX)^n 
                \ar[r, "\ov{\op}"']
                \arrow[drru, to path= { 
                    -- ([yshift=-2ex]\tikztostart.south) 
                    |- ([yshift=-2ex]\tikztotarget.south) node[near end, below, font=\scriptsize] {$\freeinterX{\op}$}
                    -- (\tikztotarget)}]
            & TTX
                \ar[r, "\mu_X"'] 
            & TX \vphantom{(TX)^n}
        \end{tikzcd}
    \end{center}
    Both squares marked $(*)$ commute as both of their paths send
    \[
        \left( \ov{t_1}[\frac{\ov{s_{i,1}}}{v_{i,1}}], \ldots, \ov{t_n}[\frac{\ov{s_{i,n}}}{v_{i,n}}] \right)
        \quad \mapsto \quad
        \ov{\op} \left(  \ov{t_1[\frac{s_{i,1}}{v_{i,1}}]} , \ldots , \ov{t_n[\frac{s_{i,n}}{v_{i,n}}]}  \right).
    \]
\end{proof}

\begin{proof}[Proof of \cref{lem:semifree_construction_is_a_functor}]
    The verification of the functoriality of $(-)\semifree$ is clear from the definition \eqref{eq:semifree-functor-action-bis}.
    What really needs to be proven is that given a monad morphism $\sigma : M \Rightarrow T$, then $\sigma\semifree : M\semifree \Rightarrow T\semifree$ is also a monad morphism.
    The naturality of $\sigma$ implies the naturality of $\sigma\semifree$: for $f: X \to Y$ we have
    \[
        \begin{tikzcd}
            MX
                \ar[r, "\sigma_X"]
                \ar[d, "Mf"']
            & TX
                \ar[d, "Nf"] \\
            MY
                \ar[r, "\sigma_Y"']
            & TY
        \end{tikzcd}
        \qquad \Leftrightarrow \qquad
        \begin{tikzcd}
            X+MX
                \ar[r, "\id_X+\sigma_X"]
                \ar[d, "f+Mf"']
            & X+TX
                \ar[d, "f+Nf"] \\
            Y+MY
                \ar[r, "\id_Y+\sigma_Y"']
            & Y+TY
        \end{tikzcd}
    \]
    The first axiom for $\sigma\semifree$ to be a monad morphism is a simple application of the coproduct formula $(f+g) \circ \inl^{X+Y} = \inl^{X'+Y'} \circ f$, for $f:X \to X'$ and $g : Y \to Y'$:
    \[
        \begin{tikzcd}[row sep=small]
            & X + MX
                \ar[dd, "\id_X + \sigma_X"] \\
            X
                \ar[ur, "\inl^{X+MX}"]
                \ar[dr, "\inl^{X+NX}"']
            & \\
            & X + NX
        \end{tikzcd}
    \]
    The second axiom simply requires to carefully look what happens at each component:
    \[
        \begin{tikzcd}[every label/.append style={scale=.75}, cells={nodes={scale=.75}}]
            (X+MX) + M(X+MX)
                \ar[r, "M\semifree \sigma_X\semifree = (\id_X + \sigma_X) + M(\id_X + \sigma_X)"{yshift=4pt}]
                \ar[d, "{   [\id_{X+MX}   ,  \inr^{X+MX} \circ \mu^M_X \circ M[\eta^M_X,\id_{MX}]   ]   }" description]
            & (X + TX) + M(X + TX)
                \ar[r, "\sigma_{T\semifree X}\semifree = \id_{X+TX} + \sigma_{X+TX}"{yshift=4pt}]
            & (X + TX) + T(X + TX)
                \ar[d, "{   [\id_{X+TX}   ,  \inr^{X+TX} \circ \mu^T_X \circ T[\eta^T_X,\id_{TX}]   ]   }" description] \\
            X + MX
                \ar[rr, "\id_X + \sigma_X"']
            &
            & X + TX
        \end{tikzcd}
    \]
    The elements on the first, respectively the second component, go through $\id_X$, respectively $\sigma_X$, with both the bottom and the top path.
    On the third component, we have
    \[
        \begin{tikzcd}[every label/.append style={scale=.95}, cells={nodes={scale=.85}}]
            M(X+MX)
                \ar[r, "M(\id_X + \sigma_X)"{yshift=4pt}]
                \ar[d, "{  M[\eta^M_X, \id_{MX}]  }"']
                \ar[drr, phantom, "(*)"{yshift=4pt}]
            & M(X+TX)
                \ar[r, "\sigma_{X+TX}"{yshift=4pt}]
            & T(X+TX)
                \ar[d, "{  T[\eta^T_X, \id_{TX}]  }"] \\
            MMX
                \ar[r, "M\sigma_X" description]
                \ar[d, "\mu^M_X"']
                \ar[drr, phantom, "\small (\sigma \text{ monad morphism, \eqref{eqn:monad_map_axiom2}})"]
            & MTX
                \ar[r, "\sigma_{TX}" description]
            & TTX
                \ar[d, "\mu^T_X"] \\
            MX
                \ar[rr, "\sigma_X"']
            && TX 
        \end{tikzcd}
    \]
    where $(*)$ is $M$ applied to
    \[
        \begin{tikzcd}
            X + MX
                \ar[r, "\id_X + \sigma X"]
                \ar[d, "{  [\eta^M_X,\id_{MX}]  }"']
            & X + TX
                \ar[d, "{  [\eta^T_X,\id_{TX}]  }"] \\
            MX
                \ar[r, "\sigma_X"']
            & TX
        \end{tikzcd}
    \]
    which itself commutes because on the first component it is the first axiom for $\sigma$ to be a monad morphism \eqref{eqn:monad_map_axiom1}, and on the second component it is simply $\sigma_X$ on both paths.
\end{proof}


\begin{proof}[Proof of \cref{lem:generalisation_eq_23}]
    Take $(X, \Parens{\cdot})$ that satisfies \eqref{eqn:idempotency} to \eqref{eqn:a_inside_disappear}, and a term $t \in \terms{\Sigma}{\variables}$.
    The proof goes by induction.
    When $t$ is of depth $1$, these are just equations $\eqref{eqn:a_in_front_disappear}$ and $\eqref{eqn:a_inside_disappear}$, which we already know are satisfied.
    For the induction step, take $t = \op(t_1,\ldots,t_m)$ and suppose that \eqref{eqn:generalisation2} and \eqref{eqn:generalisation3} hold for $t_1, \ldots, t_m$.
    Among the subterms, say that $p$ of them are of depth at least $1$, w.l.o.g.~the first $p$ ones $t_1,\ldots,t_p$.
    Thus, the $q \defeq m-p$ last subterms are variables $w_1,\ldots,w_q \in \set{v_1,\ldots,v_n}$.
    Then,
    \begin{align*}
        \Parens{\newa t(v_1,\ldots,v_n)}_\sigma &= \Parens{\newa ( \op(t_1,\ldots,t_p,w_1,\ldots,w_q)}_\sigma \\ 
        &= \Parens{\op(t_1,\ldots,t_p,w_1,\ldots,w_q) )}_\sigma \tag{by \eqref{eqn:a_in_front_disappear}} \\
        &= \Parens{t(v_1,\ldots,v_n)}_\sigma, 
    \end{align*}
    and
    \begin{align*}
        \Parens{&t (\newa v_1, \ldots, \newa v_n)}_\sigma \\
        &= \Parens{\op} \left(  \Parens{t_1(\newa v_1, \ldots, \newa v_n)}_\sigma, \ldots, \Parens{t_p(\newa v_1,\ldots, \newa v_n)}_\sigma, \Parens{\newa w_1}_\sigma, \ldots, \Parens{\newa w_q}_\sigma  \right) \\
        &= \Parens{\op} \left(  \Parens{t_1(v_1, \ldots, v_n)}_\sigma, \ldots, \Parens{t_p(v_1,\ldots,v_n)}_\sigma, \Parens{\newa w_1}_\sigma, \ldots, \Parens{\newa w_q}_\sigma  \right) \tag{I.H.} \\
        &= \Parens{\op (t_1,\ldots,t_p,\newa w_1, \ldots, \newa w_q)}_\sigma \\
        &= \Parens{\op (\newa t_1,\ldots, \newa t_p, \newa \newa w_1, \ldots, \newa \newa w_q)}_\sigma \tag{by \eqref{eqn:a_inside_disappear}} \\
        &= \Parens{\op (\newa t_1,\ldots, \newa t_p, \newa w_1, \ldots, \newa w_q)}_\sigma \tag{idempotency \eqref{eqn:idempotency}} \\
        &= \Parens{\op (t_1,\ldots, t_p, w_1, \ldots, w_q)}_\sigma \tag{by \eqref{eqn:a_inside_disappear}} \\
        &= \Parens{t(v_1,\ldots,v_n)}_\sigma.
    \end{align*}
\end{proof}

\begin{proof}[Proof of \cref{lem:deduction_tree_adaptation}]
    Assume $E_{} \vdash t(v_1,\ldots,v_n) = s(v_1,\ldots,v_n)$ holds, and that $T$ is a deduction tree that proves it.
    For clarity, here is a summary of the inferences rules of equational logic that we consider, where $n \in \N$, $(\op:n) \in \Sigma$, and $f$ is a substitution:
    \begin{center}
    \begin{tabular}{c c}
        {
            \AxiomC{$(s,t) \in E$}
            \RightLabel{\scriptsize Axiom$_{E}$}
            \UnaryInfC{$s=t$}
            \DisplayProof
        }
        &{
            \AxiomC{}
            \RightLabel{\scriptsize Reflexivity}
            \UnaryInfC{$t=t$}
            \DisplayProof
        } \\
        \rule{0pt}{10ex}   
        {
            \AxiomC{$s=t$}
            \RightLabel{\scriptsize Symmetry}
            \UnaryInfC{$t=s$}
            \DisplayProof
        }
        &{
            \AxiomC{$t_1=t_2$}
            \AxiomC{$t_2=t_3$}
            \RightLabel{\scriptsize Transitivity}
            \BinaryInfC{$t_1=t_3$}
            \DisplayProof
        } \\
        \rule{0pt}{10ex}   
        {
            \AxiomC{$s_1=t_1$}
            \AxiomC{$\ldots$}
            \AxiomC{$s_n=t_n$}
            \RightLabel{\scriptsize Congruence}
            \TrinaryInfC{$\op(s_1,\ldots,s_n) = \op(t_1,\ldots,t_n)$}
            \DisplayProof
        }
        &{
            \AxiomC{$s=t$}
            \RightLabel{\scriptsize Substitution}
            \UnaryInfC{$s[f]=t[f]$}
            \DisplayProof
        } 
    \end{tabular}
    \end{center}
    
    We show that a deduction tree for $E_{}\semifree \vdash t(\newa v_1,\ldots, \newa v_n) = s(\newa v_1,\ldots,\newa v_n)$ can be constructed by using $T[\newa v_i / v_i]$ as model.
    Notice first that the leaves of the deduction tree are axioms in $E_{}$ and can thus directly be adapted using equation $\eqref{eqn:a_inside_terms_disappear}$:
    given terms $t',t''$ with free variables among $v_1, \ldots,v_n$, then
    \begin{center}
        {
            \AxiomC{$(t',t'') \in E_{}$}
            \RightLabel{\scriptsize Ax.$_{E_{}}$}
            \UnaryInfC{$t' = t''$}
            \UnaryInfC{\vphantom{i}\ldots}
            \DisplayProof
        }
        $\stackrel{\eqref{eqn:a_inside_terms_disappear}}{\Longrightarrow}$
        {
            \AxiomC{$\big(t'(\newa v_1,\ldots,\newa v_n),t''(\newa v_1,\ldots,\newa v_n)\big) \in E_{}\semifree$}
            \RightLabel{\scriptsize Ax.$_{E_{}\semifree}$}
            \UnaryInfC{$t'(\newa v_1,\ldots,\newa v_n) = t''(\newa v_1,\ldots,\newa v_n)$}
            \UnaryInfC{\vphantom{i}\ldots}
            \DisplayProof
        }
    \end{center}
    Similarly, it is easy to see that each use of the reflexive, symmetry, transitive or congruence rules in our deduction tree on $E_{}$ directly translates to a version with ``$\newa$'' in front of each variable, thanks to the corresponding rule in $E_{}\semifree$.
    The complicated case to investigate is if we have an instance of the substitution rule in $T$:
    \begin{center} \label{eqn:substitution_rule_to_adapt} 
        {
            \AxiomC{$\cdots$}
            \UnaryInfC{$t'(v_1,\ldots,v_n) = t''(v_1,\ldots,v_n)$}
            \RightLabel{\scriptsize Subst.$_{E_{}}(f)$}
            \UnaryInfC{$t'(f v_1,\ldots, f v_n) = t''(f v_1,\ldots,f v_n)$}
            \UnaryInfC{$\vphantom{i}\ldots$}
            \DisplayProof
        }
    \end{center}
    for some substitution $f: \variables \to \terms{\Sigma}{\variables}$.
    This $f$ sends some variables, w.l.o.g.\ the first $p$ ones $v_1, \ldots, v_p$, to complex terms $f(v_i) = t^i(u^i_1, \ldots, u^i_{k_i})$, and sends the last $q \defeq n-p$ to variables $f(v_{p+j}) = w_j$.
    To put an ``$\newa$'' in front of every variable can be worded through the means of another substitution:
    \begin{align*}
        g(v_i) &\defeq t^i(\newa u^i_1, \ldots, \newa u^i_{k_i}), & &(1 \leq i \leq p) \\
        g(v_{p+j}) &\defeq \newa w_j, & & (1 \leq j \leq q).
    \end{align*}
    Therefore, adapting the substitution rule given above means using the premise $t'(\newa v_1,\ldots, \newa v_n) = t''(\newa v_1,\ldots,\newa v_n)$ and aiming, with the help of some deduction rules, to obtain $t'(g v_1,\ldots, g v_n) = t''(g v_1,\ldots,g v_n)$. Using two transitivity rules, we break down this into deducing three equations:
    \begin{enumerate}[(i)]
        \item $t'(gv_1,\ldots,gv_n) = t'(\newa gv_1,\ldots,\newa gv_n)$ \label{it:first}
        \item $t'(\newa gv_1,\ldots,\newa gv_n) = t''(\newa gv_1,\ldots,\newa gv_n)$ \label{it:second}
        \item $t''(\newa gv_1,\ldots,\newa gv_n) = t''(gv_1,\ldots,gv_n)$ \label{it:third}
    \end{enumerate}
    \cref{it:second} is where we find back our premise, and the rest of the tree above:
    \begin{prooftree}
        \AxiomC{$\cdots$}
        \UnaryInfC{$t'(\newa v_1,\ldots,\newa v_n) = t''(\newa v_1,\ldots,\newa v_n)$}
        \RightLabel{\scriptsize Subst.$_{E_{}\semifree}(g)$}
        \UnaryInfC{$t'(\newa gv_1,\ldots,\newa gv_n) = t''(\newa gv_1,\ldots,\newa gv_n)$}
    \end{prooftree}
    \cref{it:first}, respectively \ref{it:third}, can be easily proven by a case distinction on $t'$, respectively $t''$:
    \begin{itemize}
        \item
        If $t'$ is a variable $v_i$ among $v_1,\ldots,v_p$, then $gv_i$ is a complex term and \cref{lem:generalisation_eq_23} applies:
        \begin{prooftree}
            \AxiomC{}
            \RightLabel{\scriptsize \cref{lem:generalisation_eq_23}, \eqref{eqn:generalisation2}}
            \UnaryInfC{$gv_i = \newa gv_i$}
        \end{prooftree}
        
        \item
        If $t'$ is a variable $v_{p+j}$ among $v_{p+1},\ldots,v_n$, then idempotency proves it:
        \begin{prooftree}
            \AxiomC{}
            \RightLabel{\scriptsize Idemp. \eqref{eqn:idempotency}}
            \UnaryInfC{$\newa w_j = \newa \newa w_j$}
        \end{prooftree}
        
        \item
        If $t'$ is a complex term, then a substitution and an application of  \cref{lem:generalisation_eq_23} suffice:
        \begin{prooftree}
            \AxiomC{}
            \RightLabel{\scriptsize \cref{lem:generalisation_eq_23}, \eqref{eqn:generalisation3}}
            \UnaryInfC{$t'(v_1,\ldots,v_n) = t'(\newa v_1,\ldots,\newa v_n)$}
            \RightLabel{\scriptsize Subst.$_{E_{}\semifree}(g)$}
            \UnaryInfC{$t'(gv_1,\ldots,gv_n) = t'(\newa gv_1,\ldots,\newa gv_n)$}
        \end{prooftree}
    \end{itemize}
    A deduction tree of $E_{}\semifree \vdash t(\newa v_1,\ldots, \newa v_n) = s(\newa v_1,\ldots,\newa v_n)$ has therefore been constructed, using $T[\newa v_i / v_i]$ as model, concluding the lemma.
\end{proof}

\begin{proof}[Proof of \cref{lem:formula_a_opX}]
    We don't denote the indexes in the free monad and simply write $(T,\eta,\mu)$.
    Given any $z_1,\ldots,z_n \in TX$, we have
    \begin{align*}
        \alpha \circ \freeinterX{\op} (z_1,\ldots,z_n)
        &= \alpha \circ \freeinterX{\op} \circ (\mu_X \circ \eta_{TX})^n (z_1,\ldots,z_n)
        \tag{unit axiom \eqref{eqn:unit_axioms}} \\
        &= \alpha \circ \mu_X \circ \freeinterTX{\op} \circ \eta_{TX}^n (z_1,\ldots,z_n)
        \tag{Lem.~\ref{lem:homom_by_nat_of_mu}} \\
        &= \alpha \circ T\alpha \circ \freeinterTX{\op} \circ \eta_{TX}^n (z_1,\ldots,z_n)
        \tag{$\alpha$ assoc. \eqref{eqn:associativity_axiom2}} \\
        &= \alpha \circ \freeinterX{\op} \circ (T\alpha)^n \circ \eta_{TX}^n (z_1,\ldots,z_n)
        \tag{Lem.~\ref{lem:homom_by_nat_of_mu}} \\
        &= \alpha \circ \freeinterX{\op} \circ \eta_X^n \circ \alpha^n (z_1,\ldots,z_n)
        \tag{$\eta$ nat.}
    \end{align*}
\end{proof}

\subsection{Proofs of \cref{sec:forward_direction}}

\cref{lem:generalisation_of_def} is a direct consequence of \cref{lem:generalisation_of_def_part1} and \cref{lem:generalisation_of_def_part2} below.

\begin{lemma} \label{lem:generalisation_of_def_part1}
    For all $t \in \terms{\Sigma}{X}$ and all $\sigma : \variables \to X$:
    \[
        \alpha \circ \eta_X \circ \Parens{t}_\sigma = \alpha \circ \freeinterX{t}_{\eta_X \circ \sigma}.
    \]
\end{lemma}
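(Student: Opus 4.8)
The plan is to prove the identity by structural induction on the term $t$, the essential leverage coming from \cref{lem:formula_a_opX}. First I would set up the two-case induction. In the base case $t$ is a single variable $v$: on the left, $\Parens{v}_\sigma = \sigma(v)$ by \eqref{eqn:def_[[]]_variables}, so $\alpha \circ \eta_X \circ \Parens{v}_\sigma = \alpha(\eta_X(\sigma(v)))$; on the right, $\freeinterX{v}_{\eta_X \circ \sigma} = (\eta_X \circ \sigma)(v) = \eta_X(\sigma(v))$, again by \eqref{eqn:def_[[]]_variables}, so $\alpha \circ \freeinterX{v}_{\eta_X \circ \sigma} = \alpha(\eta_X(\sigma(v)))$ as well. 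The two sides already agree, with no hypotheses needed.

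For the inductive step I take $t = \op(t_1,\ldots,t_n)$ with $(\op:n)\in\Sigma$ and assume the identity holds for each $t_i$. Unfolding the left-hand side by \eqref{eqn:def_[[]]_operations} and then by the definition \eqref{eqn:def_op_interpretation} of $\Parens{\op}$ gives
\[
    \alpha \circ \eta_X \circ \Parens{t}_\sigma
    = \alpha \circ \eta_X \circ \alpha \circ \freeinterX{\op} \circ \eta_X^n \big(\Parens{t_1}_\sigma,\ldots,\Parens{t_n}_\sigma\big).
\]
The inner composite $\alpha \circ \eta_X \circ \alpha$ collapses to $\alpha$ by \eqref{eqn:a_etaX_a=a}, leaving $\alpha \circ \freeinterX{\op} \circ \eta_X^n(\Parens{t_1}_\sigma,\ldots,\Parens{t_n}_\sigma)$. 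Symmetrically, unfolding $\freeinterX{t}_{\eta_X \circ \sigma}$ by \eqref{eqn:def_[[]]_operations} rewrites the right-hand side as $\alpha \circ \freeinterX{\op}\big(\freeinterX{t_1}_{\eta_X \circ \sigma},\ldots,\freeinterX{t_n}_{\eta_X \circ \sigma}\big)$.

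It remains to reconcile these two expressions, and this is the crux of the argument. The inductive hypothesis equates the two sides only \emph{after} applying $\alpha$, namely $\alpha(\eta_X(\Parens{t_i}_\sigma)) = \alpha(\freeinterX{t_i}_{\eta_X \circ \sigma})$, so I cannot substitute argument-by-argument directly underneath $\freeinterX{\op}$. The resolution is \cref{lem:formula_a_opX}: equation \eqref{eqn:formula_a_opX} states $\alpha \circ \freeinterX{\op} = \alpha \circ \freeinterX{\op} \circ \eta_X^n \circ \alpha^n$, which says precisely that $\alpha \circ \freeinterX{\op}$ factors through $\alpha^n$ and hence depends on its arguments only through their $\alpha$-values. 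Applying \eqref{eqn:formula_a_opX} to both simplified expressions replaces each argument $c_i$ by $\eta_X(\alpha(c_i))$; the inductive hypothesis then makes the resulting argument tuples coincide termwise, so the left- and right-hand sides become equal, closing the induction.

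The main obstacle is exactly this final step: pushing the equality delivered by the inductive hypothesis through the operation interpretation $\freeinterX{\op}$, despite that hypothesis controlling only $\alpha$-values and not the underlying elements of $TX$. \cref{lem:formula_a_opX} is tailored for this purpose, and I expect it to be the single indispensable ingredient, with everything else reducing to routine unfoldings of the definitions of $\Parens{\cdot}_\sigma$ and $\freeinterX{\cdot}$ together with the idempotence-type identity \eqref{eqn:a_etaX_a=a}.
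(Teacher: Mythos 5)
Your proof is correct and takes essentially the same route as the paper: the paper's inductive step performs, inline and twice (once in each direction around the inductive hypothesis), exactly the chain of equalities (unit law, Lemma~\ref{lem:homom_by_nat_of_mu}, associativity of $\alpha$, naturality of $\eta$) that constitutes the proof of Lemma~\ref{lem:formula_a_opX}, which you instead invoke as a black box applied symmetrically to both sides. Your factoring through \eqref{eqn:formula_a_opX} is a cleaner organization of the identical mathematical content; the base case and the unfoldings via \eqref{eqn:def_[[]]_operations}, \eqref{eqn:def_op_interpretation} and \eqref{eqn:a_etaX_a=a} match the paper's exactly.
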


\begin{proof}
    By induction on $t$:
    \begin{itemize}
        \item 
        If $t$ is some variable $v \in \variables$:
        \[
            \alpha \circ \eta_X \circ \Parens{v}_\sigma
            \stackrel{\eqref{eqn:def_[[]]_variables}}{=} \alpha \circ \eta \circ \sigma (v)
            \stackrel{\eqref{eqn:def_[[]]_variables}}{=} \alpha \circ \freeinterX{v}_{\eta_X \circ \sigma}.
        \]
        
        \item
        If $t = \op(t_1, \ldots, t_n)$:
        \begin{align*}
            \alpha \circ \eta_X \circ \Parens{t}_\sigma 
            &= \alpha \circ \eta_X \circ \Parens{\op} ( \Parens{t_1}_\sigma, \ldots, \Parens{t_n}_\sigma ) 
            \tag{def \eqref{eqn:def_[[]]_operations}} \\
            &= \alpha \circ \eta_X \circ \alpha \circ \freeinterX{\op} \circ \eta_X^n (\ldots)
            \tag{def \eqref{eqn:def_op_interpretation}} \\
            &= \alpha \circ \freeinterX{\op} \circ \eta_X^n (\ldots)
            \tag{by \eqref{eqn:a_etaX_a=a}} \\
            &= \alpha \circ \freeinterX{\op} \circ (\mu_X \circ \eta_{TX})^n \circ \eta_X^n (\ldots)
            \tag{unit axiom \eqref{eqn:unit_axioms}} \\
            &= \alpha \circ \mu_X \circ \freeinterTX{\op} \circ (\eta_{TX} \circ \eta_X)^n (\ldots)
            \tag{Lem.~\ref{lem:homom_by_nat_of_mu}} \\
            &= \alpha \circ T\alpha \circ \freeinterTX{\op} \circ (\eta_{TX} \circ \eta_X)^n (\ldots)
            \tag{$\alpha$ assoc. \eqref{eqn:associativity_axiom2}} \\
            &= \alpha \circ \freeinterX{\op} \circ (T\alpha \circ \eta_{TX} \circ \eta_X)^n (\ldots)
            \tag{Lem.~\ref{lem:homom_by_nat_of_mu}} \\
            &= \alpha \circ \freeinterX{\op} \circ (\eta_X \circ \alpha \circ \eta_X)^n ( \Parens{t_1}_\sigma, \ldots, \Parens{t_n}_\sigma )
            \tag{$\eta$ nat.} \\
            &= \alpha \circ \freeinterX{\op} \circ (\eta_X \circ \alpha)^n ( \freeinterX{t_1}_{\eta_X \circ \sigma}, \ldots, \freeinterX{t_n}_{\eta_X \circ \sigma} )
            \tag{I.H.} \\
            &= \alpha \circ \freeinterX{\op} \circ (T\alpha \circ \eta_{TX} )^n (\ldots) 
            \tag{$\eta$ nat.} \\
            &= \alpha \circ T\alpha \circ \freeinterTX{\op} \circ \eta_{TX}^n (\ldots) 
            \tag{Lem.~\ref{lem:homom_by_nat_of_mu}} \\
            &= \alpha \circ \mu_X \circ \freeinterTX{\op} \circ \eta_{TX}^n (\ldots) \tag{$\alpha$ assoc. \eqref{eqn:associativity_axiom2}} \\
            &= \alpha \circ \freeinterX{\op} \circ (\mu_X \circ \eta_{TX})^n (\ldots) 
            \tag{Lem.~\ref{lem:homom_by_nat_of_mu}} \\
            &= \alpha \circ \freeinterX{\op} ( \freeinterX{t_1}_{\eta_X \circ \sigma}, \ldots, \freeinterX{t_n}_{\eta_X \circ \sigma} )
            \tag{unit axiom \eqref{eqn:unit_axioms}} \\
            &= \alpha \circ \freeinterX{t}_{\eta_X \circ \sigma}.
            \tag{def \eqref{eqn:def_[[]]_operations}}
        \end{align*}
    \end{itemize}
\end{proof}

\begin{lemma} \label{lem:generalisation_of_def_part2}
    For all $t \in \terms{\Sigma}{X}$ of depth at least $1$ and all $\sigma : \variables \to X$:
    \[
        \alpha \circ \eta_X \circ \Parens{t}_\sigma = \Parens{t}_\sigma
    \]
\end{lemma}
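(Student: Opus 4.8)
The plan is to exploit the fact that a term of depth at least $1$ carries a top-level operation symbol, so that the outermost layer of its interpretation is already of the form $\alpha \circ (\cdots)$. First I would observe that, since $t$ has depth at least $1$, it can be written as $t = \op(t_1,\ldots,t_n)$ for some $(\op:n)\in\Sigma$ and subterms $t_1,\ldots,t_n$. By the clause \eqref{eqn:def_[[]]_operations} defining the interpretation on operations, together with the defining formula \eqref{eqn:def_op_interpretation} for $\Parens{\op}$, this yields
\[
    \Parens{t}_\sigma = \Parens{\op}\bigl(\Parens{t_1}_\sigma,\ldots,\Parens{t_n}_\sigma\bigr) = \alpha \circ \freeinterX{\op} \circ (\eta_X)^n \bigl(\Parens{t_1}_\sigma,\ldots,\Parens{t_n}_\sigma\bigr).
\]

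The key step is then to precompose with $\alpha \circ \eta_X$. Since the right-hand side above already begins with $\alpha$, prepending $\alpha \circ \eta_X$ produces the pattern $\alpha \circ \eta_X \circ \alpha$ in front of $\freeinterX{\op}\circ(\eta_X)^n(\cdots)$. I would then invoke \eqref{eqn:a_etaX_a=a}, namely $\alpha\circ\eta_X\circ\alpha = \alpha$ (the idempotence of $\alpha\circ\eta_X$ obtained from the semialgebra axioms), to collapse this prefix back to a single $\alpha$. What is left is exactly $\alpha \circ \freeinterX{\op}\circ(\eta_X)^n(\Parens{t_1}_\sigma,\ldots,\Parens{t_n}_\sigma) = \Parens{t}_\sigma$, which is the required identity.

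Notably, no induction on the term structure is needed: only the top symbol of $t$ enters the argument, and the hypothesis that $t$ has depth at least $1$ is used solely to guarantee that such a symbol exists. For a bare variable one would have $\Parens{v}_\sigma = \sigma(v)$, on which $\alpha\circ\eta_X$ need not act as the identity, so the depth condition is genuinely necessary. The only point requiring care is the bookkeeping of composition versus pointwise application, since $\Parens{\op}$ is an $n$-ary map whose single output is what $\alpha\circ\eta_X$ acts upon; I expect this to be the one place where attention is needed, but it constitutes no real obstacle.
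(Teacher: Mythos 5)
Your proposal is correct and is essentially identical to the paper's own proof: both write $t = \op(t_1,\ldots,t_n)$, expand $\Parens{t}_\sigma$ via \eqref{eqn:def_[[]]_operations} and \eqref{eqn:def_op_interpretation} to expose a leading $\alpha$, collapse the resulting prefix $\alpha \circ \eta_X \circ \alpha$ to $\alpha$ using \eqref{eqn:a_etaX_a=a}, and note that no induction on $t$ is required.
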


\begin{proof}
    Take $t = \op(t_1,\ldots,t_n)$. 
    No induction is needed:
    \begin{align*}
        \alpha \circ \eta_X \circ \Parens{t}_\sigma
        &= \alpha \circ \eta_X \circ \Parens{\op} ( \Parens{t_1}_\sigma, \ldots, \Parens{t_n}_\sigma)
        \tag{def \eqref{eqn:def_[[]]_operations}} \\
        &= \alpha \circ \eta_X \circ \alpha \circ \freeinterX{\op} \circ \eta_X^n ( \Parens{t_1}_\sigma, \ldots, \Parens{t_n}_\sigma)
        \tag{def \eqref{eqn:def_op_interpretation}} \\
        &= \alpha \circ \freeinterX{\op} \circ \eta_X^n ( \Parens{t_1}_\sigma, \ldots, \Parens{t_n}_\sigma)
        \tag{by \eqref{eqn:a_etaX_a=a}} \\
        &= \Parens{\op} ( \Parens{t_1}_\sigma, \ldots, \Parens{t_n}_\sigma)
        \tag{def \eqref{eqn:def_op_interpretation}} \\
        &= \Parens{t}_\sigma.
        \tag{def \eqref{eqn:def_[[]]_operations}}
    \end{align*}
\end{proof}

\subsection{Proofs of \cref{sec:backward_direction}}

\begin{proof}[Proof of \cref{lem:well_definedness_of_a}]
    We need to check well-definedness of $\alpha$.
    To enhance the distinction between syntax and semantics, we use variables $v_1,\ldots,v_n$ and add the variable assignment $v_i \mapsto x_i$, i.e., we write \eqref{eqn:def_semialgebra_a} as
    \begin{equation} \label{eqn:def_semialgebra_a_old} 
        \ov{t(x_1,\ldots,x_n)} \mapsto \Parens{t(v_1,\ldots,v_n)}_{\Parens{\newa} \circ (v_i \mapsto x_i)},
    \end{equation}
    Take $c \in \freealgebra{\Sigma}{X}{E}$ and suppose that $t(x_1,\ldots,x_n)$ and $s(x_1,\ldots,x_n)$ are two representative of $c$.
    For both to be in same equivalence class means that they must be equivalent by the congruence relation generated by $E_{}$.
    Thus, $t(x_1,\ldots,x_n) = s(x_1,\ldots,x_n)$ can be deduced from $E_{}$ in equational logic.
    In other words, if we replace each $x_i$ by $v_i \in \variables$ to make it clear that we're working with syntax, we have a deduction tree of 
    \[
        E_{} \vdash t(v_1,\ldots,v_n) = s(v_1,\ldots,v_n).
    \]
    As seen in \cref{lem:deduction_tree_adaptation}, the tree can be modified to obtain a deduction tree of 
    \[
        E_{}\semifree \vdash t(\newa v_1,\ldots, \newa v_n) = s(\newa v_1,\ldots,\newa v_n)
    \]
    instead.
    Hence, by soundness, 
    \[
        E_{}\semifree \vDash t(\newa v_1,\ldots, \newa v_n) = s(\newa v_1,\ldots,\newa v_n).
    \]
    Since $(X, \Parens{\cdot})$ is a model of $E\semifree$,
    $\Parens{t (v_1,\ldots,v_n)}_{\Parens{\newa} \circ \sigma} = \Parens{s(v_1,\ldots,v_n)}_{\Parens{\newa} \circ \sigma}$
    holds for all $\sigma : \variables \to X$, and in particular for $v_i \mapsto x_i$.
    By \eqref{eqn:def_semialgebra_a_old}, $\alpha(\ov{t(x_1,\ldots,x_n)}) = \alpha(\ov{s(x_1,\ldots,x_n)})$.
    This concludes the proof that $\alpha$ is well-defined.
\end{proof}

\begin{proof}[Proof of \cref{lem:formula_a_opmx}]
    We reason as follows:
    \begin{align*}
        \alpha \circ \eta_X (x) 
        &= \alpha (\ov{x})
        \tag{def.~$\eta$} \\
        &= \Parens{x}_{\Parens{\newa}},
        \tag{def.~$\alpha$ \eqref{eqn:def_semialgebra_a}} \\
        &= \Parens{\newa} (x),
        \tag{by \eqref{eqn:def_[[]]_variables}}
    \end{align*}
    and for each $i=1,\ldots,n$ take a representative $t_i \in c_i$:
    \begin{align*} 
        \alpha \circ \freeinterX{\op} (c_1,\ldots,c_n) 
        &= \alpha \circ \freeinterX{\op} (\ov{t_1}, \ldots, \ov{t_n})  \\ 
        &= \alpha (\ov{\op(t_1,\ldots,t_n)})  \tag{def. $\freeinterX{\cdot}$} \\
        &= \Parens{ \op(t_1,\ldots,t_n) }_{\Parens{\newa}}
        \tag{def. $\alpha$ \eqref{eqn:def_semialgebra_a}} \\ 
        &= \Parens{\op} (\Parens{t_1}_{\Parens{\newa}} , \ldots , \Parens{t_n}_{\Parens{\newa}}) 
        \tag{by \eqref{eqn:def_[[]]_operations}} \\
        &= \Parens{\op} (\alpha (\ov{t_1}), \ldots, \alpha (\ov{t_m})) 
        \tag{def. $\alpha$ \eqref{eqn:def_semialgebra_a}} \\
        &= \Parens{\op} (\alpha c_1, \ldots, \alpha c_m).
    \end{align*}
\end{proof}

\subsection{Proofs of \cref{sec:ideal_monad}}

\begin{proof}[Proof of \cref{lem:analogue_semialgebra_algebra_for_ideal_monads}] 
    The proof goes similar to \cite[Theorem 3.4]{Petrisan_Sarkis_2021}.
    Any $T$-algebra $\alpha: X + T_0 X \to X$ satisfies \eqref{eqn:unit_axioms2}, i.e.,  \[\alpha \circ \eta_X = \alpha \circ \inl^{X + T_0 X} = \id_X.\]
    Hence, $\alpha$ is of the form $[id_X,a]$ for some $a : T_0 X \to X$.
    Then, the associativity axiom \eqref{eqn:associativity_axiom2}, $\alpha \circ \mu_X = \alpha \circ T \alpha$, is the following diagram:
    \begin{center}
        \begin{tikzcd}[column sep=4cm, row sep = large] 
            (X + T_0 X) + T_0 (X + T_0 X)
                \ar[r, "{\mu_X = [\Id_{TX}, \inr^{\Id+T_0} \circ m_{0,X}]}" {yshift=4pt}]
                \ar[d, "{[\Id_X,a] + T_0[\Id_X,a]}"']
            & X+T_0 X
                \ar[d, "{[\Id_X,a]}"] \\
            X + T_0 X
                \ar[r, "{[\Id_X,a]}"']
            & X
        \end{tikzcd}
    \end{center}
    Notice that in both paths, the first component is $[\id_X,a]$.
    Hence it suffices to only look at the second component:
    \[
        \begin{tikzcd}
            T_0 (X+T_0 X)
                \ar[r, "{ m_{0,X} }"]
                \ar[d, "{T_0[\Id_X,a]}"']
            & T_0 X
                \ar[r, "\inr^{\Id+T_0}"]
                \ar[dr, "a"']
                \ar[dl, phantom, "(*)" {xshift=5pt}]
            & X+T_0 X
                \ar[d, "{[\id_X, a]}"] \\
            T_0 X \ar[rr, "a"']
            & & X
        \end{tikzcd}
    \]
    The $(*)$ part of the diagram is the same as \eqref{eqn:analogue_semialgebra_algebra_for_ideal_monads}. It follows that $[id_X,a]$ is a $T$-algebra if and only if $a$ is functor $T_0$-algebra satisfying \eqref{eqn:analogue_semialgebra_algebra_for_ideal_monads}.
    
    We therefore have mappings $[\id_X,a] \mapsto a$ and $a \mapsto [\id_X,a]$ on objects.
    To have an isomorphism of categories, it suffices to show that for homomorphisms:
    \[
        \begin{tikzcd}
            X + T_0 X
                \ar[r, "f + T_0 f"]
                \ar[d, "{[\id_X,a]}"']
            & Y + T_0 Y
                \ar[d, "{[\id_X,b]}"] \\
            X
                \ar[r, "f"']
            & Y
        \end{tikzcd}
        \qquad \Leftrightarrow \qquad
        \begin{tikzcd}
            T_0 X
                \ar[r, "T_0 f"]
                \ar[d, "{a}"']
            & T_0 Y
                \ar[d, "{b}"] \\
            X
                \ar[r, "f"']
            & Y
        \end{tikzcd}
    \]
    Both directions clearly hold.
    We have an isomorphism of categories as desired, concluding the proof.
\end{proof}

\begin{proof}[Proof of \cref{lem:semifree_construction_not_pointed_endofunctor}]
    We prove that $(-)\semifree$ is not pointed.
    Suppose for contradiction that it is, i.e., there exists a natural transformation $\tau : \id_{\catmon{C}} \Rightarrow (-)\semifree$.
    Consider the final $\catset$-monad $\singleset$, which is defined as $\singleset(X)
    \defeq 1 = \singl$ for sets $X$ and $\singleset(f) \defeq \id_1 : 1 \to 1$ for functions $f$.
    We thus have a monad morphism ${\tau_{\singleset} : \singleset \Rightarrow \singleset\semifree }$. 
    On the empty set, $\tau_{\singleset, \emptyset} : 1 \to \emptyset + 1$ is necessarily $\inr^{\emptyset,1}$.
    The naturality of $\tau_{\singleset}$ implies that $\tau_{\singleset,X}$ is $\inr^{X,1}$ for all sets $X$:
    
    \noindent\begin{minipage}{.5\linewidth}
        \begin{center}
            \begin{tikzcd}
                \singleset(\emptyset) = 1
                    \ar[r, "\inr^{\emptyset,1}"]
                    \ar[d, "\singleset(\emptyset) = \id_1"']
                & \emptyset + 1
                    \ar[d, "\emptyset + \id_1 = \singleset\semifree(\emptyset)"] \\
                \singleset(X) = 1
                    \ar[r, "\tau_{\singleset, X}"']
                & X+1
            \end{tikzcd}
        \end{center}
    \end{minipage}
    \hfill
    \begin{minipage}{.5\linewidth}
        \begin{align*}
            \tau_{\singleset,X}(\ast) 
            &= \tau_{\singleset,X} (\id_1(\ast)) \\
            &= (\emptyset + \id_1)(\inr^{\emptyset,1}(\ast)) \\
            &= \inr^{X,1}(\ast) \\
        \end{align*}
    \end{minipage}
    But then, the first monad morphism axiom \eqref{eqn:monad_map_axiom1} fails, which is a contradiction:\\
    \hspace*{10em}
        \begin{tikzcd}[row sep=tiny]
            & \singleset(X) = 1
                \ar[dd, "\tau_{\singleset,X} = \inr^{X+1}"] \\
            X
                \ar[ur, "\eta_X = 1"]
                \ar[dr, "\eta\semifree_X = \inl^{X+1}"']
            & \\
            & \singleset \semifree (X) = X+1
        \end{tikzcd}
\end{proof}    
    
\begin{proof}[Proof of \cref{lem:copointed_endofunctor}]    
    We prove that $(-)\semifree : \catmon{C} \to \catmon{C}$ is a comonad when endowed with the following counit $\epsilon$ and comultiplication $\delta$:
    For $(M,\eta,\mu) \in \catmon{C}$ and $X \in \cat{C}$, let
    \begin{align*}
        \epsilon_{M,X} : X+MX &\xrightarrow{[\eta_X,\id_{MX}]} MX,  \\
        \delta_{M,X} : X + MX &\xrightarrow{\id_X + \inr^{X+MX}} X + (X + MX).
    \end{align*}
    We must check the following points:
    \begin{enumerate}
        \item
        Well-definedness of $\epsilon$: Given $M \in \catmon{C}$, we check that $\epsilon_M : M\semifree \Rightarrow M$ is a monad morphism, i.e.,
        
        \begin{enumerate}
            \item 
            Naturality of $\epsilon_M$: It follows from the naturality of $\eta$.
            Given $f: X \to Y$,
            \[
                \begin{tikzcd}
                    X
                        \ar[r, "\eta_X"]
                        \ar[d, "f"']
                        \ar[dr, phantom, "{\scriptstyle (\eta \text{ nat.})}"]
                    & MX
                        \ar[d, "Mf"] \\
                    Y
                        \ar[r, "\eta_Y"']
                    & MY
                \end{tikzcd}
                \qquad \Rightarrow \qquad
                \begin{tikzcd}[column sep=6em]
                    X+MX
                        \ar[r, "\epsilon_{M,X} = {[\eta_X,\id_{MX}]}"]
                        \ar[d, "f+Mf"']
                    & MX
                        \ar[d, "Mf"] \\
                    Y+MY
                        \ar[r, "\epsilon_{M,Y} = {[\eta_Y,\id_{MY}]}"']
                    & MY
                \end{tikzcd}
            \]
            
            \item
            The monad morphism axioms are satisfied:
            The first one is simply
            \[
                \begin{tikzcd}[row sep=small]
                    & X+MX
                        \ar[dd, "\epsilon_{M,X} = {[\eta_X,\id_{MX}]}"] \\
                    X
                        \ar[ur, "\eta\semifree_X =  \inl^{X+MX}"]
                        \ar[dr, "\eta_X"']
                    & \\
                    & MX
                \end{tikzcd}
            \]
            It clearly commutes.
            The second one is 
            \[
                \begin{tikzcd}[scale cd=.85]
                    (X+MX) + M(X+MX)
                        \ar[r, "{    [\eta_X,\id_X] + M[\eta_X,\id_X]    }"{yshift=4pt}]
                        \ar[d, "{  \mu\semifree_X = [\id_{X+MX}   ,  \inr^{X+MX} \circ \mu_X \circ M[\eta_X,\id_{MX}]   ]   }" description]
                    & MX + MMX
                        \ar[r, "{   [\eta_{MX},\id_{MMX}]   }"{yshift=4pt}]
                    & MMX
                        \ar[d, "\mu_X \phantom{\mu\semifree_X = [\id_{X+MX}   ,  \inr^{X+MX} }"] \\
                    X + MX
                        \ar[rr, "{   [\eta_X,\id_X]   }"']
                    &
                    & MX
                \end{tikzcd}
            \]
            To see that it commutes, let us look at each component.
            They are respectively:
            \begin{center}
                \begin{tikzcd}[column sep= small, scale cd=.75]
                    X
                        \ar[r, "\eta_X"]
                        \ar[d, equal]
                    & MX
                        \ar[r, "\eta_{MX}"]
                        \ar[dr, equal]
                    & MMX
                        \ar[d, "\mu_X"] \\
                    X
                        \ar[rr, "\eta_X"']
                        \ar[urr, phantom, "\small \eqref{eqn:unit_axioms}" {yshift=-4pt}, pos=.99]
                    && MX
                \end{tikzcd}
                \quad
                \begin{tikzcd}[scale cd=.75]
                    MX
                        \ar[r, "\eta_{MX}"]
                        \ar[d, equal]
                        \ar[dr, phantom, "\small\eqref{eqn:unit_axioms}"]
                    & MMX
                        \ar[d, "\mu_X"] \\
                    MX
                        \ar[r, equal]
                    & MX
                \end{tikzcd}
                \quad
                \begin{tikzcd}[scale cd=.75] 
                    M(X+MX)
                        \ar[r, "{   M[\eta_X,\id_{MX}]   }"]
                        \ar[d, "{   \mu_X  \circ  M[\eta_X,\id_{MX}]   }" description]
                    & MMX
                        \ar[d, "\mu_X"] \\
                    MX
                        \ar[r, equal]
                    & MX
                \end{tikzcd}
            \end{center}
        \end{enumerate}
        
        \item
        Naturality of $\epsilon$: Given a monad morphism $\sigma : M \Rightarrow T$, we instantiate with an object $X$ the diagram that we must check:
        \begin{center}
            \begin{tikzcd}[column sep = 6em]
                M\semifree X = X + MX
                    \ar[r, "\epsilon_{M,X} = {  [\eta^M_X , \id_{MX}]  }"]
                    \ar[d, "\sigma\semifree_X = \id_X + \sigma_X"']
                & MX
                    \ar[d, "\sigma_X"]
                \\
                T\semifree X = X + TX
                    \ar[r, "\epsilon_{T,X} = {  [\eta^T_X , \id_{TX}]  }"']
                & TX
            \end{tikzcd}
        \end{center}
        It commutes since the first component is the first monad morphism axiom \eqref{eqn:monad_map_axiom1} for $\sigma$, and the second component is $\sigma_X$ on both paths.
        
        \item
        Well-definedness of $\delta$: Given $M \in \catmon{C}$, we check that $\delta_M : M\semifree \Rightarrow M\nsemifree{2}$ is a monad morphism, i.e.,
        
        \begin{enumerate}
            \item
            Naturality of $\delta_M$: Given $f : X \to Y$,
            \begin{center}
                \begin{tikzcd}[column sep = 8em]
                    M\semifree X = X + MX
                        \ar[r, "\delta_{M,X} = \id_X + \inr^{X+MX}"]
                        \ar[d, "f + Mf"']
                    & X + (X + MX) = M \nsemifree{2} X
                        \ar[d, "f + (f + Mf)"]
                    \\
                    M\semifree Y = Y + MY
                        \ar[r, "\delta_{M,Y} = \id_Y + \inr^{Y+MY}"']
                    & Y + (Y + MY) = M \nsemifree{2} Y
                \end{tikzcd}
            \end{center}
            It commutes, since the first component is $f$ on both paths, and the second component is a known coproduct formula.
            
            \item
            The monad morphism axioms are satisfied:
            The first one is simply
            \[
                \begin{tikzcd}[row sep=small]
                    & M\semifree X = X + MX
                        \ar[dd, "\delta_{M,X} = \id_X + \inr^{X + MX}"] \\
                    X
                        \ar[ur, "\eta\semifree_X =  \inl^{X + MX}"]
                        \ar[dr, "\eta\nsemifree{2}_X = \inl^{X + M\semifree X}"']
                    & \\
                    & M \nsemifree{2} X = X + (X + MX)
                \end{tikzcd}
            \]
            It clearly commutes.
            The second one is 
            \[
                \begin{tikzcd}[scale cd=.85]
                    M \semifree M \semifree X
                        \ar[r, "{  M\semifree \delta_{M,X}  }"] 
                        \ar[d, "{  \mu\semifree_X  }"' ]
                    & M \semifree M \nsemifree{2} X
                        \ar[r, "{  \delta_{M \nsemifree{2} X}  }"] 
                    & M \nsemifree{2} M \nsemifree{2} X
                        \ar[d, "\mu \nsemifree{2} X"] \\
                    M \semifree X
                        \ar[rr, "{  \delta_{M,X}  }"']
                    &
                    & M \nsemifree{2} X
                \end{tikzcd}
            \]
            The diagram with every object and every morphism fully detailed does not fit.
            We therefore look directly at each component.
            The domain is $(X + MX) + M(X + MX)$.
            On the first component  $X + MX$, it is the identity on both path.
            On the second component $M(X + MX)$, we have the following, wich commute:
            \begin{center}
                \begin{tikzcd}[scale cd=.75, column sep=6em] 
                    M(X+MX)
                        \ar[r, "M(\id_X + \inr^{X + MX})"]
                        \ar[dd, "{   M[\eta_X,\id_{MX}]   }"']
                        \ar[dr, equal]
                    & M(X + (X + MX)) 
                        \ar[d, "{  M[\inl^{X + MX} , \id_{X + MX}]  }"]
                    \\
                    {}
                    & M(X + MX)
                        \ar[d, "{  M[\eta_X, \id_{MX}]  }"]
                    \\
                    MMX
                        \ar[d, "\mu_X"']
                    & MMX
                        \ar[d, "\mu_X"]
                    \\
                    MX
                        \ar[r, equal]
                    & MX
                \end{tikzcd}
            \end{center}
        \end{enumerate}
        
        \item
        Naturality of $\delta$: Given a monad morphism $\sigma : M \Rightarrow T$, we instantiate with a set $X$ the diagram that we must check:
        \begin{center}
            \begin{tikzcd}[column sep = 8em]
                M\semifree X = X + MX
                    \ar[r, "\id_X + \inr^{X+MX}"]
                    \ar[d, "\id_X + \sigma_X"']
                & X + (X + MX) = M\nsemifree{2} X
                    \ar[d, "\id_X + (\id_X + \sigma_X)"]
                \\
                T\semifree X = X + TX
                    \ar[r, "\id_X + \inr^{X+TX}"']
                & X + (X + TX) = T\nsemifree{2} X
            \end{tikzcd}
        \end{center}     
        It commutes, since the first component is $f$ on both paths, and the second component is a known coproduct formula.
        
        \item
        The comonad axioms: We must check the following
        
        \noindent\begin{tabularx}{\textwidth}{@{}XX@{}}
            \begin{equation*}
                \begin{tikzcd}[ampersand replacement=\&]
                    \& (-)\semifree
                        \ar[dl, equal]
                        \ar[dr, equal]
                        \ar[d, "\delta"]
                    \& \\
                    (-)\semifree
                    \& (-)\nsemifree{2}
                        \ar[l, "(-)\semifree \epsilon"]
                        \ar[r, "\epsilon (-)\semifree"']
                    \& (-)\semifree
                \end{tikzcd}
            \end{equation*} &
            \begin{equation*}
                \begin{tikzcd}[ampersand replacement=\&]
                    (-)\semifree
                        \ar[r, "\delta"]
                        \ar[d, "\delta"']
                    \& (-)\nsemifree{2}
                        \ar[d, "\delta (-)\semifree"]
                    \\
                    (-)\nsemifree{2}
                        \ar[r, "(-)\semifree \delta"']
                    \& (-)\nsemifree{3}
                \end{tikzcd}
            \end{equation*}
        \end{tabularx}
        We instantiate with a monad $M$ and a set $X$ and indeed,
        \begin{align*}
            \epsilon_{M,X}\semifree \circ \delta_{M,X} 
            &= \left(\id_X + [\eta_X, \id_{MX}]\right) \circ \left(  \id_X + \inr^{X + MX}  \right) \\
            &= \id_X + \id_{MX} \\
            &= \id_{M\semifree X},
            \intertext{and}
            \epsilon_{M\semifree,X} \circ \delta_{M,X}
            &= \left(  \id_X + \inr^{X + MX}  \right) \circ [\eta\semifree_X = \inl^{X+MX}, \id_{M\semifree X}] \\
            &= [\inl^{X+MX}, \inr^{X+MX}] \\
            &= \id_{M\semifree X}.
        \end{align*}
        Also,
        \begin{align*}
            \delta_{M,X} \semifree \circ \delta_{M,X} 
            &= \left( \id_X + (\id_X + \inr^{X+MX}) \right) \circ \left(  \id_X + \inr^{X+MX}  \right) \\
            &= \id_X + (\inr^{X+(X+MX)} \circ \inr^{X+MX}) \\
            &= \left(  \id_X + \inr^{X+(X+MX)}  \right) \circ \left(  \id_X + \inr^{X+MX}  \right) \\
            &= \delta_{M\semifree, X} \circ \delta_{M,X}.
        \end{align*}
    \end{enumerate}
    Everything has been checked, and the proof is thus complete.
\end{proof}

\end{document}